\DeclareFontFamily{U}{mathx}{}
\DeclareFontShape{U}{mathx}{m}{n}{<-> mathx10}{}
\DeclareSymbolFont{mathx}{U}{mathx}{m}{n}
\DeclareMathAccent{\widecheck}{0}{mathx}{"71}
\numberwithin{equation}{section}
\theoremstyle{plain} 
\newtheorem{theorem}{Theorem}[section]
\newtheorem{lemma}[theorem]{Lemma}
\newtheorem{proposition}[theorem]{Proposition}
\newtheorem{example}[theorem]{Example}
\newtheorem{remark}[theorem]{Remark} 
\theoremstyle{definition}
\newtheorem{definition}[theorem]{Definition}
\newtheorem{assumption}[theorem]{Assumption}
\renewcommand{\Re}{\mathrm{Re}\,}
\renewcommand{\Im}{\mathrm{Im}\,}
\newcommand{\E}{{\mathbb E }}
\newcommand{\R}{{\mathbb R }}
\newcommand{\N}{{\mathbb N}}
\newcommand{\Z}{{\mathbb Z}}
\newcommand{\C}{{\mathbb C}}
\newcommand{\m}{{\mathfrak m}}
\newcommand{\s}{{\mathfrak{s}}}
\newcommand{\ii}{\mathrm{i}}
\newcommand{\ee}{\mathrm{e}}
\newcommand{\dd}{\mathrm{d}}
\newcommand{\sgn}{\mathrm{sgn}}
\newcommand{\vertiii}[1]{{\left\vert\kern-0.3ex\left\vert\kern-0.3ex\left\vert #1 
		\right\vert\kern-0.3ex\right\vert\kern-0.3ex\right\vert}}
\newcommand{\nc}{\normalcolor}
\newcommand{\bs}{\boldsymbol}
\def\Tr{\mathrm{Tr}}
\def\<{\langle}
\def\>{\rangle}
\renewcommand{\mathbf}[1]{\bs{#1}}
\title[Out-of-time-ordered correlators for Wigner matrices]{Out-of-time-ordered correlators for Wigner matrices}
 \author[Cipolloni \and Erd\H{o}s \and Henheik]{}
\begin{document}
	\maketitle
	
	\vspace{0.25cm}

\renewcommand{\thefootnote}{\fnsymbol{footnote}}

\noindent
\mbox{}%
\hfill%
\begin{minipage}{0.21\textwidth}
	\centering
	{Giorgio Cipolloni}\footnotemark[1]\\
	\footnotesize{\textit{gc4233@princeton.edu}}
\end{minipage}
\hfill% 
\begin{minipage}{0.21\textwidth}
	\centering
	{L\'aszl\'o Erd\H{o}s}\footnotemark[2]\\
	\footnotesize{\textit{lerdos@ist.ac.at}}
\end{minipage}
\hfill%
\begin{minipage}{0.21\textwidth}
	\centering
	{Joscha Henheik}\footnotemark[2]\\
	\footnotesize{\textit{joscha.henheik@ist.ac.at}}
\end{minipage}
\hfill%
\mbox{}%
\footnotetext[1]{Princeton Center for Theoretical Science and Department of Mathematics, Princeton University, Princeton, NJ 08544, USA.}
\footnotetext[2]{Institute of Science and Technology Austria, Am Campus 1, 3400 Klosterneuburg, Austria.}
		\footnotetext[2]{Supported by ERC Advanced Grant ``RMTBeyond'' No.~101020331.}

\renewcommand*{\thefootnote}{\arabic{footnote}}
\vspace{0.25cm}

 \begin{abstract} {We consider the time evolution of the \emph{out-of-time-ordered correlator} (OTOC) of two general 
 observables $A$ and $B$ in a mean field chaotic 
 quantum system described by a random Wigner matrix as its Hamiltonian.
  We rigorously identify three time regimes separated
 by the physically relevant \emph{scrambling} and \emph{relaxation} times. 
 The main feature of our analysis is that   we express the error
 terms in the optimal Schatten (tracial) norms of the observables, allowing us to track the exact dependence  
 of the errors  on their rank. In particular,  for significantly overlapping observables with low rank
 the OTOC is shown to exhibit a significant local maximum at the scrambling time,
  a feature that may not have been noticed in the physics literature before. Our main tool
  is a novel 
 multi-resolvent local law  with Schatten norms that unifies and improves previous local 
 laws involving  either the much cruder operator norm (cf.~\cite{multiG})  or the Hilbert-Schmidt norm (cf.~\cite{A2}).
 }

\end{abstract}
\medskip
{\bf Key words:} Relaxation, Scrambling time, Multi-resolvent local law, Schatten norm.

{\bf 2020 Mathematics Subject Classification}: 60B20,  82C10.

\section{Introduction} \label{sec:intro}

A basic feature of  a strongly interacting quantum system is that local  initial states become 
non-local along the unitary time evolution,  in particular they become increasingly 
harder to distinguish by local observables. The simplest way to detect this chaotic behavior
is to monitor the overlap $\langle A(t) B \rangle$ of the Heisenberg
time evolution $A(t): = \ee^{-\ii tH}A e^{\ii tH}$ of an observable $A$ with another static observable $B$,
where $H$ is the Hamiltonian and  $A, B$ are Hermitian operators.
Here $\langle M\rangle: = \frac{1}{N} \Tr M$ denotes the normalized trace of 
an $N\times N$ matrix, $N$ is the dimension of the quantum state space.
As time goes on, the overlap between two local observables converges to its stationary value
in a process called
{\it quantum thermalisation}. Since this stationary value is 
practically\footnote{In a closed quantum system with finitely many degrees of freedom the initial state
	is never fully lost as the stationary value still slightly depends on the original overlap $\langle A B\rangle$, but it
	is suppressed by $N$; e.g. it follows from \eqref{ABGUE} that
	$$
	\lim_{t\to\infty} \E_{GUE} \langle A(t) B\rangle  =\Big(1-\frac{1}{N+1}\Big)
	\langle A \rangle\langle B \rangle + \frac{1}{N+1}  \langle A  B \rangle.
	$$
}  factorized, 
$\langle A\rangle\langle B \rangle$,  the original observable $A$ becomes hardly detectable
from its time evolution  by local observables $B$.

A more refined measure of the dynamically evolving quantum chaos is the {\it out-of-time-ordered correlator (OTOC)}
of two observables, 
defined as\footnote{In the physics literature, the OTOC is usually defined without the factor $1/2$. We chose it, however, for convenience.}
\begin{equation}\label{otoc}
	\mathcal{C}_{A,B}(t) := \frac{1}{2}\big\langle \big|[A(t), B]\big|^2 \big\rangle
\end{equation}
measuring the evolution of the commutator of $A(t)$ and $B$. Starting with 
commuting observables, $[A,B]=0$, this quantity initially grows, expressing how the time evolution $A(t)$ of a local observable spreads (or {\it scrambles})  to non-local degrees of freedom
expressed by $B$. 
The moment when this growth stops is called\footnote{We remark that some  papers use slightly 
	different definition, here we  follow the terminology of \cite{2209.07965},
	\cite[Section 3.3]{1706.05400}. } the {\it scrambling time} $t_*$.  Scrambling  is closely related to thermalisation,
but  it typically involves non-local observables $B$.
Thus in 
a quantum system with {\it local}  interactions, the thermalisation time is smaller than $t_*$ and it is independent
of the system size, while the scrambling takes place on a longer time scale until local information is
shared with all degrees of freedom in the system. Beyond the scrambling time,  the OTOC settles to constant value
at a larger time scale called  the {\it relaxation time},  and then it remains essentially unchanged.

The fine distinction between thermalisation and scrambling  became very popular in  physics about $15$ years ago
motivated by  
the fundamental papers by Hayden and Preskill \cite{0708.4025} and Sekino and Susskind \cite{0808.2096} 
related to the black hole information paradox. 
The concept of  the OTOC in quantum chaos research was introduced 
in Kitaev's lectures \cite{Kitaev} on the connection between the 
{\it Sachdev-Ye-Kitaev (SYK) model} and black holes. 
Owing to these fascinating connections, the physics literature on OTOC in various interacting quantum systems 
has become enormous; we refer the reader to the reviews \cite{2202.07060, 2209.07965} and  extensive references therein. 
In contrast, OTOC has basically not been considered in  the mathematical literature apart from \cite{2312.01736}
that studies  a very different model than our current random matrix setup.

Besides the OTOC, quantum chaos has several other signatures: the conventional one is the spectral statistics of the Hamiltonian.
Following E.~Wigner's groundbreaking observation,  in a sufficiently chaotic quantum system
the local eigenvalue statistics are given by the 
universal Wigner-Dyson distribution that depends
only on the basic symmetries of the system. In the physics literature the spectral statistics are often described by the
{\it spectral form factor (SFF)}, or {\it two point spectral correlator}, 
defined as $r_2(t): =  \E |\langle \ee^{\ii tH}\rangle|^2$, where $\E$ indicates a
statistical averaging over an ensemble of Hamiltonians. It is well known that the SFF tends to become universal
for large times\footnote{See the celebrated  {\it slope-dip-ramp-plateau} picture,  e.g.~in \cite{LevandierLombJP}.},
while it still reflects  properties of the actual quantum system (especially its density of states)
for shorter times. A good physics summary is found in~\cite{1706.05400},
while  a recent mathematical analysis of the SFF for general Wigner matrices was
given in \cite{2109.06712}; more precise formulas are available for exactly solvable  ensembles
\cite{2006.00668, 2007.07473, 2206.14950}.

The OTOC is a more refined description of quantum chaos than the SFF, as it also
involves observables. In particular, the SFF misses important features like the sensitivity of chaos to the locality of the observables
or {\it early time chaos}, i.e.~the exponential growth of the OTOC for certain strongly interacting systems like SYK 
(called {\it fast scramblers} \cite{0808.2096}) versus the polynomial growth for {\it slow scramblers} like certain weakly chaotic systems
(see, e.g.~\cite[Section II]{1906.07706}  and references therein).
Note that  the SFF can be recovered from the OTOC by averaging, either over the observables or 
over the unitary group in case of  unitarily invariant Hamiltonians, see \cite{1706.05400}.
For example, if $H$ is a GUE random matrix, then \cite[Eq. (57)-(58)]{1706.05400})
\begin{equation}\label{ABGUE}
	\E_{GUE} \langle A (t)B\rangle  = \langle A \rangle\langle B \rangle + \frac{N^2r_2(t)-1}{N^2-1} 
	\big[\langle A  B \rangle - \langle A \rangle\langle B \rangle\big],
\end{equation}
with 
$$
r_2(t) =  \Big(\frac{J_1(2t)}{t}\Big)^2+\frac{1}{N} - \frac{1}{N} \Big( 1- \frac{t}{2N}\Big)\mathbf {1}(t\le 2N),
$$
where $J_1$ is the Bessel function of the first kind of order one. Thus $r_2(t)$ can be expressed from $\langle A (t)B\rangle$. 
A similar relation holds between the OTOC and the four point spectral correlator.

\medskip

The main goal of the current paper is to give a comprehensive mathematical analysis of the OTOC with general observables $A, B$,
when the Hamiltonian $H$ is a  Wigner matrix. Wigner matrices represent the Hamiltonian of the most chaotic  quantum systems
with matrix elements being independent, identically distributed (i.i.d.) random variables. 
In the physics literature, random matrix theory (RMT) is often used as a test case to see to what extent this relatively simple
model mimics the physics of more complicated systems  such as
interacting many-body models (like SYK) or even  models with nontrivial spatial structure
(like spin chains). Spectral statistics are remarkably robust, especially the universality of the  large time (so-called
{\it plateau}) regime 
of the SFF has proved to be ubiquitous in many different chaotic quantum systems, in accordance with 
the celebrated Bohigas-Giannoni-Schmit 
conjecture \cite{BoGiSch}.
The OTOC is a more delicate quantity and, 
admittedly, its several interesting features that appear 
in more realistic strongly coupled systems are not captured
by RMT.  For example, the early time exponential growth of the OTOC is not present in RMT
and  there is no qualitative difference between the thermalisation and scrambling times  since $H$ is mean field
(see \cite[Section 3.3]{1706.05400} for a detailed analysis). The difference between chaotic and 
integrable systems or the effect of their possible coexistence on the OTOC (studied e.g.~in \cite{1906.07706}) are also not
visible in RMT since Wigner matrices are fully chaotic.   Nevertheless, RMT becomes a good description beyond 
the scrambling time as claimed in \cite[Section 3.3]{1706.05400} and demonstrated in \cite[Section~6]{1706.05400}.
The calculations are performed under the unitary invariance  assumption and without controlling  the error terms.

In the main Theorem~\ref{thm:OTOC} of the current work, 
using very different methods,   we rigorously describe the behaviour of $\mathcal{C}_{A,B}(t)$
up to very long times for general Wigner matrices (no unitary invariance assumed). 
We mimic the  locality of the observables by considering matrices $A, B$ that are far from being full rank and track 
this effect throughout all error terms by using tracial norms that are sensitive to the rank.
We distinguish three time regimes  (see Figure \ref{fig:otoc} and Section~\ref{rmk:interpret}); 
for short times (before the scrambling time $t_*$) we find a quadratic growth in $t$; for intermediate
times we find that $\mathcal{C}_{A,B}(t)$ heavily depends on the ranks of $A, B$ and their overlap $AB$.
In particular, we detect  a remarkable high peak of $\mathcal{C}_{A,B}(t)$ when  the ranks of $A$ and $B$ are 
small but their overlap $\langle AB \rangle$ is still relatively large. 
To our knowledge, this observation may be new even in
the physics literature. We then identify the \emph{relaxation time}, $t_{**}$, when  the
OTOC {\it saturates}, i.e. it
becomes essentially constant (equal its thermal limiting value)
with small oscillations. As expected, in the last regime, after  $t_{**}$, our model behaves universally; a qualitatively similar behaviour 
has been demonstrated for several more complicated systems in the physics literature both theoretically and numerically, see e.g.
\cite{1706.05400, 1906.07706, 2202.07060,  2202.09443, 2209.07965}. 
While for technical reasons we cannot consider infinite times, our analysis is valid for sufficiently long times to see
all physically relevant features. For brevity 
we carry out the proofs  at infinite temperature, but
our methods can easily be extended to any finite temperature and we will give the corresponding formulas
(see Section~\ref{rmk:finite} below).

\begin{figure}[h]
	\centering
	\includegraphics[scale=0.65]{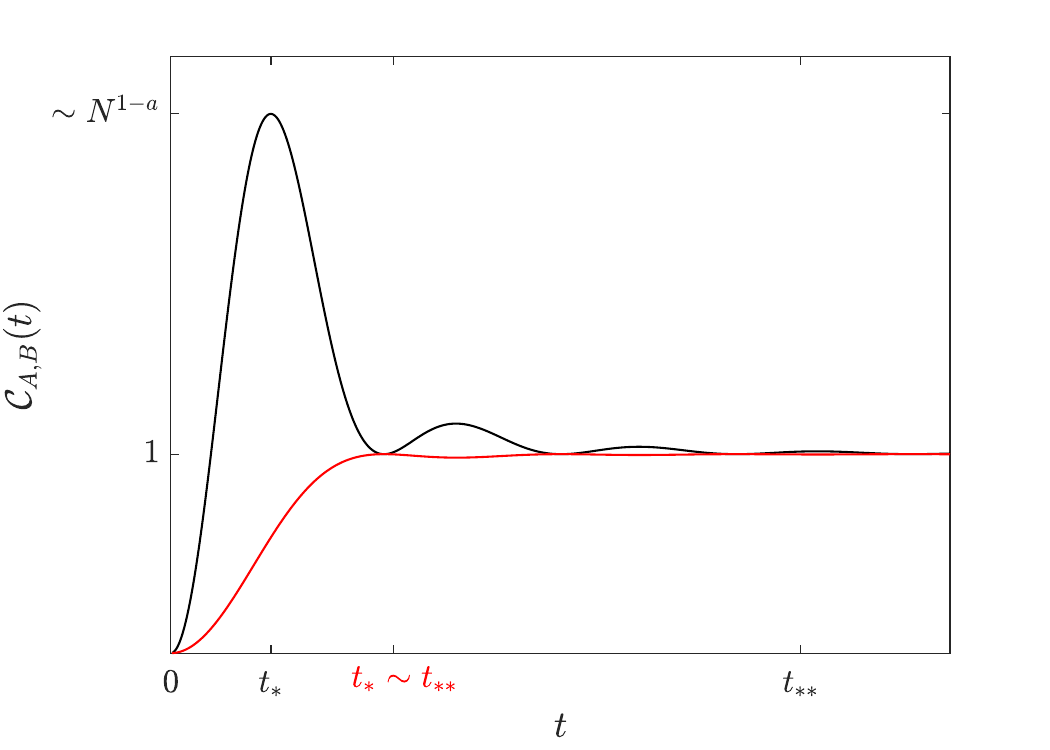}
	\caption{The two curves  show the behaviour of $\mathcal{C}_{A,B}(t)$ in two different scenarios for two
	 commuting traceless observables $A,B$ normalized to $\langle A^2 \rangle = \langle B^2 \rangle  =1$. 
	The black curve represents the case $A = B$ with $\mathrm{rank}(A) = N^{a}$, $a \in [0,1]$
		 where the OTOC exhibits a large peak of size $N^{1-a}$ around the scrambling time $t_* \sim 1$. Afterwards, it decays to its thermal limiting value (normalized to one) around the relaxation time $t_{**} \sim N^{\frac{1-a}{3}}$.  
	The red curve represents the case where $AB = 0$. Here, both  $t_*$ and  $t_{**}$ are of order one,  independent of the ranks of $A$ and $B$. For more details
	see Section~\ref{rmk:interpret}.}
	\label{fig:otoc}
\end{figure}

From  the mathematical point of view, our work is the closest to 
\cite{thermalisation, multiG},  where the deterministic leading term
of traces of products
of observables  at different times, $\langle A_1(t_1) A_2(t_2) A_3(t_3) ....\rangle$, were computed
(see also \cite{multipointCLT, multipointCLTcomb}, where even the Gaussian fluctuations of such chains  were proven).
Clearly the OTOC is a special case of (the difference of two) such chains. The main 
novelty is that now we use only Schatten (tracial) norms\footnote{The (normalized) $p$-th
Schatten norm of a matrix $A \in \C^{N \times N}$ is defined as $\langle |A|^p \rangle^{1/p}$ for $p \in [1, \infty)$,  where $|A|:=(AA^*)^{1/2}$.} of the observables in the estimates, while
\cite{thermalisation, multipointCLT, multipointCLTcomb}  used the much cruder operator norm. 
In particular, we can extend the time scale for the validity of our description.
More importantly, note that the interesting features of the OTOC are manifested for
small rank observables for which the operator norm is a major overestimate and conceptually is an overkill.

The main tool is a  concentration result, called {\it multi-resolvent local law}, for alternating products
 of resolvents
 of random matrices 
and deterministic matrices. More precisely, setting  $G_i:=G(z_i)$ and considering 
deterministic matrices $A_i$,  the main object of interest is
\begin{equation}
	\label{eq:mainobj}
	G_1A_1G_2A_2G_3\dots A_{k-1} G_k
\end{equation}
for some fixed $k$. We will show  
that \eqref{eq:mainobj} concentrates around a deterministic object  and gives an upper bound
on the fluctuation.
The  interesting regime is the local one, i.e. when $|\Im z_i|\ll 1$.
 Resolvents can then be converted to unitary time evolution $\ee^{\ii tH}$ by standard contour integration.
 
Local laws in general assert that resolvents $G(z)$    tend to become
deterministic (with high probability) in the large $N$ limit even if the spectral parameter $z$ is very close to the real axis
(typically  for any $|\Im z|\gg N^{-1}$ in the bulk spectrum). For example, typical {\it single resolvent local laws} for Wigner matrices 
assert that, for any fixed $\xi>0$,\footnote{Traditionally \cite{EYY2012, KnowYin, BEKYY}, 
	local laws did not consider arbitrary test matrix $A$, but only $A=I$ or special rank one projections $A = \bm y \bm x^*$
	leading to the \emph{isotropic local law} in \eqref{eq:singlegllaw}. General $A$ was included later, e.g.~in~\cite{slowcorr}.}
\begin{equation}
\label{eq:singlegllaw}
\big|\langle (G(z)-m(z))A\rangle\big|\le \frac{N^\xi \lVert A\rVert}{N\eta}\,, \qquad \big|\langle {\bm x}, (G(z)-m(z)) {\bm y}\rangle\big|\le \frac{N^\xi \lVert {\bm x}\rVert\lVert {\bm y}\rVert}{\sqrt{N\eta}}\,, \qquad \text{with} \qquad \eta := |\Im z|\,,
\end{equation}
for a deterministic matrix $A\in\C^{N\times N}$ and deterministic vectors ${\bm x},{\bm y}\in \C^N$ with very high probability
as $N$ becomes large. Here, $m(z)$ is the Stieltjes-transform of the Wigner semicircle law:
\begin{equation}
	\label{eq:semicirc}
	m(z)=m_{\mathrm{sc}}(z):=
	\int_\R\frac{1}{(x-z)}\rho_{\mathrm{sc}}(x)\,\dd x, \qquad\quad \rho_{\mathrm{sc}}(x):=\frac{1}{2\pi}\sqrt{[4-x^2]_+}.
\end{equation}
 However, the deterministic limit of a multi-resolvent chain~\eqref{eq:mainobj}
 is not simply $m(z_1) m(z_2) A_1 A_2\dots$, i.e, one cannot mechanically 
replace each $G$ by a scalar $m$, the actual formula is much more complicated, see~\eqref{eq:Mdef} below.
As to the  accuracy of this deterministic approximation,
besides  $N$ and the imaginary part of the spectral parameter, $\eta=\Im z$, the error term 
crucially depends on the appropriate norms of $A_i$
as well as on the distinction whether $A_i$ is traceless or not. 
The fact that traceless observables substantially reduce both the size of the deterministic limit of~\eqref{eq:mainobj} and of its fluctuation
has first been observed and
exploited in \cite{ETHpaper}, see also \cite{multiG} for a comprehensive analysis of arbitrary long chains. The results in
\cite{multiG}  were optimal both in $N$ and $\eta$, but
they all  used  the simplest operator norm of $A_i$'s in the error term which is far from being optimal
for low rank observables.

 Concerning the more accurate norms,
only very recent papers \cite{A2, edgeETH} started deviating from the operator norm in the error terms. 
The main purpose of these papers was to prove the Eigenstate Thermalisation Hypothesis (ETH) for random matrices
(originally posed by Deutsch in \cite{deutsch1991}) in its most optimal form, including  low rank observables. Moreover, 
the key point in \cite{edgeETH} was to obtain ETH also uniformly in the spectrum, including the critical edge regime. This 
required  to focus on a local law for $\langle \Im GA \Im GA\rangle$ and to extract the smallness of order $\rho^2$ 
at the spectral edge owing to the {\it imaginary part} of the resolvents  (here $\rho =\pi^{-1} |\Im m_{\mathrm sc}|$ is the local density of states).
However,  \cite{A2, edgeETH} \emph{exclusively} used the Hilbert-Schmidt (HS) norm, $\langle |A|^2\rangle^{1/2}$,  which caused suboptimal $(N\eta)$-dependence.
For the purpose of  \cite{A2, edgeETH}, this suboptimality in $(N\eta)$ was irrelevant since the proof of ETH relies on local laws in  the $\eta$--regime when
$N\eta$ is practically order one.

In contrast to \cite{A2, edgeETH}, 
for studying the OTOC at shorter times, we need local laws in the regime where $N\eta$ is large
(since  $\eta \sim 1/t$ dictated by the contour integration and $t\ll N$).
In the current paper we prove local laws that are optimal both in $N$ and $\eta$ {\it and} use the 
optimal  Schatten norms of the observables. 
Especially, this allows for a  more accurate description of the
 OTOC in the physically relevant regime of small rank observables. We, however, do not need to
 track the $\rho$-dependence or pay attention to the imaginary parts.  Therefore, 
 the current work and \cite{edgeETH} are complementary;  they effectively handle very different
 aspects of the local law.  While the fundamental idea of these two works is similar, both use
 the {\it Zigzag strategy} described in Section~\ref{sec:proof}, the actual proofs are quite different.
 The main focus in \cite{edgeETH} was to design and handle contour integral representations
 that allowed us to reduce every estimate to resolvent chains involving only $\Im G$'s. 
 In the current paper $\Im G$ plays no role, but we need to track the precise Schatten norms very carefully.  %  \nc

 To illustrate the strength of our new result in comparison with the previous bounds,  we present the following three estimates 
 for the simplest case $k=2$, with $A_1=A_2=A$, $z_1=z_2$ in the bulk, 
  and ignoring $N^\xi$-factors for some arbitrarily small $\xi > 0$: 
 \begin{equation}\label{3est}
 \Big| \langle GAGA \rangle - m^2 \langle A^2\rangle \Big| \lesssim 
 \begin{cases}
 \dfrac{\|A\|^2}{N\eta}  & \mbox{from \cite[Theorem~2.5]{multiG};}\\
 \sqrt{N\eta} \dfrac{\langle |A|^2 \rangle}{N\eta} & \mbox{from \cite[Theorem~2.2]{A2};  } \\
\dfrac{\langle |A|^2 \rangle}{N\eta}  + \dfrac{ \langle |A|^4\rangle^{1/2}}{N\sqrt{\eta}} & \mbox{from Theorem~\ref{thm:main}.} 
\end{cases}
 \end{equation}
 Note that our current result in the last line of~\eqref{3est}  implies both previous results since
  $\langle |A|^4\rangle^{1/2}\le \sqrt{N} \langle |A|^2\rangle$ and  $\langle |A|^p\rangle^{1/p}\le \|A\|$. While the former bound saturates for low rank observables, the latter saturates for high rank ones. Therefore, our new result with Schatten norms optimally
 interpolates between these two, at least in the bulk regime. 
 
\subsection*{Notations}

By $\lceil x \rceil := \min\{ m \in \Z \colon m \ge x \}$ and $\lfloor x \rfloor := \max\{ m \in \Z \colon m \le x \}$
we denote the upper and lower integer part of a real number $x \in \R$.
For $k \in \N$ we set $[k] := \{1, ... , k\}$, and $\langle A \rangle := d^{-1} \mathrm{Tr}(A)$, $d \in \N$,
for the normalised trace of a $d \times d$-matrix $A$, while $\mathrm{rk} \, A \equiv \mathrm{rank} \, A$ denotes its rank.
For positive quantities $f, g$ we write $f \lesssim g$ resp.~$f \gtrsim g$ and mean that $f \le C g$ resp.~$f \ge c g$ for some $N$-independent constants $c, C > 0$ that may depend only on the basic control parameters $C_p$, see
\eqref{cp}   
in Assumption~\ref{ass:entries} below. Moreover, we will also write $f \sim g$ in case that $f \lesssim g$ and $g \lesssim f$.

We denote vectors by bold-faced lower case Roman letters $\boldsymbol{x}, \boldsymbol{y} \in \C^{N}$, for some $N \in \N$, and define 
\begin{equation*}
	\langle \boldsymbol{x}, \boldsymbol{y} \rangle := \sum_i \bar{x}_i y_i\,, 
	\qquad A_{\boldsymbol{x} \boldsymbol{y}} := \langle \boldsymbol{x}, A \boldsymbol{y} \rangle\,.
\end{equation*}
Matrix entries are indexed by lower case Roman letters $a, b, c , ... ,i,j,k,... $ from the beginning or the middle of the alphabet and unrestricted sums over those are always understood to be over $\{ 1 , ... , N\}$. 

We will use the concept  \emph{'with very high probability'},  meaning that for any fixed $D > 0$, the probability of an $N$-dependent event is bigger than $1 - N^{-D}$ for all $N \ge N_0(D)$. Also, we will use the convention that $\xi > 0$ denotes an arbitrarily small positive exponent, independent of $N$.
Moreover, we introduce the common notion of \emph{stochastic domination} (see, e.g., \cite{semicirclegeneral}): For two families
\begin{equation*}
	X = \left(X^{(N)}(u) \mid N \in \N, u \in U^{(N)}\right) \quad \text{and} \quad Y = \left(Y^{(N)}(u) \mid N \in \N, u \in U^{(N)}\right)
\end{equation*}
of non-negative random variables indexed by $N$, and possibly an additional parameter $u$
from a parameter space $U^{(N)}$, we say that $X$ is stochastically dominated by $Y$, if for all $\epsilon, D >0$ we have 
\begin{equation*}
	\sup_{u \in U^{(N)}} \mathbf{P} \left[X^{(N)}(u) > N^\epsilon Y^{(N)}(u)\right] \le N^{-D}
\end{equation*}
for large enough $N \ge N_0(\epsilon, D)$. In this case we write $X \prec Y$. If for some complex family of random variables we have $\vert X \vert \prec Y$, we also write $X = O_\prec(Y)$. 

\section{Main results} \label{sec:mainres}

We consider $N\times N$ Wigner matrices $W$, i.e.~$W$ is a random real symmetric or complex Hermitian  matrix $W=W^*$ with independent entries (up to the Hermitian symmetry) and with
single entry distributions $w_{aa}\stackrel{\dd}{=}N^{-1/2}\chi_{\dd}$, and $w_{ab}\stackrel{\dd}{=}N^{-1/2}\chi_{\mathrm{od}}$, for $a>b$. The random variables $\chi_{\mathrm{d}},\chi_{\mathrm{od}}$ 
satisfy the following assumptions.\footnote{By inspecting our proof, it is easy to see that  actually we do not need to assume that the off-diagonal entries of $W$ are  identically distributed. We only need that they all have the same second moments, but higher moments can be different.}
\begin{assumption}
\label{ass:entries}
The off-diagonal distribution $\chi_{\mathrm{od}}$ is a real or complex centered random variable, $\E\chi_{\mathrm{od}}=0$, satisfying $\E|\chi_{\mathrm{od}}|^2 = 1$.
The diagonal distribution is a real centered random variable, $\E \chi_{\mathrm{d}} =0$. Furthermore, we assume the existence of high moments, i.e.~for any $p\in \N$ there exists $C_p > 0$ such that
\begin{equation}\label{cp}
\E \big[|\chi_{\mathrm{d}}|^p+|\chi_{\mathrm{od}}|^p\big]\le C_p\,. 
\end{equation}
\end{assumption}

Our main result, Theorem~\ref{thm:OTOC} below, concerns the Heisenberg time evolution
$A(t) := \ee^{-\ii Wt} A \ee^{\ii Wt}$
of a fixed deterministic self-adjoint observable $A = A^* \in \C^{N \times N}$ governed by the Wigner matrix $W$.
More precisely, we consider  the \emph{out-of-time-ordered correlator (OTOC)}
\begin{equation} \label{eq:OTOC}
	\mathcal{C}_{A,B}(t) := \frac{1}{2}\big\langle \big|[A(t), B]\big|^2 \big\rangle =  \big\langle A(t)^2 B^2 \big\rangle - \big\langle A(t) B A(t)B \big\rangle  
\end{equation}
with another self-adjoint observable $B = B^* \in \C^{N \times N}$, 
consisting of a two-point and a four-point\footnote{We remark that some papers (see, e.g., \cite{PappFriPro}) refer to the four-point part $\mathcal{F}_{A,B}(t)$ alone as the OTOC. } part, 
\begin{equation} \label{eq:OTOCparts}
	\mathcal{D}_{A,B}(t)  := \big\langle A(t)^2 B^2 \big\rangle \qquad \text{and} \qquad \mathcal{F}_{A,B}(t)  := \big\langle A(t) B A(t)B \big\rangle\,, \quad \text{respectively.}
\end{equation}
%For the moment we consider the infinite temperature case, for the finite temperature version, see~\eqref{eq:OTOCbeta}  later.

In the formulation of Theorem \ref{thm:OTOC}, a key role is played by the Fourier transform of the semicircular density 
\eqref{eq:semicirc}, 
\begin{equation} \label{eq:J/t}
	\varphi(t) := \widehat{\rho_{\mathrm{sc}}}(t) = \int_{-2}^2 \dd x \rho_{\mathrm{sc}}(x) \ee^{\ii xt} = \frac{J_1(2t)}{t}\,, 
	\quad t\in \mathbb{R}, 
\end{equation}
where we recall that $J_1$ is the Bessel function of the first kind of order one. 
We note that by standard asymptotics of the Bessel functions on the real line, it holds that 
\begin{equation} \label{eq:J1asym}
	J_1(s) = \begin{cases}
- \sgn(s)\cos\left( |s| + \dfrac{\pi}{4} \right) \sqrt{\dfrac{2}{\pi |s|}} + \mathcal{O}\left(\dfrac{1}{|s|^{3/2}}\right)\quad &s \to \pm \infty\,, \\[4mm]
\dfrac{s}{2} - \dfrac{1}{2}\left(\dfrac{s}{2}\right)^3 + \mathcal{O}(|s|^5) \quad &s \to 0 \,. 
	\end{cases} 
\end{equation}

For simplicity, we formulate our main result only for traceless matrices, $\langle A \rangle = \langle B \rangle = 0$ and at infinite temperature. For general observables, see Remark \ref{rmk:additional} and for finite temperature, see Section~\ref{rmk:finite}.
\begin{theorem}[OTOC for Wigner matrices] \label{thm:OTOC}
Let $W$ be a Wigner matrix satisfying Assumption \ref{ass:entries} and let $A, B \in \C^{N \times N}$ be self-adjoint deterministic matrices which are traceless, $\langle A \rangle = \langle B \rangle= 0$. Fix any $\epsilon>0$.
Then, the OTOC \eqref{eq:OTOC} satisfies
\begin{equation} \label{eq:OTOCmain}
	\begin{split}
\mathcal{C}_{A,B}(t) = &\langle A^2 \rangle \langle B^2 \rangle \big[1 - \varphi(t)^2\big] + 2\langle AB \rangle^2 \varphi(t)^2  \, \big[\varphi(2t)    -  \varphi(t)^2 \big] \\[2mm]
&+\langle A^2B^2 \rangle \, \varphi(t)^2 - \langle ABAB \rangle \varphi(t)^4   + \mathcal{O}_\prec \big(\mathcal{E}_{A,B}(t,N)\big)
	\end{split}
\end{equation} 
with an error term 
\begin{equation} \label{eq:errorterm}
\mathcal{E} = \mathcal{E}_{A,B}(t,N) := 
\ee^{t /N^{1/2 - \epsilon}}\left(\frac{|t|^4 \langle A^2\rangle^2}{N} + \frac{|t| \langle A^8 \rangle^{1/2}}{N}\right)^{1/2} \left(\frac{|t|^4 \langle B^2\rangle^2}{N} + \frac{|t| \langle B^8 \rangle^{1/2}}{N}\right)^{1/2}.
\end{equation}
\end{theorem}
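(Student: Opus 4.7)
The plan is to derive \eqref{eq:OTOCmain} in three stages: (i) represent the unitary evolutions $\ee^{\pm \ii tW}$ via Cauchy integrals of resolvents, so that $\mathcal{D}_{A,B}(t)$ and $\mathcal{F}_{A,B}(t)$ become contour integrals of multi-resolvent chains with alternating observables $A,B$; (ii) invoke the multi-resolvent local law, Theorem~\ref{thm:main}, to replace each chain by its deterministic approximation $M_k$ up to an error controlled in Schatten norms; and (iii) evaluate the contour integrals of $M_k$, producing the explicit $\varphi(t)$ and $\varphi(2t)$ factors.

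Concretely, using the cyclicity $A(t)^2 = \ee^{-\ii tW} A^2 \ee^{\ii tW}$ and the functional-calculus identity $\ee^{\pm \ii tW} = \mp (2\pi \ii)^{-1} \oint \ee^{\pm \ii tz} G(z)\, \dd z$ for contours enclosing the spectrum, we write
\begin{equation*}
\mathcal{D}_{A,B}(t) = -\frac{1}{(2\pi)^2}\oint_{\Gamma_+}\oint_{\Gamma_-}\ee^{\ii t(z_1-z_2)}\,\langle A^2 G(z_1) B^2 G(z_2)\rangle \, \dd z_1 \dd z_2,
\end{equation*}
and analogously express $\mathcal{F}_{A,B}(t)$ as a quadruple integral of $\langle G(z_1) A G(z_2) B G(z_3) A G(z_4) B\rangle$ with alternating weights $\ee^{\pm \ii tz_j}$. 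We take $\Gamma_\pm$ to be horizontal contours at heights $\pm\eta$ with $\eta := N^{-1/2+\epsilon}$; this choice produces the overall prefactor $\ee^{t\eta} = \ee^{t/N^{1/2-\epsilon}}$ appearing in \eqref{eq:errorterm}.

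For the two-chain, splitting $A^2 = \langle A^2\rangle I + A^2_\circ$ (and similarly $B^2$) decomposes the deterministic $M_2$ into four pieces; the one built from $\langle A^2\rangle\langle B^2\rangle\langle G_1 G_2\rangle \approx \langle A^2\rangle\langle B^2\rangle\, (m_1-m_2)/(z_1-z_2)$ integrates via residue calculus and the identity $(2\pi\ii)^{-1}\oint \ee^{\ii tz} m(z)\, \dd z = -\varphi(t)$ --- which is immediate from the Stieltjes representation \eqref{eq:semicirc} --- to $\langle A^2\rangle\langle B^2\rangle\varphi(t)^2$; the traceless pieces yield $\langle A^2 B^2\rangle \varphi(t)^2$. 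For the four-chain, $M_4$ contains recursive denominators $(1-m_im_j)^{-1}$ that encode resonances: the pairing of $z_1,z_3$ (both carrying $\ee^{\ii tz}$) couples them via such a denominator, and the corresponding contour integral yields $\varphi(2t)$ rather than $\varphi(t)^2$. Unravelling $M_4$ using the recursion of \cite{multiG} specialised to the observable pattern $A,B,A,B$ isolates precisely the terms $2\langle AB\rangle^2 \varphi(t)^2[\varphi(2t) - \varphi(t)^2]$ and $-\langle ABAB\rangle\varphi(t)^4$ in \eqref{eq:OTOCmain}.

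Finally, the Schatten-norm errors from Theorem~\ref{thm:main}, applied to the two- and four-chains and integrated over contours of length $\mathcal{O}(1)$, combine multiplicatively to yield the symmetric product form of \eqref{eq:errorterm}: the piece $|t|^4\langle A^2\rangle^2/N$ arises from the coarser $\langle |A|^2\rangle/(N\eta)$-type error amplified by the $|t|$-factors acquired when each spectral parameter is displaced from the saddle of $\ee^{\ii tz}$, while $|t|\langle A^8\rangle^{1/2}/N$ comes from the tighter $\langle |A|^4\rangle^{1/2}/(N\sqrt\eta)$-type term squared against its analogue for $B$. The main obstacle is step (iii): correctly identifying and organising the resonant contributions to $M_4$ so that the cross-term $\langle AB\rangle^2\varphi(t)^2\varphi(2t)$ emerges with the right sign and coefficient, which requires careful bookkeeping of the $(1-m_im_j)^{-1}$ factors across all four spectral parameters and consistent contour orientations to avoid spurious residues; a secondary difficulty is optimally distributing the tracial factors across Schatten bounds to match the symmetric $A\leftrightarrow B$ factorisation of $\mathcal{E}_{A,B}$.
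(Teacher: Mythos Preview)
Your overall three-stage plan --- contour representation, local law to replace chains by $M_k$, then explicit computation of the $M_k$-integrals --- matches the paper's strategy, and your remarks on how $\varphi(t)$ and $\varphi(2t)$ arise from the integrals of $m(z)$ and resonant denominators $(1-m_im_j)^{-1}$ are on the right track. However, the proposal has a genuine gap in the choice of contour, and consequently in the error accounting.

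You fix the contour height at $\eta = N^{-1/2+\epsilon}$ for all $t$. With this choice, in the bulk $\ell \sim \eta \sim N^{-1/2+\epsilon}$, so the local-law error from Theorem~\ref{thm:main} for the four-chain is of order $N^{-1}(\vertiii{A}_{8,\ell})^2(\vertiii{B}_{8,\ell})^2 \sim N^{-1}\cdot N^{1/2-\epsilon}\langle A^2\rangle \cdot N^{1/2-\epsilon}\langle B^2\rangle = N^{-2\epsilon}\langle A^2\rangle\langle B^2\rangle$. For $|t|\sim 1$ this is enormously larger than the target $\mathcal{E}\sim N^{-1}\langle A^2\rangle\langle B^2\rangle$, and no ``displacement from the saddle'' heuristic can repair it: the $|t|$-powers in \eqref{eq:errorterm} do \emph{not} come from oscillatory amplification but from the contour height itself. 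The paper instead takes $\eta = |t|^{-1}$ (for $1\le |t|\le N^{(1-\epsilon)/2}$), so that $\ell\gtrsim |t|^{-2}$ up to an $N^\kappa$-factor absorbable into $\prec$; then $\langle A^2\rangle/\ell \sim |t|^2\langle A^2\rangle$ and $\langle A^8\rangle^{1/4}/\ell^{1/4}\sim |t|^{1/2}\langle A^8\rangle^{1/4}$, which squares to exactly the two terms inside each bracket of \eqref{eq:errorterm}. The exponential prefactor $\ee^{t/N^{1/2-\epsilon}}$ appears only in the regime $|t|> N^{(1-\epsilon)/2}$, where the height is capped at $N^{-(1-\epsilon)/2}$; for smaller $t$ the factor $|\ee^{\ii t(z_1+z_3-z_2-z_4)}|$ is $O(1)$ since $|t|\eta=1$.

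A second, related omission: you never explain how to close the contour. The paper uses a rectangle $\partial([-R,R]\times\ii[-|t|^{-1},|t|^{-1}])$ with $R=N^\kappa$, treats the vertical segments via the \emph{global} law (Proposition~\ref{prop:initial}), and handles the mixed horizontal/vertical cases by a short Taylor expansion around a nearby point on the horizontal part. Purely horizontal lines do not enclose the spectrum, so some such closing argument is unavoidable.
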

The proof of Theorem \ref{thm:OTOC} is given in Section \ref{subsec:OTOCpf} below. It is based on a novel \emph{multi-resolvent local law} with error terms involving optimal  Schatten norms, see Theorem \ref{thm:main}. 

\begin{remark} \label{rmk:additional}We have several comments on Theorem \ref{thm:OTOC}:
	\begin{itemize}
		\item[(i)] \emph{[Non-traceless observables]} For general observables $A,B$, we can decompose them into a tracial and traceless part, $A =: \langle A \rangle + \mathring{A}$, and similarly for $B$. The tracial parts, $\langle A \rangle$ and $\langle B \rangle$, then commute with the unitary time evolution $\ee^{\ii tW}$ and one straightforwardly obtains a result similar to Theorem \ref{thm:OTOC} (see also Remark \ref{rmk:extend}). 
		\item[(ii)] \emph{[Variance of fluctuations]}  The size of the fluctuations around the deterministic leading term in \eqref{eq:OTOCmain}, i.e.~the variance of $\mathcal{C}_{A,B}(t)$, is explicitly computable, following the arguments leading to \cite[Lemma~2.5]{multipointCLT}. The result is expressible purely in terms of Schatten norms of $A$ and $B$ (cf.~\cite[Lemma~2.5 and Definition 3.4]{multipointCLT}), however in \cite{multipointCLT, multipointCLTcomb} the error terms are still in terms of crude operator norms. 
		\item[(iii)] \emph{[Gaussianity]} It is also possible to prove Gaussianity of the fluctuations of $\mathcal{C}_{A,B}(t)$ (cf.~\cite[Theorem~2.7 and Corollary 2.12]{multipointCLT}) by showing an approximate Wick theorem for resolvent chains, similarly to \cite[Theorem~3.6]{multipointCLT}. However, we refrain from doing this for brevity of the current paper. 
	\end{itemize}
\end{remark}

\subsection{Physical interpretation of Theorem \ref{thm:OTOC} by two examples}\label{rmk:interpret}
We will now discuss the behavior of $\mathcal{C}_{A,B}(t)$ in two exemplary and extreme situations of observables $A,B$.
In the first example we will set the two observables identical, in the second we will assume that their product vanishes.
More concretely,  we define 
	\begin{equation} \label{eq:ABdefex}
		\begin{split}
A &= N^{\frac{1-a}{2}} \mathrm{diag}(1, -1, ..., 1, -1, 0,..., 0)\,, \qquad a \in [0,1)\,, \\
B&= N^{\frac{1-b}{2}} \mathrm{diag}(0,...,0,1,-1, ... , 1,-1)\,, \qquad b \in [0,1)\,,
		\end{split}
	\end{equation}
	in such a way that $AB = BA = 0$, and $\langle A \rangle = \langle B \rangle = 0$ as well as $\langle A^2 \rangle = \langle B^2 \rangle = 1$, i.e.~$A$ (resp.~$B$) contains $N^a$-many (resp.~$N^b$-many) non-zero entries on the diagonal. 
	
	\medskip 
	\noindent{\bf Example 1.} For the first example, we have
	\begin{equation} \label{eq:1ex}
	\mathcal{C}_{A,A}(t) = \langle A^2 \rangle^2  \big[1 - \varphi(t)^2\big\{ 1 - 2\varphi(2t) +2\varphi(t)^2  \}\big]  +\langle A^4 \rangle \, \big[\varphi(t)^2 -  \varphi(t)^4 \big]  + \mathcal{O}_\prec \big(\mathcal{E}_{A,A}(t,N)\big) \,.
	\end{equation}
\medskip
	{\bf Example 2.} For the second example, we have
	 \begin{equation} \label{eq:2ex}
	\mathcal{C}_{A,B}(t) = \langle A^2 \rangle \langle B^2 \rangle \big[1 - \varphi(t)^2\big]   + \mathcal{O}_\prec \big(\mathcal{E}_{A,B}(t,N)\big) \,. 
	 \end{equation}
	 The key features of these two examples \eqref{eq:1ex}--\eqref{eq:2ex} are briefly summarized in Table \ref{tab:1}. Ignoring the respective error terms, $\mathcal{C}_{A,A}(t)$ and $\mathcal{C}_{A,B}(t)$ are schematically depicted in Figure \ref{fig:otoc}. 
	 Now we comment on each time regime.
	\begin{itemize}
\item[(i)] [{\bf Short-time regime}] By the asymptotics in \eqref{eq:J1asym}, we have the short-time asymptotic
\begin{alignat*}{2}
	\mathcal{C}_{A,A}(t) &= t^2  \, \big(\langle A^4 \rangle - \langle A^2 \rangle^2\big) + \mathcal{O}\big(|t|^4 \langle A^4 \rangle\big) + \mathcal{O}_\prec \big(\mathcal{E}\big)   \qquad \qquad &&\mbox{(Example 1)}\,,  \\[2mm]
\mathcal{C}_{A,B}(t) &= t^2  \, \langle A^2 \rangle \langle B^2 \rangle + \mathcal{O}\big(|t|^4 \langle A^2 \rangle \langle B^2 \rangle \big) + \mathcal{O}_\prec \big(\mathcal{E}\big)  \qquad \qquad && \mbox{(Example 2)} \,. 
 \end{alignat*}
This shows, that the OTOC for Wigner matrices does \emph{not} exhibit the exponential increase $\sim \ee^{2 \lambda t}$, observed for quantum systems with a \emph{classically} chaotic analogue, where $\lambda$ is the Lyapunov exponent of the classical system. Instead, the OTOC \eqref{eq:OTOC} behaves polynomially, as expected for quantum chaotic systems without a classical analogue 
(e.g.~for certain spin$-\tfrac{1}{2}$ chains \cite{1906.07706}). 

\item[(ii)] [{\bf Scrambling time}]  The monotonous growth of both $\mathcal{C}_{A, A}(t)$ and $\mathcal{C}_{A,B}(t)$ stops at a time of order one (using elementary properties of $\varphi$ from \eqref{eq:J/t}), hence 
 the scrambling time is $t_* \sim 1$. However, the maximally attained value strongly differs for the two examples: While $\mathcal{C}_{A,A}(t_*) \sim N^{1-a}$ heavily depends on $a$ (i.e.~the rank of $A$), the peak of $\mathcal{C}_{A,B}(t_*) \sim 1$ is independent of the ranks of $A$ and $B$. 

\item[(iii)] [{\bf Intermediate regime up to the relaxation time}] The following interval of intermediate times is characterized by a decay of the OTOC \eqref{eq:OTOCmain} towards its thermal limiting value $\langle A^2 \rangle \langle B^2 \rangle = 1$ up to the 
relaxation time $t_{**}$. This regime is also quite different for the two examples \eqref{eq:1ex}--\eqref{eq:2ex}: While for Example 1
 the interval of intermediate times is given by $t\in [t_*, t_{**}]$ where
$t_{**} \sim N^{(1-a)/3}$, in Example 2 the relaxation time $t_{**} \sim 1$ is comparable with the
 scrambling time. 
However,  for technical reasons, the \emph{entire} interval of intermediate times is only accessible if 
$a > 5/11$ for Example 1, and $a+b > 2/3$ for Example 2, otherwise  the leading terms in \eqref{eq:1ex}--\eqref{eq:2ex} 
become smaller than their respective error term. 
In comparison, computing the OTOC with the operator norm in the error terms  \cite[Corollary~2.7]{multiG}, would lead to the (more restrictive) conditions $a > 5/8$ for Example 1, and $a+b>1$ for Example~2.

\item[(iv)] [{\bf Long-time regime}]  In the consecutive long-time regime, i.e.~$t \gg N^{(1-a)/3}$ for  Example 1 
and $t \gg 1$ for Example 2, we find the OTOC \eqref{eq:OTOCmain} 
to concentrate around its thermal limiting value $\langle A^2 \rangle \langle B^2 \rangle  =1$ with small oscillations. 
This confirms the expectation, that the OTOC in strongly chaotic systems exhibits only small fluctuations for long times. These are accessible up to 
\begin{equation*}
t \le N^{\min\left\{ \frac{1}{4}, \, \frac{3a-1}{2} \right\}-\epsilon} \qquad \text{and} \qquad t \le N^{\min \left\{ \frac{1}{4}, \, \frac{3a+1}{10}, \, \frac{3b+1}{10}, \,  \frac{3(a+b)-2}{4} \right\}-\epsilon}
\end{equation*} 
for Example 1 and 2, respectively. 
Again, in comparison with Theorem \ref{thm:OTOC}, the operator norm error terms from \cite[Cor.~2.7]{multiG} would lead to the constraints $t \le N^{\frac{2a-1}{2}-\epsilon}$ for Example 1, and $t \le N^{\frac{a+b-1}{2}-\epsilon}$ for Example~2. 
	\end{itemize}

	\begin{table}[h]
	\begin{tabular}{| c | c | c |}
		\hline
		& Ex. 1\,~\eqref{eq:1ex}: 
		$A = B$, $\mathrm{rk} \, A= N^a$ 
		& Ex. 2\,~\eqref{eq:2ex}: $AB = 0$, $\mathrm{rk} \, A = N^a$, $\mathrm{rk} \, B = N^b$ \\ \hline 
		short times & $\mathcal{C}_{A,A}(t) \sim t^2  \, \big(\langle A^4 \rangle - \langle A^2 \rangle^2\big)$  & $\mathcal{C}_{A,B}(t) \sim  t^2  \, \langle A^2 \rangle \langle B^2 \rangle$ \\ \hline
		scrambling time & $t_* \sim 1$ and $\mathcal{C}_{A,A}(t_*) \sim N^{1-a}$ & $t_* \sim 1$ and $\mathcal{C}_{A,B}(t_*)  \sim 1$ \\ \hline 
		intermediate times & full access if $a > 5/11$  
		& full access if $a+b >2/3$ \\ \hline
		relaxation time & $t_{**} \sim N^{\frac{1-a}{3}}$ and $\mathcal{C}_{A,A}(t_{**}) \sim \langle A^2 \rangle^2$ & $t_{**} \sim 1$ and $\mathcal{C}_{A,B}(t_{**}) \sim \langle A^2 \rangle \langle B^2 \rangle$ \\ \hline
		long times & up to 
		$t \le N^{\min\left\{ \frac{1}{4}, \, \frac{3a-1}{2} \right\}-\epsilon}$ 
		& up to  		$t \le N^{\min \left\{ \frac{1}{4}, \, \frac{3a+1}{10}, \, \frac{3b+1}{10}, \,  \frac{3(a+b)-2}{4} \right\}-\epsilon}$ \\ \hline
	\end{tabular}
	\caption{Overview of the two examples~\eqref{eq:1ex} and~\eqref{eq:2ex}.}
	\label{tab:1}
\end{table}

\subsection{Finite temperature case}\label{rmk:finite}

Theorem \ref{thm:OTOC} can easily be extended to the case of finite temperature,  
$\beta = 1/T>0$. The OTOC \eqref{eq:OTOC} now is given by
\begin{equation} \label{eq:OTOCbeta}
	\mathcal{C}^{(\beta)}_{A,B}(t) := \frac{1}{2} \frac{\Tr\big[  |[A(t),B]|^2 \ee^{-\beta W}\big]}{Z}
\end{equation}
with partition function $Z = \Tr \big[\ee^{-\beta W}\big]$. In this case, the analog of \eqref{eq:OTOCmain} in the regime\footnote{We restrict to this regime for simplicity, as all the error terms can be absorbed into $\prec$.} $\beta \ll \log N$ reads
\begin{equation} \label{eq:CABbeta}
	\begin{split}
		\mathcal{C}^{(\beta)}_{A,B}(t) = &\langle A^2 \rangle \langle B^2 \rangle \big[1 - \varphi(t)^2\big] + \langle AB \rangle^2 \frac{\varphi(t) }{\varphi(\ii \beta)} \, \Re \bigg[\varphi(2t)  \varphi(t + \ii \beta) + \varphi(t) \varphi(2t+ \ii \beta) - 2 \varphi(t)^2 \varphi(t+\ii \beta)\bigg] \\[2mm]
		&+\langle A^2B^2 \rangle \, \varphi(t)^2 - \langle ABAB \rangle \frac{\varphi(t)^3 \Re [\varphi(t+ \ii \beta)]}{\varphi(\ii \beta)} + \mathcal{O}_\prec \left( \mathcal{E}_{A,B}(t,N) \right)\,,
	\end{split}
\end{equation}
where $\mathcal{E}$ is from \eqref{eq:errorterm}. Here $\varphi(z)$ is the complex extension of
$\varphi(t)$ for $z \in \C$; note that $\varphi(z)$
is generically complex but $\varphi(\ii \beta)$ is real.

\begin{figure}[h]
	\centering
	\includegraphics[scale=0.65]{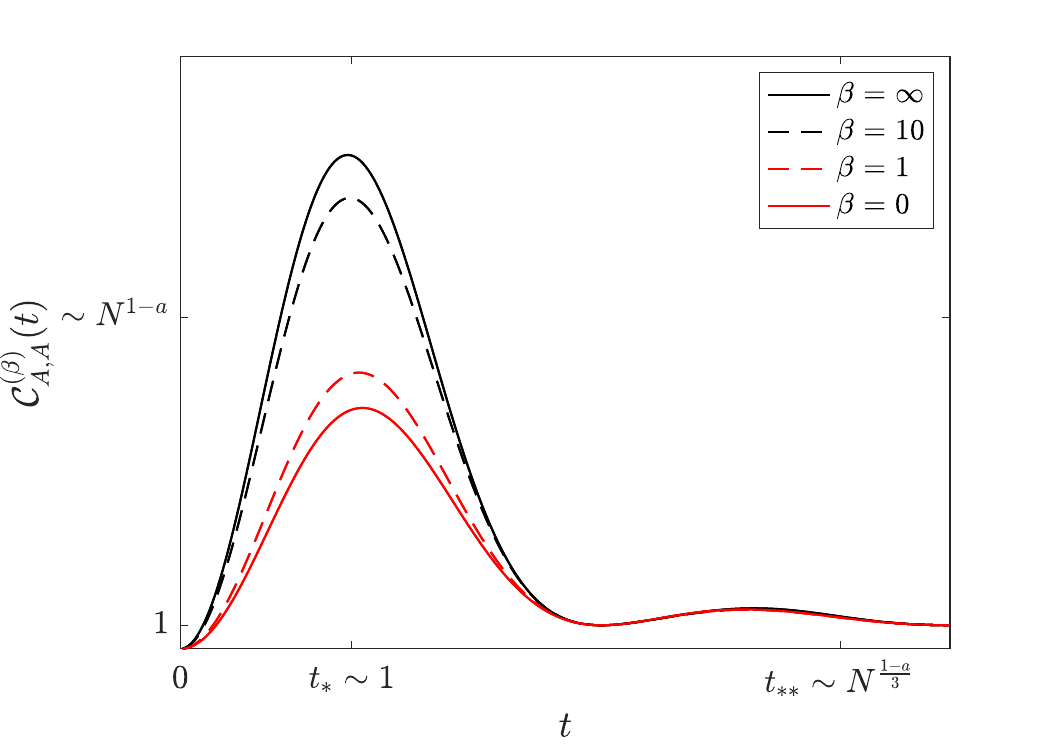}
	\caption{Depicted are four curves illustrating the influence of $\beta = 1/T$ on the OTOC $\mathcal{C}_{A,A}^{(\beta)}(t)$ up to intermediate times for Example 1 from Section \ref{rmk:interpret} (i.e.~normalized to $\langle A^2 \rangle = 1$ with $\mathrm{rank} \, A = N^{\frac{1-a}{2}}$ and $a \in [0,1]$). As $\beta$ increases, the characteristic rank-dependent peak of size $\sim N^{1-a}$ around the scrambling time $t_* \sim 1$ becomes more pronounced and very slightly shifted to the left. }
	\label{fig:peak}
\end{figure}

Moreover, using the asymptotics of the Bessel function in the complex plane, we have  %(recall \eqref{eq:J/t}--\eqref{eq:J1asym})
$$
\varphi(z) = -\sqrt{\frac{1}{\pi z^3}} \left( \cos\left(2z+\frac{\pi}{4}\right)  + \ee^{2|\Im z|} \mathcal{O}\left(|z|^{-1}\right)\right) \qquad \text{for} \qquad |z| \to \infty \quad \text{with} \quad |\arg z| < \pi\,.
$$
In particular,  the thermal limiting value of $\mathcal{C}^{(\beta)}_{A,B}(t)$ is \emph{independent} of $\beta$
at least in our regime $\beta\ll \log N$. Note that for much  larger $\beta\gtrsim \sqrt{N}$ physics calculations
predict  a temperature dependence of the thermal limiting value of the OTOC , see \cite[Eqs.~(3.8)--(3.9)]{2202.09443}.

However, before the long-time regime and neglecting the error term in \eqref{eq:CABbeta}, we find a strong dependence of the OTOC on temperature for Example~1 from Section \ref{rmk:interpret} (cf.~\eqref{eq:ABdefex}--\eqref{eq:1ex}) as illustrated in Figure \ref{fig:peak}. 
In contrast to that, for Example~2 from Section \ref{rmk:interpret} (cf.~\eqref{eq:ABdefex} and \eqref{eq:2ex}), the whole OTOC curve (as depicted in Figure \ref{fig:otoc}) is independent of temperature at any time. This is because all the $\beta$ dependent terms in \eqref{eq:CABbeta} drop out since $AB = BA = 0$.

\section{Proof of Theorem \ref{thm:OTOC}: Multi-resolvent local law with Schatten norms} \label{sec:mainSch}
Theorem~\ref{thm:OTOC} relies on  a new multi-resolvent local law for
alternating chains \eqref{eq:mainobj}  with  deterministic matrices via simple contour integration (see Section \ref{subsec:OTOCpf}).
Its main novelty, using Schatten norms for the observables and still keeping 
optimality in $N$ and $\eta$%{\cor [J: We should say somewhere, that we change $\eta$ to $\ell$, essentially.]}
, has already been explained in the Introduction.
After collecting some preliminary  information, in Section~\ref{sec:multi} we
 present our new local law, Theorem~\ref{thm:main}, then in Section~\ref{subsec:OTOCpf} we quickly complete the proof
 of Theorem~\ref{thm:OTOC}. Starting from Section~\ref{sec:proof} we will focus 
 on the proof of Theorem~\ref{thm:main}.

\subsection{Preliminaries on the deterministic approximation} \label{subsubsec:prelimM}
Before stating our main technical result, we introduce some additional notation. Given a  non-crossing partition $\pi$ of 
the set $[k]:=\set{1,\ldots,k}$ arranged in cyclic order,  the partial trace $\mathrm{pTr}_{\pi}$ is defined as
\begin{equation}
\label{eq:partrdef}
    \mathrm{pTr}_\pi(B_1,\ldots,B_{k-1}): = \prod_{S\in\pi\setminus \mathfrak{B}(k)}\left\langle\prod_{j\in S}B_j\right\rangle\prod_{j\in \mathfrak{B}(k)\setminus\set{k}} B_j,
\end{equation} 
with $\mathfrak{B}(k)\in\pi$ denoting the unique block containing $k$. Then, for generic $B_i$'s, the deterministic approximation of \eqref{eq:mainobj} is given by (see \cite[Theorem~3.4]{thermalisation})
\begin{equation}
\label{eq:Mdef}
    M_{[1,k]} = M(z_1,B_1,\ldots,B_{k-1},z_k) := \sum_{\pi\in\mathrm{NC}([k])}\mathrm{pTr}_{K(\pi)}(B_1,\ldots,B_{k-1}) \prod_{S\in\pi} m_\circ[S ],
\end{equation}
where $\mathrm{NC}([k])$ denotes the non-crossing partitions of the set $[k]$, and $K(\pi)$ denotes the Kreweras complement of $\pi$ (see \cite[Definition 2.4]{thermalisation} and \cite{Kreweras}). Furthermore, for any subset $S\subset [k]$ we define $m[S]:=m_\mathrm{sc}[\bm{z}_S]$ as the iterated divided difference of $m_\mathrm{sc}$ evaluated in $\bm{z}_S:=\{z_i: i\in S\}$, and by $m_\circ[\cdot ]$ denote the free-cumulant transform of $m[\cdot]$ which is uniquely defined implicitly by the relation 
\begin{equation}
    \label{eq:freecumulant}
    m[S] = \sum_{\pi\in\mathrm{NC}(S)} \prod_{S'\in\pi} m_\circ[S'],  \qquad \forall S\subset [k],
\end{equation}
e.g. $m_\circ[i,j]=m[\set{i,j}]-m[\set{i}]m[\set{j}]$. Throughout the paper, we will often use the fact that $m_\mathrm{sc}[\bm{z}_S]$ can be written as follows
\begin{equation}\label{msc dd}
    m_\mathrm{sc}[\set{z_i: i \in S}] = \int_{-2}^2\rho_\mathrm{sc}(x)\prod_{i\in S}\frac{1}{(x-z_i)}\dd x.
\end{equation}

In order to formulate bounds on the deterministic approximation $M$ as well as the local law bounds in a concise way, we introduce the following {\it weighted Schatten norms.}
\begin{definition}[$\ell$-weighted Schatten norms] \label{def:plSchatten}
For $B \in \C^{N \times N}$, $\ell > 0$ and $p \in [2, \infty]$, we define the \emph{$\ell$-weighted $(2,p)$-Schatten norm} as
\begin{equation} \label{eq:plSchatten}
\vertiii{B}_{p,\ell} := \frac{\langle |B|^2 \rangle^{1/2}}{\ell^{1/2}} + \frac{\langle |B|^p\rangle^{1/p}}{\ell^{1/p}} \quad \text{for} \quad p<\infty \qquad \text{and} \qquad \vertiii{B}_{\infty,\ell} := \frac{\langle |B|^2 \rangle^{1/2}}{\ell^{1/2}} + \Vert B \Vert\,. 
\end{equation}
By elementary inequalities, we have %one can easily find them to satisfy, for $2 \le p \le q \le \infty$, 
\begin{equation} \label{eq:plSchattenrel}
\vertiii{B}_{p, \ell} \lesssim \vertiii{B}_{q, \ell} \qquad \text{and} \qquad \vertiii{B}_{q, \ell} \lesssim \left[1 + (N \ell)^{\frac{q-p}{pq}}\right] \, \vertiii{B}_{p, \ell}\,, \quad \mbox{$2 \le p \le q \le \infty$.}
\end{equation}
\end{definition}

The next lemma gives a bound on the deterministic approximation $M_{[1,k]}$ with traceless observables; 
its proof is given in Appendix \ref{sec:addtech}. We will henceforth follow the convention that the letter $B$ is used for generic matrices, while $A$ is reserved for traceless ones. 

\begin{lemma}[$M$-bound] \label{lem:Mbound}
Fix $k \ge 1$. Consider spectral parameters $z_1, ... , z_{k} \in \C \setminus \R$ and traceless matrices $A_1, ... , A_k \in \C^{N \times N}$.  Denoting $\ell := \min_{j \in [k]} \eta_j \rho_j$ with $\eta_j := |\Im z_j|$ and $\rho_j := \pi^{-1} |\Im m(z_j)|$, it holds that\footnote{\label{ftn:odd}We point out that the bound can be improved to $\ell \prod_{i \in [k]}\left(\ell^{(1-k)/(2k)}\langle |A_i|^3 \rangle^{1/k}\langle |A_i|^2 \rangle^{(k-3)/(2k)}  + \ell^{-1/k} \langle |A_i|^k \rangle^{1/k}\right)$ in 
case of odd $k \ge 3$, but we do not follow this improvement for brevity and ease of notation.}
 \begin{equation} \label{eq:Mbound}
	\left| \langle {M}_{[1,k]}(z_1, A_1, ... ,  z_k) A_k \rangle \right| \lesssim \mathbf{1}(k\ge 2) \ \ell \prod_{i \in [k]} \vertiii{A_i}_{k,\ell} \,,
\end{equation}
where $\vertiii{A_i}_{k,\ell}$ was defined in \eqref{eq:plSchatten}. 
\end{lemma}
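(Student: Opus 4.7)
My plan is to work directly from the expansion \eqref{eq:Mdef}. After taking the normalized trace against $A_k$, each block $S'$ of the Kreweras complement $K(\pi)$ (including the distinguished block containing $k$) becomes a normalized trace of the cyclic product of the corresponding $A_j$'s, giving
\begin{equation*}
	\langle M_{[1,k]}(z_1, A_1, \ldots, z_k) A_k \rangle = \sum_{\pi \in \mathrm{NC}([k])} \Big( \prod_{S' \in K(\pi)} \Big\langle \prod_{j \in S'} A_j \Big\rangle \Big) \Big( \prod_{S \in \pi} m_\circ[S]\Big).
\end{equation*}
Tracelessness of every $A_i$ then kills any $\pi$ whose Kreweras complement contains a singleton; for $k = 1$ this disposes of the only partition (alternatively, $M_{[1,1]} = m(z_1)\, I$ gives $\langle M_{[1,1]} A_1\rangle = m(z_1)\langle A_1\rangle = 0$), which explains the indicator $\mathbf{1}(k\ge 2)$. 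Henceforth fix $k \ge 2$ and a partition $\pi$ with $b := |\pi|$ blocks whose Kreweras complement has $b' := |K(\pi)|$ blocks, all of size at least two; the standard identity $b + b' = k+1$ is the combinatorial backbone of the argument.

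I then bound the two products separately. For the $m_\circ$-factors, the key fact is that
\begin{equation*}
	\lvert m_\circ[S] \rvert \lesssim \ell^{\,1-|S|} \qquad \text{for every } S\subseteq[k],\ |S|\ge 1.
\end{equation*}
This follows from the integral representation \eqref{msc dd}, a generalized H\"older with exponents $|S|$, and the scaling estimate $\int \rho_\mathrm{sc}(x)/|x-z_i|^{|S|}\,\dd x \lesssim \rho_i/\eta_i^{|S|-1}$, combined with $\ell = \min_j \eta_j\rho_j \le (\prod_i \eta_i\rho_i)^{1/|S|}$ and $\rho_i \lesssim 1$; the transition from $m_\mathrm{sc}[\cdot]$ to $m_\circ[\cdot]$ is then a routine induction on $|S|$ via the M\"obius-type relation \eqref{eq:freecumulant}. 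Using $\sum_{S\in\pi}(|S|-1) = k - b = b' - 1$ yields
\begin{equation*}
	\prod_{S \in \pi} |m_\circ[S]| \lesssim \ell^{\,1-b'}.
\end{equation*}

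For the trace-factors, H\"older gives $\bigl\lvert \langle \prod_{j \in S'} A_j\rangle\bigr\rvert \leq \prod_{j \in S'} \langle |A_j|^s\rangle^{1/s}$ for each block $S' \in K(\pi)$ with $s := |S'|\in[2,k]$. Next, since the function $1/p \mapsto \log \langle|A|^p\rangle^{1/p}$ is convex on $[1/k, 1/2]$, writing $1/s = \theta/2 + (1-\theta)/k$ with $\theta = 2(k-s)/(s(k-2)) \in [0,1]$ and multiplying/dividing by $\ell^{\theta/2}\ell^{(1-\theta)/k} = \ell^{1/s}$ yields
\begin{equation*}
	\langle |A_j|^s\rangle^{1/s} \le \langle|A_j|^2\rangle^{\theta/2}\langle|A_j|^k\rangle^{(1-\theta)/k} = \ell^{1/s}\bigg(\frac{\langle|A_j|^2\rangle^{1/2}}{\ell^{1/2}}\bigg)^\theta\bigg(\frac{\langle|A_j|^k\rangle^{1/k}}{\ell^{1/k}}\bigg)^{1-\theta} \le \ell^{1/s}\,\vertiii{A_j}_{k,\ell},
\end{equation*}
using that both bracketed quantities are bounded by $\vertiii{A_j}_{k,\ell}$. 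Multiplying over all $i \in [k]$, and noting that $\sum_{i \in [k]} 1/s(i) = \sum_{S'\in K(\pi)} 1 = b'$ (with $s(i)$ the size of the $K(\pi)$-block containing $i$), I obtain
\begin{equation*}
	\prod_{S' \in K(\pi)} \Big\lvert \Big\langle \prod_{j \in S'} A_j\Big\rangle \Big\rvert \le \ell^{\,b'} \prod_{i=1}^k \vertiii{A_i}_{k,\ell}.
\end{equation*}
Combining the two displayed bounds cancels the $\ell^{b'}$ factor and produces exactly $\ell \prod_i \vertiii{A_i}_{k,\ell}$ per partition; summing over the $N$-independent Catalan number of non-crossing partitions of $[k]$ completes the proof.

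The main obstacle is the sharp control $|m_\circ[S]| \lesssim \ell^{\,1-|S|}$ uniformly throughout the spectrum, in particular near the spectral edges where $\rho_j \to 0$: this is precisely where the choice $\ell = \min_j \eta_j\rho_j$ (rather than just $\min_j\eta_j$) is indispensable, and where the inductive extraction of $m_\circ[S]$ from the divided differences $m_\mathrm{sc}[S]$ via \eqref{eq:freecumulant} must be tracked carefully so as not to lose any $\rho$-factors. Everything else is combinatorial bookkeeping once this bound is in hand.
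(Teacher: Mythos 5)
Your proposal is correct and follows essentially the same route as the paper: expand $\langle M_{[1,k]}A_k\rangle$ over non-crossing partitions via \eqref{eq:Mdef}, use tracelessness to discard Kreweras complements with singleton blocks, bound the free-cumulant factors by inverse powers of the scale, apply the generalized H\"older inequality to each trace block, and interpolate the intermediate Schatten norms between $p=2$ and $p=k$ against powers of $\ell$ using $|\pi|+|K(\pi)|=k+1$. The only cosmetic difference is that the paper quotes the per-partition estimate from \cite[Lemma A.1]{edgeETH} (stated with $\eta^{-(s-1)}$, then uses $\eta\gtrsim\ell$ and Young's inequality), whereas you rederive the slightly weaker but sufficient bound $|m_\circ[S]|\lesssim\ell^{1-|S|}$ directly from \eqref{msc dd} and \eqref{eq:freecumulant} and use multiplicative norm interpolation; both steps are sound.
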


It is easy to see that the bound \eqref{eq:Mbound} is optimal for $k$ even (cf.~Footnote \ref{ftn:odd}) and in case that $A_i = A$ for all $i \in [k]$ and $z_i \in \{z, \bar{z}\}$ for some $z \in \C\setminus \R$ in the \emph{bulk}, i.e.~$\Re z \in [-2+\delta, 2-\delta]$ for some $\delta > 0$. 

\subsection{Multi-resolvent local law}\label{sec:multi}
We now formulate our main technical result, an \emph{averaged multi-resolvent local law with Schatten norms},
in Theorem \ref{thm:main}; its proof is given in Section \ref{sec:proof}. 
The corresponding \emph{isotropic multi-resolvent local law} will be formulated in Theorem \ref{thm:isolaw} later.

\begin{theorem}[Averaged multi-resolvent local law with Schatten norms] \label{thm:main}
	Let $W$ be a Wigner matrix satisfying Assumption~\ref{ass:entries}, and fix $k \in \N$. Consider spectral parameters $z_1, ... , z_{k} \in \C \setminus \R$, the associated resolvents $G_j = G(z_j) := (W-z_j)^{-1}$, and traceless deterministic matrices $A_1, ... , A_k \in \C^{N \times N}$. 	Denote 
	$$\eta_j :=  |\Im z_j| \,, \qquad \rho_j := \pi^{-1} |\Im m(z_j)|\,, \qquad \ell := \min_{j \in [k]} \eta_j \rho_j,$$
		and let ${M}_{[1,k]}$ be given in \eqref{eq:Mdef}. Then, for any fixed $\epsilon > 0$ and recalling \eqref{eq:plSchatten},	we have 
\begin{equation} \label{eq:mainresult}
	\left| \langle {G}_1 A_1 ... {G}_k A_k \rangle - \langle {M}_{[1,k]}A_k \rangle \right| \prec \frac{1}{N}\prod_{i \in [k]} \vertiii{A_i}_{2k, \ell}\,,
\end{equation}
uniformly in spectral parameters satisfying $ N \ell \ge N^{\epsilon}$ and $\max_j |z_j| \le N^{1/\epsilon}$. 
\end{theorem}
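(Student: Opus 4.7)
The plan is to use the Zigzag strategy alluded to in Section~\ref{sec:proof}: alternate between the stochastic \emph{characteristic flow} that lowers the imaginary part of the spectral parameters, and a \emph{Green function comparison} (GFT) step that removes the Gaussian component of the flow. Starting from the global regime $\eta \sim 1$ (where the bound follows from the single-resolvent local law~\eqref{eq:singlegllaw} and straightforward resolvent expansions), one iterates the zigzag until the target $N\ell \ge N^\epsilon$ is reached. The induction is done simultaneously on the chain length $k$, and it is coupled with a companion isotropic multi-resolvent local law (Theorem~\ref{thm:isolaw}), since the quadratic variations of averaged chains naturally produce isotropic quantities.

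\textbf{The flow step.} I would run the matrix Ornstein--Uhlenbeck flow $\dd W_t = -\tfrac12 W_t\,\dd t + \dd \mathcal{B}_t$ together with characteristics $\partial_t z_j(t) = -m_{\mathrm{sc}}(z_j(t)) - z_j(t)/2$, chosen so that $m_{\mathrm{sc}}(z_j(t))$ is constant along the flow and thus the deterministic term $M_{[1,k]}$ evolves in a controlled way. Applying Ito's formula to $X_t := \langle G_1(t) A_1 \cdots G_k(t) A_k\rangle - \langle M_{[1,k]}(t) A_k\rangle$ yields (i) a deterministic drift that, by construction of~\eqref{eq:Mdef} and standard non-crossing combinatorics, cancels against the evolution of $M_{[1,k]}$, and (ii) a martingale part whose quadratic variation is, up to lower-order terms, a double sum over cutting points $1\le i<j\le k$ of products of \emph{shorter} resolvent chains of lengths $k_1+k_2 = k+2$ obtained by splitting the chain at positions $i$ and $j$. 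A Gronwall/Doob argument over the flow time then propagates the local law from the input scale to the output scale.

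\textbf{The Schatten bookkeeping -- the main obstacle.} The central novelty, and the hardest step to get right, is to bound this quadratic variation in terms of $\prod_i \vertiii{A_i}_{2k,\ell}$ rather than the much cruder operator norm used in~\cite{multiG} or the Hilbert--Schmidt norm used in~\cite{A2}. Inductively, each shorter chain of length $k_s$ contributes a factor involving $\vertiii{A_i}_{2 k_s, \ell}$ for the observables it contains; the two chains share the cutting observables $A_i, A_j$ via a Ward-type identity, which produces mixed Schatten norms. Collecting all contributions and applying H\"older's inequality of the form $\langle |A|^{2k_1}\rangle^{1/(2k_1)}\langle |A|^{2k_2}\rangle^{1/(2k_2)} \le \langle |A|^{2k}\rangle^{2/(2k)}$ (with corresponding $\ell$-weights) should give exactly the target $\vertiii{A_i}_{2k,\ell}$. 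The $\ell^{-1/p}$ weights in Definition~\ref{def:plSchatten} must be calibrated precisely so that the time integral $\int_0^T \ell(t)^{-\alpha} \,\dd t$ produces the factor $\ell^{1-\alpha}$ rather than a logarithmic divergence; this is where the exponent $p=2k$ (as opposed to $k$) enters naturally. The $M$-bound of Lemma~\ref{lem:Mbound} plays the analogous role for the deterministic part and must be iterated with the same weights, which is why Lemma~\ref{lem:Mbound} is phrased with $\vertiii{\cdot}_{k,\ell}$.

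\textbf{GFT and closure.} After flowing for time $T$ comparable to the scale reduction, the resulting ensemble is compared with the original Wigner $W$ via a four-moment matching / resolvent expansion argument, exploiting that the added Gaussian piece has variance $\mathcal{O}(T)$ and that higher cumulants of $W$ can be matched to those of the flowed matrix up to the tolerated error $N^{-1}\prod_i \vertiii{A_i}_{2k,\ell}$. The GFT expansion contracts two entries of $W$ and produces resolvent chains that are already covered by the inductive hypothesis, so that the Schatten-norm control propagates. Iterating the zigzag a bounded number of times (depending on $\epsilon$) then delivers~\eqref{eq:mainresult} at the claimed scale $N\ell \ge N^\epsilon$, uniformly in $|z_j|\le N^{1/\epsilon}$.
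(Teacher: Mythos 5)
Your overall architecture coincides with the paper's: a global law, the characteristic (Ornstein--Uhlenbeck) flow, and a self-consistent GFT, with an induction on chain length coupled to the isotropic companion law (Theorem~\ref{thm:isolaw}). However, two of the central mechanisms are misstated or missing. In the flow step, the characteristics \eqref{eq:chardef} do \emph{not} keep $m_{\mathrm{sc}}(z_{j,t})$ constant (by \eqref{eq:characteristics} it grows like $\ee^{t/2}$); their purpose is that the drift generated by It\^o's formula cancels against $\partial_t \langle M_{[1,k],t}A_k\rangle$ (Lemma~\ref{lem:cancM}). More importantly, the quadratic variation of the martingale in \eqref{eq:flowka} is not a double sum of products of shorter chains over cutting points -- those are the drift terms $\langle G_{[i,j]}\rangle\langle G_{[j,i]}\rangle$ -- but a single chain of length $2k$, see \eqref{eq:quadvarexp}. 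Splitting it into two chains of length $k$ via the Schwarz-type reduction inequality \eqref{eq:redin} costs a factor $N$, so the single Gronwall/BDG pass you describe cannot close at the optimal size $N^{-1}\prod_i\vertiii{A_i}_{2k,\ell}$; it only yields the weaker bound with an extra $\sqrt{N\ell}$. The paper resolves this with a two-stage bootstrap: first the quantities $\Psi_k$ of \eqref{eq:defpsi} are shown to satisfy $\Psi_k\prec 1$ (i.e.\ $\Phi_k\prec\sqrt{N\ell_t}$) via the master inequalities of Proposition~\ref{pro:masterinI} and the iteration Lemma~\ref{lem:iteration}, and only then, in Proposition~\ref{pro:masterinII}, is the quadratic variation re-estimated \emph{without} splitting, using the a priori bound on chains of length up to $2k$, which removes the loss. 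Your H\"older bookkeeping alone, however carefully calibrated, does not substitute for this structure, and this is the main gap.

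A second, smaller but genuine, issue is the GFT step. It is not a routine ``contract two entries and invoke the induction hypothesis'' transfer: in the tracial norms you are tracking, individual Lindeberg replacements can be too large, and the argument only closes because one gains from the summation over the $N^2$ replacement positions (the mechanism of \eqref{eq:d=1exampleCase2SUMSchatten}, in the spirit of the toy bound \eqref{ex}). This replacement procedure is also exactly where the isotropic law is needed -- isotropic chains arise along the Lindeberg swaps -- not in the quadratic variation of averaged chains, which is handled inside the averaged hierarchy by \eqref{eq:redin}. Finally, note that the paper needs only one zig and one zag: by Lemma~\ref{lem:propchar} the flow travels from distance of order one to the local scale in a single time $T\sim 1$, so no iteration over scales (depending on $\epsilon$) is required.
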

Notice that, in the bulk regime, $\ell$ essentially agrees with $\eta := \min_{i \in [k]} |\Im z_i|$, since $\rho_i \sim 1$. However, for \eqref{eq:mainresult} to be valid in any regime, the standard condition $N \eta \gg 1$ for the local law in the bulk needs to be replaced by $ N \ell \gg 1$. This ensures that we are at the \emph{mesoscopic scale}, i.e.~there are many eigenvalues in a local window of size $\eta_i$ around each $\Re z_i$.

As already mentioned in the Introduction around \eqref{3est}, Theorem~\ref{thm:main} unifies and improves 
the previous local laws (in the bulk spectrum) with an error term involving either the
 operator norm (see \cite[Eq.~(2.11a) in Theorem 2.5]{multiG}) or Hilbert-Schmidt norm (see \cite[Theorem~2.2]{A2}).
 This follows by estimating (in the relevant $\ell \lesssim 1$ regime)
\begin{equation} \label{eq:SchattenNormHSav}
\vertiii{A_i}_{2k,\ell} \lesssim \begin{cases}
\dfrac{ \Vert A_i \Vert}{\ell^{1/2}} \qquad &\text{for \  \cite[Eq.~(2.11a) in Theorem 2.5]{multiG}} \\[2mm]
(N\ell)^{\frac{k-1}{2k}}  \dfrac{\langle |A_i|^2 \rangle^{1/2}}{\ell^{1/2}} \qquad & \text{for \ \cite[Theorem~2.2]{A2}}
\end{cases}
\end{equation}
in \eqref{eq:mainresult} for $N \ell > 1$ and every $i \in [k]$ by means of elementary inequalities. 

 \begin{remark}[Optimality] \label{rmk:opt}
	The bound \eqref{eq:mainresult} is optimal in case that $A_i = A$ for all $i \in [k]$ and $z_i \in \{z, \bar{z}\}$ for some $z \in \C\setminus \R$ with $|\Im z| \ge N^{-1+\epsilon}$ in the bulk. For $W$ being GUE, this can easily be checked by spectral decomposition of the resolvents and using Weingarten calculus \cite{Weingarten} for the Haar-distributed eigenvectors. The so-called ``ladder diagram" gives the (first) Hilbert-Schmidt term in the estimate
	\begin{equation} \label{eq:Weingarten}
\left| \langle (GA)^k \rangle - \langle  M_{[1,k]}A \rangle \right| \prec \frac{\vertiii{A}_{2k, \ell}^k}{N} \sim \frac{\langle |A|^2\rangle^{k/2}}{N \ell^{k/2}} + \frac{\langle |A|^{2k}\rangle^{1/2}}{N \ell^{1/2}}\,. 
	\end{equation}
Interestingly, in some regimes the second term in \eqref{eq:Weingarten} (non ladder diagram) gives the main contribution, defying the general belief that always the ladder diagrams are the leading terms.
\end{remark}

\begin{remark}[Extensions] \label{rmk:extend}
In Theorem \ref{thm:main}, each $G$ may also be replaced by a product of $G$'s or even $|G|$'s (absolute value). We refrain from stating these results explicitly, as they can be easily obtained from appropriate integral representations (see \eqref{eq:intrep} and \eqref{eq:absGintrep} below). We formulate only the following example for $k=2$ and identical observables $B_1=B_2$ for illustration.
 Let $B \in \C^{N \times N}$ be an arbitrary (i.e.~not necessarily traceless) deterministic matrix. Then, decomposing $B = \mathring{B} + \langle B \rangle$, we have
\begin{equation} \label{eq:avexample}
	\begin{split}
		\langle G_1BG_2B\rangle&=\langle B\rangle^2\langle G_1G_2\rangle+\langle B\rangle\langle G_1G_2\mathring{B}\rangle+\langle B\rangle\langle G_2G_1\mathring{B}\rangle+\langle G_1\mathring{B}_1G_2\mathring{B}_2\rangle \\
		&= \frac{m_1m_2 \langle B\rangle^2}{1-m_1m_2}+m_1m_2\langle \mathring{B}^2\rangle +\mathcal{O}_\prec\left(\frac{|\langle B\rangle|^2}{N\ell^2}+\frac{|\langle B\rangle|\langle|\mathring{B}|^2\rangle^{1/2}}{N\ell^{3/2}}+\frac{\langle|\mathring{B}|^2\rangle}{N\ell}+\frac{\langle|\mathring{B}|^4\rangle^{1/2}}{N\ell^{1/2}}\right)\,. 
	\end{split}
\end{equation}
The statement for different observables $B_1, B_2$ is analogous. 
 \end{remark}

 \subsection{Out-of-time-ordered correlators: Proof of Theorem~\ref{thm:OTOC}} \label{subsec:OTOCpf}
 In order to prove Theorem \ref{thm:OTOC}, we distinguish the following three time regimes, 
 \begin{equation} \label{eq:regimes}
\mathrm{(i)} \  |t|<1 \,,\qquad \mathrm{(ii)} \ 1 \le |t|\le N^{(1-\epsilon)/2}\,, \quad \text{and} \quad \mathrm{(iii)} \  |t| >N^{(1-\epsilon)/2}
 \end{equation}
 for some small fixed $\epsilon > 0$ from Theorem \ref{thm:OTOC}. In the following, we focus on 
 the most complicated case (ii) and discuss the other two cases briefly at the end of this section. 
 Since the arguments in this section are fairly standard, we will leave some irrelevant technical details to the reader. 

 For $1 \le |t| \le N^{(1-\epsilon)/2}$, we employ the integral representation
\begin{equation} \label{eq:eitWrep}
\ee^{\pm \ii t W} = \frac{1}{2 \pi \ii} \oint_{\Gamma} \ee^{\pm \ii tz} G(z) \dd z
\end{equation}
with  the contour
\begin{equation} \label{eq:contourOTOC}
\Gamma \equiv \Gamma_{t, R} := \partial \big( [-R,R] \times \ii [-|t|^{-1}, |t|^{-1}] \big)
\end{equation}
parametrized counterclockwise and the parameter $R$ chosen as $R = N^\kappa$ for some arbitrarily small (but fixed) $\kappa > 0$. In this way, we can write the four-point part $\mathcal{F}_{A,B}(t)$ in \eqref{eq:OTOCparts} of the OTOC \eqref{eq:OTOC} as 
\begin{equation} \label{eq:fourpointint}
\mathcal{F}_{A,B}(t) = \left(\prod_{i \in [4]}\oint_{\Gamma} \frac{\dd z_i}{2 \pi \ii} \right) \ee^{\ii t(z_1 + z_3 - z_2 - z_4)} \langle G(z_1) A G(z_2) B G(z_3) A G(z_4)B \rangle \,. 
\end{equation}

For the part, where \emph{all} $z_i$'s, $i \in [4]$, run on the horizontal parts of the contour $\Gamma$, we can replace $\langle G(z_1) A ... G(z_4)B \rangle$ by $ \langle M(z_1, A, ... , z_4)B \rangle$ at the price of an error $\mathcal{O}_\prec(\mathcal{E})$ with $\mathcal{E} = \mathcal{E}_{A,B}(t)$ from \eqref{eq:errorterm} by means of Theorem \ref{thm:main} for $k=4$. Here, we used $\ell \gtrsim N^{-2 \kappa}|t|^{-2}$, which follows by $\rho(z_i) \gtrsim |\Im z_i|(1+\mathrm{dist}(z_i, [-2,2])^2)^{-1}$, and that $\kappa > 0$ is arbitrarily small and hence $N^{\kappa}$ can be absorbed into $\prec$. 

If \emph{all} the $z_i$'s run on the vertical parts of $\Gamma$, we employ the global law version of our main result, Proposition \ref{prop:initial} below, to replace $\langle G(z_1) A ... G(z_4)B \rangle$ by $ \langle M(z_1, A, ... , z_4)B \rangle$, now at the price of an error $\mathcal{O}_\prec\big(\langle A^8 \rangle^{1/4} \langle B^8 \rangle^{1/4}/N\big)$, which can easily be included in $\mathcal{O}_\prec(\mathcal{E})$. We point out that Theorem \ref{thm:main} cannot be used in this regime, since $\ell = 0$ when one of the $z_i$'s crosses the real axis. 

In the remaining cases, where some of the $z_i$'s run on the horizontal parts and others run on the vertical parts, we can no longer cleanly apply either the local or global law (Theorem \ref{thm:main} and Proposition~\ref{prop:initial}, respectively). Instead, we 
treat this situation by expanding around the case where all $z_i$ are on the horizontal 
 parts. More precisely, in case that, say, $z_1$ runs on the right vertical part of the contour $\Gamma$ and $z_2, ...., z_4$ are fixed on the horizontal parts, we employ analyticity of $G(z_1)$ away from $[-3,3]$ (since $\|W\|\le 2+\epsilon$ with very high probability). This enables us to write $g(z_1) := \langle G(z_1) A ... G(z_4)B \rangle$ at $z_1  = R + \ii q$ with $q \in [-|t|^{-1}, |t|^{-1}]$, by Taylor expansion around $\zeta := R + \ii |t|^{-1}$, as
\begin{equation} \label{eq:vertexpand}
g(R + \ii q) = \sum_{j=0}^{K}  \frac{(q-|t|^{-1})^j}{2 \pi \ii} \oint_{C_{1/(2|t|)}} \frac{g(w)}{(w - \zeta)^{j+1}} \dd w + \mathcal{O}\big( \widetilde{\mathcal{E}}\big)\,, \quad \widetilde{\mathcal{E}} := \frac{ \langle A^4\rangle^{1/2} \langle B^4\rangle^{1/2} }{R^{K+1} |t|^{K-2}}
\end{equation}
for any $K \in \N$ to be chosen below, where $C_{1/(2|t|)}$ is the circle of radius $1/(2|t|)$ centered around $\zeta$. In \eqref{eq:vertexpand}, we used that the $(K+1)^{\rm th}$ derivative of $g(z_1)$ on the vertical part of the contour is (deterministically) bounded as 
$$\big|g^{(K+1)}(z_1)\big| \lesssim \big|\big\langle \big(G(z_1)\big)^{K+1}A G(z_2) B G(z_3)A G(z_4)B \big\rangle\big| \lesssim \frac{ \langle A^4\rangle^{1/2} \langle B^4\rangle^{1/2} |t|^3}{R^{K+1}},
$$
 since $\Vert G(z_i) \Vert \le |t|$ for $i \in \{2,3,4\}$. Using the representation \eqref{eq:vertexpand}, we can now replace $g(w)$ by $h(w) :=  \langle M(w, A, ... , z_4)B \rangle$, again at the expense of an error $\mathcal{O}_\prec(\mathcal{E})$ by means of Theorem \ref{thm:main}. As one can easily see that $h$ is also analytic around $R+\ii q$ and satisfies the same relation as $g$ in \eqref{eq:vertexpand}, we find that $|g(R+ \ii q) - h(R + \ii q)| \prec \mathcal{E} + \widetilde{\mathcal{E}}$. Hence, in order to absorb $\widetilde{\mathcal{E}}$ into $\mathcal{E}$, it remains to choose $K$ in \eqref{eq:vertexpand} as $K := \lceil \kappa^{-1} \rceil$. 
 
 Therefore, since integrating the $\mathcal{O}_\prec(\mathcal{E})$-bound in \eqref{eq:fourpointint} only adds some $N^\kappa$-factors from the length of $\Gamma$ (note that   $|\ee^{\ii t(z_1 + z_3 - z_2 - z_4)}| \lesssim 1$), which can easily be absorbed into $\prec$, we find that 
 \begin{equation} \label{eq:fourpointintM}
 	\mathcal{F}_{A,B}(t) = \left(\prod_{i \in [4]}\oint_{\Gamma} \frac{\dd z_i}{2 \pi \ii} \right) \ee^{\ii t(z_1 + z_3 - z_2 - z_4)} \langle M(z_1, A, z_2, B, z_3, A,z_4) B \rangle  + \mathcal{O}_\prec \big(\mathcal{E}\big)\,. 
 \end{equation}
 
 Similarly, decomposing $A^2 = (A^2 - \langle A^2 \rangle) + \langle A^2 \rangle$ and analogously for $B$, the two-point part $\mathcal{D}_{A,B}(t)$ from \eqref{eq:OTOCparts} is given by, again for times $1 \le |t| \le N^{(1-\epsilon)/2}$,
 \begin{equation} \label{eq:twopointintM}
\mathcal{D}_{A,B}(t) = \langle A^2 \rangle \langle B^2 \rangle + \left(\prod_{i \in [2]}\oint_{\Gamma} \frac{\dd z_i}{2 \pi \ii}\right)\ee^{\ii t(z_1- z_2)} \langle M(z_1, (A^2 - \langle A^2 \rangle), z_2) (B^2 - \langle B^2 \rangle) \rangle + \mathcal{O}_\prec(\mathcal{E}) \,. 
 \end{equation}

In the other two regimes, $|t|<1$ and $|t|> N^{(1-\epsilon)/2}$, we follow the above arguments for $1 \le |t| \le N^{1/2-\epsilon}$, but replace the contour $\Gamma$ from \eqref{eq:contourOTOC} by\footnote{We choose the first contour only for $t \neq 0$. If $t = 0$, the lhs.~of \eqref{eq:OTOCmain} carries no randomness, and we find $\mathcal{C}_{A,B} (0) = \langle A^2 B^2 \rangle - \langle ABAB \rangle$.} 
\begin{equation*}
\{z \in \C : \mathrm{dist}(z, [-2,2]) = |t|^{-1}\} \quad \text{and} \quad \partial \big( [-R,R] \times \ii [-N^{(1-\epsilon)/2}, N^{(1-\epsilon)/2}] \big)\,, 
\end{equation*}
respectively. This results in error terms in identities analogous to \eqref{eq:fourpointintM}--\eqref{eq:twopointintM}, that can easily be seen to be bounded by $\mathcal{O}_\prec(\mathcal{E})$, just as before, by application of Proposition \ref{prop:initial} and Theorem \ref{thm:main}, respectively. 

It remains to explicitly compute the deterministic terms in \eqref{eq:fourpointintM}--\eqref{eq:twopointintM}, which, by using the formulas \eqref{eq:partrdef}--\eqref{msc dd}, straightforwardly results in the expression given in \eqref{eq:OTOCmain}. 
This concludes the proof of Theorem~\ref{thm:OTOC}. \qed

\section{Zigzag strategy: Proof of Theorem \ref{thm:main}} \label{sec:proof}

In this section we prove our main technical result from Section~\ref{sec:mainSch}, the multi-resolvent local law in Theorem \ref{thm:main}. Its proof is conducted via the \emph{characteristic flow method} \cite{AdhLan, Bourgade2021, HuangLandon, AdhiHuang, LandSos, AggaHuang} followed by a \emph{Green function comparison} (GFT) argument. A combination of these tools, which we coin the \emph{Zigzag strategy},  has first been used in \cite{2210.12060, edgeETH, Gumbel}. It consists of the following three steps: 
\begin{itemize}
\item[\bf 1.] \textbf{Global law.} Proof of a multi-resolvent \emph{global law}, i.e. for spectral parameters ``far away" from the spectrum, $\min_j \mathrm{dist}(z_j, [-2,2]) \ge \delta$ for some small $\delta > 0$ (see Proposition \ref{prop:initial} below). 
\item[\bf 2.] \textbf{Characteristic flow.} Propagate the global law to a \emph{local law} by considering the evolution of the Wigner matrix $W$ along the Ornstein-Uhlenbeck flow 
\begin{equation}
	\label{eq:OUOUOU}
	\dd W_t=-\frac{1}{2}W_t\dd t+\frac{\dd B_t}{\sqrt{N}},  \qquad W_0=W\,,
\end{equation}
with $B_t$ a standard real symmetric or complex Hermitian Brownian motion, thereby introducing an order one Gaussian component (see Proposition \ref{prop:zig}). The spectral parameters of the resolvents evolve from the global regime to the local regime according to the \emph{characteristic (semicircular) flow} 
\begin{equation}
	\label{eq:chardef}
	\partial_t z_{i,t}=-m(z_{i,t})-\frac{z_{i,t}}{2}\,.
\end{equation}
The simultaneous effect of these two evolutions is a key cancellation of two large terms.
\item[\bf 3.] \textbf{Green function comparison.} Remove the Gaussian component by a Green function comparison (GFT) argument (see Proposition \ref{prop:zag}).
\end{itemize}

As the first step, we have the following global law, the proof of which is completely analogous to the proofs presented in \cite[Appendix B]{multiG}, \cite[Appendix A]{A2}, and \cite[Appendix A]{edgeETH}, and so omitted. Proposition \ref{prop:initial} is stated for a general deterministic matrix $B$ since the traceless condition plays no role in this case. 
\begin{proposition}[Step 1: Global law]
\label{prop:initial}
Let $W$ be a Wigner matrix satisfying Assumption~\ref{ass:entries}, and fix $k \in \N$ and $\delta>0$.  Consider spectral parameters $z_1, ... , z_{k} \in \C \setminus \R$, the associated resolvents $G_j = G(z_j) := (W-z_j)^{-1}$, and  deterministic matrices $B_1, ... , B_k \in \C^{N \times N}$. 
Then, uniformly in spectral parameters satisfying $d:=\min_{j \in [k]}\mathrm{dist}(z_j, [-2, 2])\ge \delta$ and deterministic matrices $B_1, ... , B_k$, it holds that
\begin{equation}
\label{eq:mainresultin}
\left| \langle G_1 B_1 ... G_k B_k \rangle - \langle {M}_{[1,k]}B_k \rangle \right| \prec 
\frac{1}{N} \prod_{i \in [k]} \vertiii{B_i}_{2k, d}\,. 
	\end{equation}
\end{proposition}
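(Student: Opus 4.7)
The plan is to prove Proposition~\ref{prop:initial} by induction on the chain length $k$, using the resolvent identity $WG_j = I + z_j G_j$ combined with a cumulant expansion (the \emph{underline trick}) to derive a closed recursion for $\langle G_1 B_1 \cdots G_k B_k\rangle$ in terms of shorter chains. The base case $k=1$ is the classical single-resolvent global law $|\langle (G_1 - m_1) B_1 \rangle| \prec \langle |B_1|^2\rangle^{1/2}/N$, valid in the regime $d \ge \delta$. The global regime is substantially easier than the local regime handled by Theorem~\ref{thm:main}, because the bound $\|W\| \le 2+\epsilon$ (with very high probability) together with $d \ge \delta$ yields $\|G_j\| \le C_\delta$ with very high probability, which I use freely in the error analysis.

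First I would set up the master equation by applying $WG_1 = I + z_1 G_1$ to the first resolvent, yielding
\begin{equation*}
z_1 \langle G_1 B_1 G_2 B_2 \cdots G_k B_k\rangle = -\langle B_1 G_2 B_2 \cdots G_k B_k\rangle + \langle W \, G_1 B_1 G_2 \cdots G_k B_k\rangle\,,
\end{equation*}
and then apply the cumulant expansion $\E[W_{ab} f(W)] = \sum_{p \ge 1} \kappa_{ab}^{(p+1)}\E[\partial^p_{ab} f]/p!$ to the last trace (after subtracting and controlling its centered \emph{underlined} version via a high-moment bound). The leading $p=1$ contribution produces a self-energy term $-\langle G_1\rangle \langle G_1 B_1\cdots G_k B_k\rangle$ together with \emph{splitting} terms of the form $\langle G_1 B_1 \cdots G_\ell\rangle \langle G_\ell B_\ell \cdots G_k B_k\rangle$ for $\ell = 2,\ldots,k$. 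Substituting $\langle G_1\rangle = m_1 + O_\prec(N^{-1})$ and using $z_1 + m_1 = -1/m_1$ gives a recursion
\begin{equation*}
\langle G_1 B_1 \cdots G_k B_k\rangle = m_1 \langle B_1 G_2 \cdots G_k B_k\rangle - m_1\sum_{\ell=2}^{k} \langle G_1 B_1 \cdots G_\ell\rangle \langle G_\ell B_\ell \cdots G_k B_k\rangle + \mathcal{R}\,,
\end{equation*}
where $\mathcal{R}$ collects higher-cumulant contributions, the centered $\underline{W\cdot}$-term, and the $O_\prec(N^{-1})$ single-resolvent error. A direct inspection of \eqref{eq:Mdef} (based on the non-crossing partition structure and the Kreweras complement) shows that $\langle M_{[1,k]} B_k\rangle$ satisfies exactly the same recursion with $\mathcal{R}=0$; subtracting the two and inserting the induction hypothesis for each shorter chain on the right-hand side reduces everything to bounding $\mathcal{R}$.

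For the Schatten-norm bookkeeping in $\mathcal{R}$, each term is a product of at most two normalized traces in which each $B_i$ appears at most twice across the whole product. Using $\|G_j\| \le C_\delta$ with very high probability and repeated Hölder in Schatten norms, such a trace is estimated by $C_\delta^m \prod_i \langle |B_i|^{p_i}\rangle^{1/p_i}$ for suitable exponents $p_i$. The uniform exponent $2k$ appearing in $\vertiii{\cdot}_{2k,d}$ is dictated by the longest chain produced by a single splitting step (of length at most $2k$), and the Schatten monotonicity~\eqref{eq:plSchattenrel} then bounds all resulting terms simultaneously by $\prod_i \vertiii{B_i}_{2k,d}$. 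The prefactor $1/N$ in the final bound comes from the second cumulant $\kappa^{(2)}_{ab} = 1/N$ and the $1/N$-improvement in the single-resolvent global law.

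The hardest part will be ensuring that the higher-cumulant terms ($p\ge 2$) in $\mathcal{R}$ fit inside the claimed Schatten norm error, since they a priori generate longer products of $G$'s and $B$'s. The saving comes from the moment bound $|\kappa^{(p+1)}_{ab}| \lesssim N^{-(p+1)/2}$ on the Wigner entries: each additional derivative costs a factor $N^{-1/2}$ but produces only a bounded number of extra $G$'s and $B$-insertions, so after applying Hölder to the longer chains that appear, every higher cumulant is suppressed by an additional power of $N^{-1/2}$ relative to the $p=1$ term. This step follows essentially as in \cite[Appendix~B]{multiG}, but with all operator norms of the $B_i$ systematically replaced by $\vertiii{B_i}_{2k,d}$ via Hölder, which is the only real departure from the existing blueprint.
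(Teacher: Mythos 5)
Your proposal is correct and follows essentially the same route as the paper: the paper omits the proof of Proposition~\ref{prop:initial}, stating it is completely analogous to the global-law arguments in \cite[Appendix~B]{multiG}, \cite[Appendix~A]{A2} and \cite[Appendix~A]{edgeETH}, which are exactly the cumulant-expansion/recursion scheme you describe (resolvent identity, self-energy plus splitting terms matching the recursion for $M_{[1,k]}$, induction on $k$, and crude $\Vert G\Vert\le C_\delta$ bounds in the regime $d\ge\delta$), with the only adaptation being the Schatten--H\"older bookkeeping you identify.
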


Next, using Proposition \ref{prop:initial} as an input, we derive Theorem~\ref{thm:main} for Wigner matrices which have an order one Gaussian component, as formulated in Proposition~\ref{prop:zig}. 
For this purpose we consider the evolution of the Wigner matrix $W$ along the Ornstein-Uhlenbeck flow \eqref{eq:OUOUOU} and define its resolvent $G_t(z):=(W_t-z)^{-1}$ with $z_i\in\C\setminus\R$. Even if not stated explicitly we will always consider this flow only for short  times, i.e. for $0\le t\le T$,  where the maximal time $T$ is smaller than some ${\gamma}>0$. Note that the first two moments of $W_t$ are preserved along the flow \eqref{eq:OUOUOU}, and hence the self-consistent density of states of $W_t$ is unchanged;
it remains the standard semicircle law. We now want to compute the deterministic approximation to an alternating product of resolvents and deterministic matrices $A_1, A_2, ...$ with trace zero
\begin{equation}
\label{eq:quantnochar}
{G}_t(z_1)A_1 {G}_t(z_2) A_2 {G}_t(z_3)A_3\dots,
\end{equation}
and have a very precise estimate of the error term.

In fact, we will let the spectral parameters evolve with time with a carefully chosen equation
that will conveniently cancel some error terms in the time evolution of \eqref{eq:quantnochar}. The corresponding equation will be
the characteristic equation for the semicircular flow, i.e. given by the first order ODE  \eqref{eq:chardef} 
(see \cite[Figure~1]{edgeETH} for an illustration of this flow).  \nc
We remark that, along the characteristics we have
\begin{equation}
\label{eq:characteristics}
\partial_t m(z_{i,t})=-\partial_z m(z_{i,t}) \left(m(z_{i,t})+\frac{z_{i,t}}{2}\right)=-\partial_z m(z_{i,t})\left(-\frac{1}{2m(z_{i,t})}+\frac{m(z_{i,t})}{2}\right)=\frac{m(z_{i,t})}{2},
\end{equation}
where in the last two equalities we used the defining equation $m(z)^2 + zm(z)+1=0$ of the Stieltjes transform of the semicircular law. In particular, this implies that $\rho_{i,s}\sim \rho_{i,t}$ for any $0\le s\le t$, where we denoted $\rho_{i,t} := \pi^{-1} |\Im m(z_{i,t})|$. In contrast to that, the behavior of the $\eta_{i,t} := |\Im z_{i,t}|$ depends on the regime: in the bulk $\eta_{i,t}$ decreases linearly in time with a speed of order one, at the edge the decay is still linear but with a speed depending on the size of the local density of states. By standard ODE theory we obtain the following lemma:
 \begin{lemma}[see Lemma~3.2 in \cite{edgeETH}]
 	\label{lem:propchar}
 	Fix an $N$--independent $\gamma>0$, fix $0<T< \gamma \nc$, and pick $z\in \C\setminus \R$.
 	Then there exists an initial condition $z_0$ such that the solution $z_t$ of  \eqref{eq:chardef} with this initial condition $z_0$ satisfies $z_T=z$. Furthermore, there exists a constant $C>0$ such that \nc $\mathrm{dist}(z_0,[-2,2]) \ge C\nc T$.
 \end{lemma}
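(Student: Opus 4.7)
The strategy is to exploit the conservation law $\partial_t m(z_t) = m(z_t)/2$ derived in \eqref{eq:characteristics} along the characteristic flow \eqref{eq:chardef}: integrated, it gives $m(z_t) = \ee^{t/2} m(z_0)$ for any classical solution. Combined with the defining quadratic $m(w)^2 + w\,m(w) + 1 = 0$, equivalently $w = -m(w) - 1/m(w)$, this reduces the ODE problem to an algebraic inversion and furnishes an explicit formula for the flow map.

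\textbf{Existence.} Given $z \in \C \setminus \R$, I set $\widehat m := \ee^{-T/2} m(z)$ and propose the candidate initial condition $z_0 := -\widehat m - 1/\widehat m$. The nontrivial task is to check that $\widehat m = m(z_0)$, i.e., that $\widehat m$ is the Stieltjes-transform root (as opposed to the companion root $1/\widehat m$) of the quadratic $w^2 + z_0 w + 1 = 0$. This is done by verifying that $\Im \widehat m$ and $\Im z_0$ have the same sign: $\Im \widehat m = \ee^{-T/2}\Im m(z)$ has the sign of $\Im z$, while from $z_0 = -\widehat m - 1/\widehat m$ one computes $\Im z_0 = \Im \widehat m\,(1/|\widehat m|^2 - 1)$, which has the sign of $\Im \widehat m$ as soon as $|\widehat m| < 1$; in turn, $|\widehat m| = \ee^{-T/2}|m(z)| < 1$ follows from the standard bound $|m(w)| \le 1$ for $w \in \C \setminus \R$ (most easily verified from $m(w) = (-w + \sqrt{w^2-4})/2$ with the branch fixed by $m(w) \sim -1/w$ at infinity). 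Setting then $m(z_t) := \ee^{(t-T)/2} m(z)$ and $z_t := -m(z_t) - 1/m(z_t)$ for $t \in [0,T]$ produces a smooth curve in $\C \setminus \R$ solving \eqref{eq:chardef}, with $z_T = z$ by the injectivity of $m$ on $\C^+$.

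\textbf{Distance bound and main obstacle.} Taking $\Im z > 0$ without loss of generality and parametrizing $m(z) = r\ee^{\ii \theta}$ with $r \in (0,1]$ and $\theta \in (0,\pi)$, a direct computation gives
\[
\Re z_0 = -\left(\ee^{-T/2} r + \ee^{T/2}/r\right)\cos\theta, \qquad \Im z_0 = \left(\ee^{T/2}/r - \ee^{-T/2} r\right)\sin\theta.
\]
These formulas exhibit the image of the spectral boundary $\{m(x + \ii 0) : x \in [-2,2]\}$ (the unit semicircle) under the flow as the upper half of the ellipse with semi-axes $2\cosh(T/2) \sim 2$ and $2\sinh(T/2) \sim T$, which strictly surrounds $[-2,2]$ with $z_0$ lying in its exterior. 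In the bulk regime $\Im m(z) = r\sin\theta \gtrsim 1$, one reads off $\Im z_0 \ge 2\sinh(T/2)\cdot \Im m(z) \gtrsim T$ immediately, hence $\mathrm{dist}(z_0,[-2,2]) \ge \Im z_0 \gtrsim T$. The main technical subtlety arises in the edge regime $z \approx \pm 2$ (corresponding to $\theta \approx 0$ or $\pi$, where $\sin\theta$ and hence $\Im z_0$ are small): here the stated bound must be extracted by carefully combining the vertical shift $\Im z_0$ with the horizontal displacement $|\Re z_0| - 2$ read off from the first formula, and the whole scheme rests on the semicircle-specific inequality $|m| \le 1$ without which the branch verification in the existence step would already fail.
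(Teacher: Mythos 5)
Your existence argument is correct and is essentially the standard route behind the cited reference: the conservation law \eqref{eq:characteristics} gives $m(z_t)=\ee^{(t-T)/2}m(z)$, and inverting through $w=-m(w)-1/m(w)$ with the branch check $|\widehat m|<1$ (which indeed forces $\widehat m=m(z_0)$, since the two roots of $w^2+z_0w+1=0$ are $\widehat m$ and $1/\widehat m$ and only one has modulus below one) produces the flow explicitly; the paper itself does not spell this out and simply quotes \cite{edgeETH}.

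The distance bound, however, has a genuine gap exactly where you flag it, and the deferred ``careful combination'' cannot deliver what you seem to be aiming for. The points $\pm 2$ are fixed points of \eqref{eq:chardef} (the velocity $-m(z)-z/2$ vanishes there and degrades like the square root of the distance to the edge), so the backward flow barely moves near the edges: taking $z\to \pm2$ one gets $m(z)\to \mp1$, hence $z_0\to \mp\widehat m-1/\widehat m=\pm 2\cosh(T/2)$ and $\mathrm{dist}(z_0,[-2,2])\to 2\cosh(T/2)-2=4\sinh^2(T/4)\asymp T^2$, not $\asymp T$. In fact your own formulas already contain the complete and optimal answer: $z_0$ lies on the Joukowski image of the circle of radius $\rho=\ee^{-T/2}|m(z)|\le \ee^{-T/2}$, i.e.\ on an ellipse with foci $\pm2$ and semi-major axis $\rho+1/\rho\ge 2\cosh(T/2)$, and the distance from such a confocal ellipse to the focal segment $[-2,2]$ is minimized at the major-axis vertices, where it equals $\rho+1/\rho-2$. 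This gives the uniform-in-$z$ bound $\mathrm{dist}(z_0,[-2,2])\ge 2\cosh(T/2)-2\gtrsim T^2$, which is all the paper uses (it only needs a fixed, $N$-independent $\delta=CT>0$ for the global law, and $T$ is a fixed order-one constant), but it means the constant $C$ in ``$\ge CT$'' must be allowed to depend on $T$; a bound $\ge CT$ with $C$ independent of $T$ is simply false near the edge. So you should replace the unproven edge discussion by this one-line vertex computation (or state the conclusion as $\gtrsim T^2$), rather than promising to upgrade the horizontal displacement to order $T$.
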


The spectral parameters evolving by
  \eqref{eq:OUOUOU}  will have the property % to ensure 
  that 
  \begin{equation}
{G}_t(z_{1,t})A_1 \dots A_{k-1}{G}_t(z_{k,t})-M_{[1,k],t}\approx {G}_0(z_{1,0})A_1\dots A_{k-1}{G}_0(z_{k,0})-M_{[1,k],0},
\end{equation}
with $M_{[1,k],t}:=M(z_{1,t},A_1,\dots, A_{k-1},z_{k,t})$, for any $0\le t \le T$. Note that the deterministic approximation $M_{[1,k],t}$ depends on time only through the time dependence of the spectral parameters, 
 the deterministic approximation of \eqref{eq:quantnochar} with fixed spectral parameters
  does not depend on time, i.e. it is unchanged along the whole flow \eqref{eq:OUOUOU}. 

\begin{proposition}[Step 2: Characteristic flow] \label{prop:zig}
Fix any $\epsilon, \gamma>0$, a time $0\le T\le \gamma$, and $K \in \N$. Consider $z_{1,0},\dots,z_{K,0}\in\C \setminus \R$ as initial conditions to the solution $z_{j,t}$ of \eqref{eq:chardef} for $0 \le t \le T$ and define $G_{j,t}:=G_t(z_{j,t})$ as well as $\eta_{j,t}:=|\Im z_{j,t}|$ and $\rho_{j,t}:=\pi^{-1}|\Im m(z_{j,t})|$. 
 
 Let $k \le K$, define $\ell_t := \min_{j \in [k]} \eta_{j,t} \rho_{j,t}$ and recall \eqref{eq:plSchatten}. Then, assuming that 
\begin{equation}
\label{eq:inass0}
\left| \langle {G}_{1,0} A_1 ... {G}_{k,0} A_{k} \rangle - \langle {M}_{[1,k],0}A_k \rangle \right|\prec \frac{1}{N}\prod_{i \in [k]} \vertiii{A_i}_{2k, \ell_0} \,, 
\end{equation}
holds uniformly for any $k \le K$, any choice of deterministic traceless $A_1, ... , A_k$ and any choice of $z_{i,0}$'s such that $N \ell_0 \ge N^\epsilon$ and $\max_{j\in [k]} |z_{j,0}| \le N^{1/\epsilon}$, then we have
\begin{equation}
\label{eq:flowg}
\left| \langle {G}_{1,T} A_1 ... {G}_{k,T} A_k\rangle - \langle {M}_{[1,k],T}A_k \rangle \right|\prec \frac{1}{N}\prod_{i \in [k]} \vertiii{A_i}_{2k, \ell_T}
\end{equation}
for any $k \le K$, again uniformly in traceless matrices $A_i$ and in spectral parameters satisfying $ N \ell_T  \ge N^{\epsilon}$ and $\max_{j\in [k]} |z_{j,T}| \le N^{1/\epsilon}$.
\end{proposition}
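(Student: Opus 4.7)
The plan is to derive an SDE for the quantity
$$L_t^{(k)}(\bs{A}) := \langle G_{1,t} A_1 \cdots G_{k,t} A_k\rangle - \langle M_{[1,k],t} A_k\rangle$$
along the joint evolution \eqref{eq:OUOUOU}--\eqref{eq:chardef}, then integrate from $0$ to $T$ using \eqref{eq:inass0} as input. First I would compute the Itô differential of $G_{i,t} = (W_t - z_{i,t})^{-1}$. Since $\partial_t G_{i,t} = -G_{i,t}(\dd W_t - \dot z_{i,t}\,\dd t)G_{i,t}$ plus a second-order Itô correction equal to $N^{-1}\langle G_{i,t}\rangle G_{i,t}^2\,\dd t$, and since $G_{i,t} W_t G_{i,t} = G_{i,t} + z_{i,t} G_{i,t}^2$, the choice \eqref{eq:chardef} of the characteristics produces the crucial cancellation
$$\dd G_{i,t}\bigm|_{\mathrm{drift}+\mathrm{Ito}} = \tfrac{1}{2} G_{i,t}\,\dd t + \big(\langle G_{i,t}\rangle - m(z_{i,t})\big) G_{i,t}^2\,\dd t,$$
together with a martingale contribution $-G_{i,t}(\dd B_t/\sqrt{N})G_{i,t}$. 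The scalar $\frac{1}{2}$-pieces from the $k$ resolvents combine with the time evolution of $M_{[1,k],t}$ induced by \eqref{eq:characteristics} and cancel up to subleading order.

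Second, I would expand the drift of $L_t^{(k)}$ into three classes of terms and bound each via the inductive hypothesis (induction on the chain length $k$, where the base case $k=1$ is the standard single-resolvent local law after reduction via contour integration). Class one consists of single-resolvent errors $\langle G_{i,t}\rangle - m(z_{i,t})$ multiplied by $\langle G_{1,t}A_1\cdots G_{i,t}^2 A_i \cdots G_{k,t}A_k\rangle$, which split into deterministic leading contributions absorbed into $\partial_t M_{[1,k],t}$ and remainders controlled using the induction hypothesis at length $k$ plus the $M$-bound of Lemma~\ref{lem:Mbound}. Class two consists of the second-order Itô cross-terms of the form $N^{-1}\langle (G_{i,t}A_i\cdots G_{j,t}) (G_{j,t}A_j \cdots G_{i,t})\rangle$, which are genuine chains of length $k$; these are bounded by the induction hypothesis plus the deterministic $M$-bound, yielding a contribution that matches the right-hand side of \eqref{eq:flowg}. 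Class three is the residual nonlinear term quadratic in $L_t^{(j)}$'s, which is small by a bootstrap argument on the stochastic domination.

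Third, I would estimate the martingale part $\dd \mathcal{M}_t^{(k)} = -\sum_{i\in[k]}\langle G_{1,t}A_1\cdots G_{i,t}\,\dd B_t\,G_{i,t} A_i\cdots G_{k,t} A_k\rangle$ by its quadratic variation. Itô isometry expresses $\dd[\mathcal{M}^{(k)}]_t$ as a sum of products of resolvent chains of length at most $2k$ with the same traceless observables $A_1,\ldots,A_k$ appearing as pairs; applying the $M$-bound together with the chain-length-$2k$ induction (in the regime $N\ell_t\ge N^\epsilon$, which is preserved by \eqref{eq:chardef}) shows $\int_0^T \dd[\mathcal{M}^{(k)}]_s \prec N^{-2}\prod_i \vertiii{A_i}_{2k,\ell_T}^2$, so the martingale itself is dominated by $N^{-1}\prod_i \vertiii{A_i}_{2k,\ell_T}$ via the Burkholder--Davis--Gundy inequality, followed by a standard grid/union-bound argument to obtain a uniform estimate in $t\in[0,T]$.

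Finally, a Grönwall-type integration on $[0,T]$ closes the estimate: because $\ell_t$ only changes by a bounded factor on the $O(1)$ time interval (by Lemma~\ref{lem:propchar} and \eqref{eq:characteristics}), the time integrals of the drift contributions produce at most $\log N$ corrections which are absorbed into $\prec$, and the resulting bound on $L_T^{(k)}(\bs{A})$ is exactly \eqref{eq:flowg}. The main obstacle, relative to the operator-norm treatment in \cite{edgeETH,multiG}, is that Class two and the quadratic variation estimate in the martingale step must be bookkept with Schatten rather than operator norms. This requires repeatedly applying Hölder's inequality in Schatten norms of the form $\vertiii{AB}_{p,\ell}\le\vertiii{A}_{2p,\ell}\vertiii{B}_{2p,\ell}$ and invoking the inductive statement at chain length $2k$ with index $2k$ in the weighted norm; this is precisely what makes the exponent $2k$ (rather than $k$) appear on the right-hand side of \eqref{eq:mainresult}, and the bootstrap that closes the induction must be set up consistently in this exponent.
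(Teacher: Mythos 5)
Your setup (It\^o formula along the characteristics, cancellation via the evolution of $M_{[1,k],t}$, BDG for the martingale, Gr\"onwall-type integration with \eqref{etaint}) matches the paper's Section~\ref{sec:optA}, but the proposal has a genuine gap exactly at the step you treat most casually: the quadratic variation of the martingale. By the Ward identity it contains resolvent chains of length $2k$ (e.g.\ $\langle \Im G_1 A_1\cdots \Im G_1 (A_1\cdots)^*\rangle$) evaluated at intermediate times $t\in(0,T)$. You propose to control these by ``invoking the inductive statement at chain length $2k$'', but no such induction is available: the hypothesis \eqref{eq:inass0} lives only at $t=0$, and the coupling goes \emph{upward} in chain length ($k$ needs $2k$, which would need $4k$, etc.), so an induction on $k$ does not close. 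This is precisely the structural obstacle the paper addresses with two ingredients you do not have. First, the reduction inequality \eqref{eq:redin} (spectral decomposition plus Schwarz), which splits a length-$2k$ chain into a product of two length-$k$ (or $k\pm1$) chains at the price of a factor $N$; this loss of $\sqrt{N\ell}$ forces a two-stage scheme. Second, the $\Psi_k/\Phi_k$ hierarchy: one first proves the weaker bound $\Psi_k(t)\prec 1$ (i.e.\ $\Phi_k(t)\prec\sqrt{N\ell_t}$) for all $k$ via the triangular system of master inequalities (Proposition~\ref{pro:masterinI}) closed by the iteration lemma (Lemma~\ref{lem:iteration}, with a parity device for odd $k$), and only then upgrades to the optimal $\Phi_k(t)\prec 1$ (Proposition~\ref{pro:masterinII}), where the already-established a-priori bound for lengths up to $2k$ is used to estimate the quadratic variation without the lossy splitting. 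Your claimed estimate $\int_0^T \dd[\mathcal{M}^{(k)}]_s\prec N^{-2}\prod_i\vertiii{A_i}_{2k,\ell_T}^2$ is what comes out only \emph{after} this second stage; asserting it directly is circular.

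Two smaller points. The drift bookkeeping is essentially right but imprecise: the cancellation via Lemma~\ref{lem:cancM} pairs the $\tfrac{k}{2}\langle G_{[1,k]}A_k\rangle$ term and the It\^o cross terms $\sum_{i<j}\langle G_{[i,j]}\rangle\langle G_{[j,i]}\rangle$ with $\partial_t\langle M_{[1,k],t}A_k\rangle$, and the remaining fluctuating pieces are shorter chains handled after linearizing adjacent resolvents by the contour representation \eqref{eq:intrep}; these do not require your ``Class three'' nonlinear bootstrap beyond what the master inequalities already encode. Also, the Schatten bookkeeping in the paper is done through the $\m_k/\s_k$ calculus of Lemma~\ref{lem:m-relnoIM} (super-multiplicativity, the doubling inequality \eqref{eq:3in}, monotonicity in time), not through a H\"older inequality of the form $\vertiii{AB}_{p,\ell}\le\vertiii{A}_{2p,\ell}\vertiii{B}_{2p,\ell}$; the doubling inequality is where the $(N\ell)^{1/2}$ loss of the reduction step is quantified, so it cannot be replaced by a purely norm-multiplicative argument.
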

The proof of Proposition~\ref{prop:zig} is given in Section~\ref{sec:optA}.
As the third and final step, we show that the additional Gaussian component introduced in Proposition~\ref{prop:zig} 
can be removed using a Green function comparison (GFT) argument at the price of a negligible error. 
The proof of this proposition is presented in Section~\ref{sec:GFT}.

\begin{proposition}[Step 3: Green function comparison] \label{prop:zag}
Let $H^{(\bm v)}$ and $H^{(\bm w)}$ be two $N \times N$ Wigner matrices with matrix elements given by the random variables $v_{ab}$ and $w_{ab}$, respectively, both satisfying Assumption \ref{ass:entries} and having matching moments up to third order,\footnote{This condition can easily be relaxed to being matching up to an error of size $N^{-2}$ as done, e.g., in \cite[Theorem~16.1]{EYbook}.} i.e.
\begin{equation} \label{eq:momentmatch}
	\E \bar{v}_{ab}^u v_{ab}^{s-u} = \E \bar{w}_{ab}^u w_{ab}^{s-u}\,, \quad s \in \{0,1,2,3\}\,, \quad u \in \{0,...,s\}\,. 
\end{equation}
Fix $K \in \N$ and consider spectral parameters $z_1, ... , z_{K} \in \C \setminus \R$ satisfying $ \min_j N \eta_j \rho_j  \ge N^{\epsilon}$ and $\max_j |z_j| \le N^{1/\epsilon}$ for some $\epsilon > 0$ and the associated resolvents $G_j^{(\#)} = G^{(\#)}(z_j) := (H^{(\#)}-z_j)^{-1}$ with $\# = \bm v, \bm w$. Pick traceless matrices $A_1, ... , A_K \in \C^{N \times N}$. 

Assume that, for $H^{(\bm v)}$, 
we have the following bounds (writing $G_j \equiv G_j^{(\#)}$ for brevity): For any $k \le K$, consider any subset of cardinality $k$ of the $K$ spectral parameters, and similarly, consider any subset of cardinality $k$ of the deterministic matrices. Relabeling both of them by $[k]$, setting $\ell := \min_{j \in [k]} \eta_j \rho_j$ and recalling \eqref{eq:plSchatten}, we have that
	\begin{equation} \label{eq:zagmultiG}
		\left| \langle {G}_1 A_1 ...{G}_k A_k \rangle - \langle {M}_{[1,k]}A_k \rangle \right| \prec \frac{1}{N}\prod_{i \in [k]} \vertiii{A_i}_{2k, \ell}
	\end{equation}
	uniformly in all choices of subsets of $z$'s and $A$'s.

Then, \eqref{eq:zagmultiG} also holds for the ensemble $H^{(\bm{w})}$, uniformly all choices of subsets of $z$'s and $A$'s. 
\end{proposition}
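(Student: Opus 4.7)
The plan is to run a standard Lindeberg-type Green function comparison (GFT) combined with the high-moment method, using the a priori bound \eqref{eq:zagmultiG} for $H^{(\bm v)}$ as input. Throughout I write
\begin{equation*}
X^{(\#)} := \langle G^{(\#)}_1 A_1 \cdots G^{(\#)}_k A_k\rangle - \langle M_{[1,k]} A_k\rangle, \qquad \# \in \{\bm v, \bm w\},
\end{equation*}
and $D := N^{-1}\prod_{i\in[k]}\vertiii{A_i}_{2k,\ell}$ for the target size. By Markov's inequality, together with a polynomial net in the spectral parameters (using Lipschitz continuity of $X$ in $z_j$), multilinearity in the observables (reducing to $\vertiii{A_i}_{2k,\ell} = 1$), and an $\epsilon$-net over the unit ball of the relevant Schatten norm, it suffices to establish $\E|X^{(\bm w)}|^{2p} \le N^{2p\xi} D^{2p}$ for every fixed $p\in\N$ and every $\xi>0$. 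Note that the same bound for $X^{(\bm v)}$ follows from the hypothesis, since $|X^{(\bm v)}| \prec D$ upgrades to arbitrary moments via standard high-probability arguments.

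Next, I would enumerate the $\gamma\sim N^2$ independent upper-triangular and diagonal entries and construct an interpolating chain $H^{(0)} = H^{(\bm v)}, H^{(1)}, \ldots, H^{(\gamma)} = H^{(\bm w)}$, where $H^{(\alpha)}$ uses the $\bm w$-entries at the first $\alpha$ positions and the $\bm v$-entries at the rest. Telescoping gives
\begin{equation*}
\E|X^{(\bm w)}|^{2p} - \E|X^{(\bm v)}|^{2p} = \sum_{\alpha=1}^{\gamma}\bigl(\E|X^{(\alpha)}|^{2p} - \E|X^{(\alpha-1)}|^{2p}\bigr),
\end{equation*}
and it suffices to bound each summand by $N^{-2-c}D^{2p}$ for some $c > 0$. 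For the $\alpha$-th swap at position $(a,b)$, I would condition on the common matrix $Q$ obtained by zeroing out the $(a,b)$- and $(b,a)$-entries of $H^{(\alpha-1)}$, and Taylor expand $|X(Q+\epsilon\Delta_{ab})|^{2p}$ in the scalar $\epsilon$ (treating real and imaginary parts separately in the Hermitian case) to fourth order around $\epsilon=0$. The moment matching \eqref{eq:momentmatch} kills the contributions of orders $0,1,2,3$; the fifth-order remainder is $O_\prec(N^{-5/2})$ per swap; and the fourth-order term yields
\begin{equation*}
\frac{1}{4!}\,\E\bigl[|v_{ab}|^4 - |w_{ab}|^4\bigr]\cdot\E\bigl[\partial_\epsilon^4|X(Q+\epsilon\Delta_{ab})|^{2p}\big|_{\epsilon=0}\bigr] = O(N^{-2})\cdot\E\bigl[\partial_\epsilon^4|X|^{2p}\big|_{\epsilon=0}\bigr].
\end{equation*}

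The main obstacle is bounding $\partial_\epsilon^4|X|^{2p}$. By Leibniz, this expands into a sum of products of the form $\prod_j\partial^{s_j}X\cdot\prod_{j'}\partial^{s'_{j'}}\overline{X}\cdot|X|^{2p-4-\sum s}$ with total derivative count four. Each factor $\partial_\epsilon^s X$ is (up to combinatorial constants) a sum of normalised traces of resolvent chains of length $k+s$ with $s$ rank-one insertions $\Delta_{ab}$ interleaved between the $A_i$'s; by cyclicity, such a trace is a sum of products of isotropic matrix elements of the form $(GA_{i_1}G\cdots A_{i_r}G)_{ab}$ (and $(b,a)$-counterparts). These elements are estimated via the \emph{isotropic companion} of Theorem~\ref{thm:main} (stated as Theorem~\ref{thm:isolaw} later in the paper) by their deterministic $M_{ab}$-part plus a fluctuation of size $O_\prec\bigl((N\ell)^{-1/2}\prod_i\vertiii{A_i}_{2(k+s),\ell}\bigr)$. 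The decisive bookkeeping step is that, in the regime $N\ell\ge N^\epsilon$, the inequality \eqref{eq:plSchattenrel} together with Lemma~\ref{lem:Mbound} controls the enlarged Schatten norms $\vertiii{A_i}_{2(k+s),\ell}$ and the deterministic $M$-pieces in terms of the original $\vertiii{A_i}_{2k,\ell}$ up to at most $N^{\xi}$ losses. Tracking these norms through the Leibniz expansion, using the a priori hypothesis to handle the $|X|^{2p-4-\sum s}$-factor (through a bootstrap on $p$), multiplying by $N^{-2}$ and summing over $\gamma\sim N^2$ entries yields the required $o(D^{2p})$ contribution. In essence, once the isotropic companion Theorem~\ref{thm:isolaw} (proved by the very same Zigzag strategy) is in hand, the GFT of Proposition~\ref{prop:zag} follows the standard template; the genuine content lies in the uniform Schatten-norm accounting through the four-fold derivative expansion.
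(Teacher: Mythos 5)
Your overall template (Lindeberg swapping, fourth-order Taylor expansion, moment matching killing orders $0$--$3$, isotropic chains appearing through the $\Delta_{ab}$ insertions) is indeed the skeleton of the paper's argument, but your reduction ``it suffices to bound each summand by $N^{-2-c}D^{2p}$'' is precisely the step that fails in the Schatten-norm setting, and it is the one genuinely new difficulty the paper has to overcome. The isotropic deterministic pieces that arise after the expansion are only controlled entrywise by Lemma~\ref{lem:Mboundiso}, i.e.\ by $\m^{\rm iso}_k\sim\prod_i\vertiii{A_i}_{\infty,\ell}$ (see \eqref{eq:msisoshort}), which for low-rank observables is far larger than what is compatible with the averaged target $D=N^{-1}\prod_i\vertiii{A_i}_{2k,\ell}$; a per-swap estimate of the fourth-order term therefore does \emph{not} come out $o(N^{-2})$ relative to $D^{2p}$ uniformly in $(a,b)$. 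The paper (following \cite{edgeETH}) resolves this by the ``gain from the summation'' idea: individual replacement positions may give too large a contribution, but the sum over the $N^2$ positions is small, exploited through Schwarz-type bounds such as $\sum_i|(M_{k_1+1})_{\bm x\bm e_i}|\le\sqrt{N}\,((|M_{k_1+1}|^2)_{\bm x\bm x})^{1/2}$ in \eqref{eq:schwarzfirstSchatten} and $\langle|M_{k+1}|^2\rangle\lesssim(\s_k)^2$ in \eqref{eq:sumgainAV} (cf.\ the toy inequality \eqref{ex}). Your proposal treats the telescopic sum over $(a,b)$ trivially, so as written the fourth-order bookkeeping does not close; the Case~(ii) terms of the paper's Sections on the GFT are exactly the ones your estimate cannot absorb.

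Two further points need repair. First, you invoke Theorem~\ref{thm:isolaw} for the resolvent chains of the intermediate matrices $Q+\epsilon\Delta_{ab}$, and you control the leftover $|X|^{2p-4-\sum s}$ factor ``using the a priori hypothesis''; but the hypothesis \eqref{eq:zagmultiG} is assumed only for $H^{(\bm v)}$, not for the $\sim N^2$ partially replaced ensembles. This can be made legitimate, as in the paper, only because the isotropic Zigzag (in particular the isotropic GFT, Proposition~\ref{prop:zagISO}, which itself already needs the summation-gain mechanism) is completed beforehand and the averaged bound is propagated self-consistently along the replacement chain; you should state this ordering and the propagation explicitly rather than appeal to the ``standard template''. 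Second, your $\epsilon$-net over the unit ball of the Schatten norm of the observables is both unnecessary and unsound: stochastic domination in this paper puts the supremum over parameters \emph{outside} the probability, so uniformity in the $A_i$'s does not require a union bound, and a net of the matrix unit ball has cardinality $\exp(cN^2)$, which cannot be beaten by the polynomially small exceptional probabilities available here. A net is needed (and suffices) only in the spectral parameters.
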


We are now ready to finally conclude the proof of Theorem \ref{thm:main}. Fix $T>0$, and fix $z_1,\dots,z_{k}\in\C \setminus \R$ such that  $ \min_i N \eta_i \rho_i \ge N^\epsilon$, and let $z_{i,0}$ be the initial conditions of the characteristics \eqref{eq:chardef} chosen so that $z_{i,T}=z_i$ (this is possible thanks to Lemma~\ref{lem:propchar}). Then, the assumption \eqref{eq:inass0} of Proposition~\ref{prop:zig} is satisfied  for those $z_{i,0}$ by Proposition~\ref{prop:initial} with $\delta=CT$, since $d \gtrsim \ell_0$ and where $C>0$ is the constant from Lemma~\ref{lem:propchar}. We can thus use Proposition~\ref{prop:zig} to show that \eqref{eq:flowg} holds. Finally, the Gaussian component added in Proposition~\ref{prop:zig} is removed using Proposition~\ref{prop:zag} with the aid of a complex version (see \cite[Lemma A.2]{edgeETH}) of the standard moment-matching lemma \cite[Lemma~16.2]{EYbook}. \qed

\section{Characteristic flow: Proof of Proposition~\ref{prop:zig}} \label{sec:optA}

In this section, we give the proof of Proposition \ref{prop:zig}. The argument is divided into three parts. 
\begin{itemize}
\item[(i)] We begin by introducing several deterministic and stochastic control quantities, which play a fundamental role throughout the rest of this paper. The stochastic control quantities (some normalized differences between a resolvent chain and the corresponding $M$-term, see \eqref{eq:defphi}--\eqref{eq:defpsi} below) satisfy a system of \emph{master inequalities} (see Propositions~\ref{pro:masterinI}--\ref{pro:masterinII}). 
\item[(ii)] Taking these master inequalities together with Proposition \ref{prop:initial} as inputs, we will prove Proposition~\ref{prop:zig} in Section \ref{subsec:closing}, thus concluding Step 2 of the argument in Section \ref{sec:proof}. 
\item[(iii)] Afterwards, based on Proposition \ref{prop:initial} and several relations among the \emph{deterministic} control quantities (see Lemma \ref{lem:m-relnoIM}), we will prove the master inequalities in Section \ref{subsec:pfmasred}.  
\end{itemize}

To keep the presentation simpler, within this section we assume that $ \E \chi^2_{\mathrm{od}}=0$ and that $ \E \chi^2_{\mathrm{d}}=1$, i.e. we consider only complex Wigner matrices. The modifications for the general case are analogous to \cite[Section 4]{edgeETH}, and so omitted.

We recall our choice of the characteristics
\begin{equation}
\partial_t z_{i,t}=-m(z_{i,t})-\frac{z_{i,t}}{2},
\end{equation}
note that $t\mapsto \eta_{i,t}$ is decreasing and that $\rho_{i,t}\sim \rho_{i,s}$ for any $0\le s \le t$ (see \eqref{eq:characteristics} and the paragraph below it for more details). Additionally, we record the following, trivially checkable, simple integration rule  which will be used many times in this section:
\begin{equation}\label{etaint}
\int_0^t \frac{1}{\eta_s^\alpha}\dd s\lesssim \frac{\log N}{\eta_t^{\alpha-2} \ell_t} \quad \text{with} \quad \eta_s := \min_i \eta_{i,s}\,, \quad \ell_s := \min_{i} \eta_{i,s} \rho_{i,s}\,. 
\end{equation}

Along the characteristics, using the short--hand notation $G_{i,t}=(W_t-z_{i,t})^{-1}$, with $W_t$ being the solution of \eqref{eq:OUOUOU}, by It\^{o}'s formula, we have\footnote{We point out that \eqref{eq:flowka} holds for any matrix $A_i\in\C^{N\times N}$, i.e.~we did not use that the $A_i$'s are traceless.}
\begin{equation}
\begin{split}
\label{eq:flowka}
\dd \langle (G_{[1,k],t}-M_{{[1,k]},t})A\rangle&=\frac{1}{\sqrt{N}}\sum_{a,b=1}^N \partial_{ab}  \langle {G}_{[1,k],t}A\rangle\dd B_{ab,t}+\frac{k}{2}\langle {G}_{[1,k],t}A\rangle\dd t +\sum_{i,j=1\atop i< j}^k\langle {G}_{[i,j],t}\rangle\langle {G}_{[j,i],t}\rangle\dd t \\
&\quad+\sum_{i=1}^k \langle G_{i,t}-m_{i,t}\rangle \langle {G}^{(i)}_{[1,k],t}A\rangle\dd t -\partial_t \langle M_{[1,k],t}A\rangle \dd t,
\end{split}
\end{equation}
where $\partial_{ab}$ denotes the directional derivative $\partial_{w_{ab}}$. We also set
\begin{equation}
\label{eq:deflongchaings}
{G}_{[i,j],t}:=\begin{cases}
G_{i,t}A_i\dots A_{j-1}G_{j,t} & \mathrm{if}\quad  i<j \\
G_{i,t} & \mathrm{if}\quad i=j \\
G_{i,t}A_{i,t}\dots G_{k,t}A_kG_{1,t}A_1\dots A_{j-1}G_{j,t} &\mathrm{if}\quad i>j,
\end{cases}
\end{equation}
and defined ${G}^{(l)}_{[i,j],t}$ as $G_{[i,j],t}$ but with the $l$--th resolvent substituted by $G_{l,t}^2$.
The last definition for $i>j$ reflects the cyclicity of the trace, since ${G}_{[i,j],t}$ will be
needed in a tracial situation. For any $i,j\in [k]$, we denote the deterministic approximation of $G_{[i,j],t}$ by $M_{[i,j],t}$, with $M_{[i,j],t}$ being defined as in \eqref{eq:Mdef} with $[1,k]$ replaced by $[i,j]$ if $i<j$ and by $[1,j] \cup [i,k]$ if $i> j$.
\\[2mm]
\textbf{Deterministic control quantities: mean and standard deviation size.} We now introduce two deterministic control quantities that measure the size of the mean of long chains $\langle G_1A_1... G_kA_k \rangle$ and their standard deviation, respectively, in terms of the $\ell$-weighted Schatten norms of $A_1, ... , A_k$ (recall Definition~\ref{def:plSchatten}) and the spectral parameters of the resolvents $G_i = G(z_i)$. 
These are given by 
\begin{equation}
\begin{split}
\label{eq:sizes} 
\m_k(\ell; \bm A_{J_k})
		:= \mathbf{1}(k \ge 2)  
		\prod_{\alpha\in J_k}	  \vertiii{A_\alpha}_{k,\ell} \qquad \text{and} \qquad 
	\s_k(\ell; \bm A_{J_k})
	:= \prod_{\alpha\in J_k}  \vertiii{A_\alpha}_{2k,\ell} \,,
	\end{split}
\end{equation}
and will be called the \emph{mean size} and the \emph{standard deviation size}, respectively. They are functions of a positive number $\ell$ (usually given by $\ell = \min_i \eta_i \rho_i$) and a multiset $\bm A_{J_k} = (A_\alpha)_{\alpha \in J_k}$ of deterministic matrices of cardinality $|J_k| =k$. 

In the following, we shall frequently drop the symbol $\bm A$ from the definitions in \eqref{eq:sizes}, i.e.~write $\m_k(\ell; J_k)$ and $\s_k(\ell; J_k)$ instead of $\m_k(\ell; \bm A_{J_k})$ and $\s_k(\ell; \bm A_{J_k})$, respectively. Moreover, for time dependent spectral parameters, we will also use the notation $\m_k(t)=\m_k(t;J_k) = \m_k(\ell_t; \bm{A}_{J_k})$, with $\ell_t= \min_i \eta_{i,t} \rho_{i,t}$, and a similar notation for the standard deviation size. In particular, we may often omit the dependence on the deterministic matrix $A$. (More generally, we will omit every argument of \eqref{eq:sizes}, whenever it is clear what they are and it does not lead to confusion.) For example, for $i< j$, we have, with $\eta_t := \min_i \eta_{i,t}$, 
\begin{equation}
\label{eq:bMm}
|\langle M_{[i,j],t} A_j\rangle|\lesssim \ell_t \,  \mathfrak{m}_{j-i+1}(t), \qquad\quad |\langle M_{[i,j],t}\rangle|\lesssim (\ell_t/\eta_t) \mathfrak{m}_{j-i}(t).
\end{equation}
The first bound follows from \eqref{eq:Mbound}; the second bound in \eqref{eq:bMm} can easily be obtained by arguments entirely analogous to the ones leading to Lemma \ref{lem:Mboundstrong} in Appendix \ref{sec:addtech}.
The bounds~\eqref{eq:bMm} justify the terminology  that $\mathfrak{m}_k$ is the mean size of a resolvent chain $\langle G_1A_1... G_kA_k \rangle$, which is well approximated by $\langle M_{[1,k]}A_k \rangle$.
The additional $\ell_t$ factor was  introduced to simplify formulas later.

 In the remainder of this section we will often use the following lemma about some
relations among the sizes $\mathfrak{m}_k(t)$, $\s(t)$ from \eqref{eq:sizes}. The proofs are immediate consequences of \eqref{eq:plSchattenrel} and $\ell_s \gtrsim \ell_t$ for $s \le t$ (recall the discussion below \eqref{eq:characteristics}) and hence omitted.
\begin{lemma}[$\mathfrak{m}/\s$-relations]
	\label{lem:m-relnoIM} 
	Let $k \ge 1$ and consider time-dependent spectral parameters $z_{1,t}, ... , z_{k,t}$ and a multiset of traceless matrices $\bm A_{J_k}$ as above.  Then, the mean size $\mathfrak{m}_k$
	and the standard variation size $\s_k$ 
	satisfy the following  inequalities. 
	\begin{itemize}
		\item[(i)] Super-multiplicativity, i.e.~for any $1 \le j \le k-1$ we have
		\begin{equation}
			\label{eq:1in}
				\mathfrak{m}_j (t; J_j)\, \mathfrak{m}_{k-j}(t; J_{k-j})  \lesssim  \, \mathfrak{m}_k (t; J_k)\qquad \text{and} \qquad 
				\s_j (t; J_j)\, \s_{k-j} (t; J_{k-j})\lesssim \, \s_k(t; J_k)
		\end{equation}
		for all disjoint decompositions $J_k = J_j \dot{\cup}J_{k-j}$. 
		\item[(ii)] The mean size can be upper bounded by the standard deviation size as   
		\begin{equation}
			\label{eq:2in}
			\m_k (t; J_k)\lesssim  \,  \s_k (t; J_k) \qquad \text{and} \qquad \sqrt{ \m_{2k}(t; J_k \cup J_k)} \lesssim \s_k(t; J_k)\,,
		\end{equation}
		where $J_k \cup J_k$ denotes the union of the multiset $J_k$ with itself. 
		\item[(iii)] The standard deviation size satisfies the doubling inequality (recall $\ell_t = \min_{j \in [k]} \eta_{j,t} \rho_{j,t}$)
		\begin{equation}
			\label{eq:3in}
			\s_{2k}(t; J_k \cup J_k)
			\lesssim [1+ \sqrt{N \ell_t}] \big(\s_k(t ; J_k)\big)^2\,.
		\end{equation} 
		\item[(iv)] Monotonicity in time: for any $0 \le s \le t$ we have %\cob [again, which k?] \nc
		\begin{equation}
			\label{eq:4in}
			\mathfrak{m}_k(s;J_k) \, \ell_s^{\alpha} \lesssim \mathfrak{m}_k(t;J_k) \, \ell_t^{\alpha}\qquad \s_k(s;J_k) \, \ell_s^{\beta} \lesssim \s_k(t;J_k) \, \ell_t^{\beta}
		\end{equation}
		for all $\alpha \in [0,1]$ and $\beta \in [0,1/2]$. 
	\end{itemize}
\end{lemma}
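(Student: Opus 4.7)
The plan is to deduce all four claims directly from the elementary inequalities collected in \eqref{eq:plSchattenrel}, together with the monotonicity $\ell_s \gtrsim \ell_t$ for $s \le t$ (which follows because along the characteristic flow \eqref{eq:chardef} the imaginary parts $\eta_{i,t}$ are non-increasing while $\rho_{i,t}\sim\rho_{i,s}$ by \eqref{eq:characteristics}). No dynamical input beyond these two facts is needed.

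For the super-multiplicativity (i), I would start from the observation that $\s_j\s_{k-j}$ and $\s_k$ are both products over the disjoint decomposition $J_k = J_j \dot{\cup}\, J_{k-j}$, so it suffices to upgrade each factor $\vertiii{A_\alpha}_{2j,\ell}$ (resp. $\vertiii{A_\alpha}_{2(k-j),\ell}$) to $\vertiii{A_\alpha}_{2k,\ell}$, which is the content of the first part of \eqref{eq:plSchattenrel}. The same reasoning yields the bound on $\m$ when both $j, k-j \ge 2$; if either is $1$, the indicator $\mathbf{1}(k\ge 2)$ forces the corresponding factor to vanish and the inequality is trivial. For (ii), the inequality $\m_k\lesssim \s_k$ is again just $\vertiii{A}_{k,\ell}\lesssim\vertiii{A}_{2k,\ell}$ factor by factor, while the second claim is in fact an equality: doubling the multiset squares each contribution, so $\m_{2k}(J_k\cup J_k) = \prod_{\alpha\in J_k}\vertiii{A_\alpha}_{2k,\ell}^{2} = \s_k(J_k)^2$.

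For the doubling inequality (iii), I would apply the second inequality of \eqref{eq:plSchattenrel} with $p=2k$ and $q=4k$, yielding $\vertiii{A_\alpha}_{4k,\ell}\lesssim [1+(N\ell)^{1/(4k)}]\vertiii{A_\alpha}_{2k,\ell}$. Since $\s_{2k}(J_k\cup J_k)$ collects two copies of $\vertiii{A_\alpha}_{4k,\ell}$ for every $\alpha\in J_k$, squaring and multiplying over the $k$ elements of $J_k$ gives the prefactor $[1+(N\ell)^{1/(4k)}]^{2k}\lesssim 1+\sqrt{N\ell}$, as claimed.

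The one point that requires a short calculation is the time monotonicity (iv). The key is to verify that $\ell\mapsto \ell^{\alpha}\m_k(\ell; J_k)$ is non-increasing for $\alpha\in[0,1]$, and then use $\ell_s\gtrsim\ell_t$. Writing $\vertiii{A_\alpha}_{k,\ell} = \langle|A_\alpha|^2\rangle^{1/2}\ell^{-1/2} + \langle|A_\alpha|^k\rangle^{1/k}\ell^{-1/k}$ and expanding the product over $J_k$, every resulting monomial in $\ell$ has exponent of the form
\begin{equation*}
\alpha - \tfrac{k-j}{2} - \tfrac{j}{k}, \qquad j \in \{0,1,\ldots,k\},
\end{equation*}
which ranges between $\alpha-k/2$ and $\alpha-1$; both endpoints (and hence all intermediate values) are $\le 0$ precisely when $\alpha\le 1$. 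The analogous computation for $\s_k$ replaces $k$ by $2k$ everywhere and shows that the admissible range is $\beta\in[0,1/2]$. I do not foresee any genuine obstacle: the lemma is essentially a bookkeeping statement, and the only mildly subtle step is tracking the exponents in the expansion for (iv).
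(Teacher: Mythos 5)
Your proposal is correct and follows exactly the route the paper indicates: the paper omits the proof, stating that all four parts are immediate consequences of \eqref{eq:plSchattenrel} together with $\ell_s \gtrsim \ell_t$ for $s \le t$, and your argument simply fills in those details (including the correct observation that the second bound in \eqref{eq:2in} is in fact an equality, and the exponent bookkeeping for \eqref{eq:4in}).
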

~\\[2mm]
\textbf{Stochastic control quantities: Master and reduction inequalities.}
 Using the notation from \eqref{eq:sizes}, the goal is thus to prove that
\begin{equation}
\label{eq:goalopta}
\langle G_{[1,k],T}A_k \rangle - \langle M_{[1,k],T}A_k \rangle =\langle G_{[1,k],0}A_k \rangle - \langle M_{[1,k],0}A_k \rangle+
\mathcal{O}_\prec \Big(\frac{\s_k(T)}{N}\Big)\,,
\end{equation}
uniformly in the spectrum and uniformly in traceless deterministic matrices $A_i$, for some fixed $T\lesssim 1$. We may henceforth assume that all the $A_i$'s are Hermitian; the general case follows by multilinearity.

For the purpose of proving \eqref{eq:goalopta}, recall the notation $\ell_t := \min_i \eta_{i,t} \rho_{i,t}$ from \eqref{etaint} and define
\begin{equation}
\label{eq:defphi}
\Phi_k(t) = \Phi_k(t; \bm z_{[1,k]}; \bm A_{[1,k]}):= \frac{ N}{ \s_k(\ell_t; \bm A_{[1,k]})} \big|\langle (G_{[1,k],t}-M_{[1,k],t})A_k\rangle \big| \quad \text{for all} \quad k \ge 1\,. 
\end{equation}
 Here, $\bm A_{[1,k]} = (A_1,\dots,A_k)$ and $\bm z_{[1,k]} = (z_1, ... , z_k)$ with $z_i = z_{i,0}$ (initial condition) denote multisets of deterministic matrices and spectral parameters, respectively. We now briefly comment on the definition \eqref{eq:defphi}. We chose the pre--factor in the definition of $\Phi_k(t)$ so that eventually $\Phi_k(t)$ will be of order one with high probability (cf. \eqref{eq:goalopta}).
 However, we will not be able to prove this directly, we first prove that $\Phi_k(t)\prec \sqrt{N\ell_t}$ and then, using this bound as an input, we prove the desired bound $\Phi_k(t)\prec 1$. To implement this technically we introduce another quantity
\begin{equation}
\label{eq:defpsi}
\Psi_k(t) = \Psi_k(t; \bm z_{[1,k]}; \bm A_{[1,k]}):= \frac{ N}{ \s_k(\ell_t; \bm A_{[1,k]})\sqrt{N\ell_t}} \big|\langle (G_{[1,k],t}-M_{[1,k],t})A_k\rangle \big| \quad \text{for all} \quad k \ge 1\,,
\end{equation}
which we will show to be bounded by one (note that $\Psi_k(t)\prec 1$ implies $\Phi_k(t)\prec \sqrt{N\ell_t}$), and then show that $\Psi_k(t)\prec 1$ in fact implies $\Phi_k(t)\prec 1$. 

We start considering $\Psi_k(t)$; note that by \eqref{eq:inass0} it follows  
 \begin{equation}
 \label{eq:initialpsi}
 \Psi_k(0)\lesssim \Phi_k(0)\prec 1,
 \end{equation}
for any $k\ge 1$. For this purpose we will derive a series of {\it master inequalities} for these quantities with the following structure.
We assume that
\begin{equation}\label{psipsi}
\Psi_k(t)\prec \psi_k
\end{equation}
holds
for some deterministic control parameter $\psi_k\ge 1$, uniformly in deterministic matrices $A_i$,  times $0\le t \le T$
and in spectral parameters with $\ell_t\ge N^{-1+\epsilon}$ for some small fixed $\epsilon>0$ (we stress that the $\psi_k$'s depend neither on time, nor on the spectral parameters $z_{i,t}$, nor on the deterministic matrix $A$).
Starting from \eqref{psipsi} we derive an improved upper bound for $\Psi_k(t)$
 and show that, by \emph{iterating} these inequalities, we indeed obtain the desired $\Psi_k(t)\prec 1$. The main inputs to prove this fact are the \emph{master inequalities} in \eqref{eq:masterkaa}, which informally states that if $\Psi_l(t)\prec \psi_l$ for any $l=1,\dots,  k + \mathbf{1}(k \, \mathrm{odd})$, then this actually implies the improved bound \eqref{eq:masterkaa}.

 \begin{proposition}[Master inequalities]
 \label{pro:masterinI}
Fix $k\in \N$. Assume that $\Psi_l(t)\prec \psi_l$ 
for some deterministic control parameters $\psi_l$, \nc 
for any $1\le l\le k+\bm1(k\,\mathrm{odd})$, uniformly in $t\in [0,T]$. Then 
\begin{equation}
\label{eq:masterkaa}
\Psi_k(t)\prec 1+\sum_{j=1}^{k-2}\psi_j+\frac{1}{(N\ell_T)^{1/4}}\sum_{j=1}^k\widetilde{\psi}_j\widetilde{\psi}_{k-j},
\end{equation}
uniformly in $t\in [0,T]$. Here we set $\widetilde{\psi}_l:=\psi_l+\bm1(l\,\mathrm{odd})\sqrt{\psi_{l-1}\psi_{l+1}}$ and $\psi_0:=1$.
\end{proposition}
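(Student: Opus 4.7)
\textbf{Plan for the proof of Proposition~\ref{pro:masterinI}.} The starting point is the It\^{o} formula \eqref{eq:flowka} applied to $\langle (G_{[1,k],t}-M_{[1,k],t})A_k\rangle$. The central mechanism is the \emph{characteristic cancellation}: the characteristic equation $\partial_t z_{i,t}=-m(z_{i,t})-z_{i,t}/2$ was designed so that, when one replaces every occurrence of $G_{i,t}$ by $m_{i,t}$ and every chain $\langle G_{[i,j],t}\rangle\langle G_{[j,i],t}\rangle$ by its deterministic counterpart in the It\^{o} correction, the resulting expression exactly matches $\partial_t\langle M_{[1,k],t}A_k\rangle$. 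After this cancellation, what remains in the differential is (a) the stochastic term $\dd\mathcal{M}_t:=\frac{1}{\sqrt N}\sum_{a,b}\partial_{ab}\langle G_{[1,k],t}A_k\rangle\,\dd B_{ab,t}$, and (b) a collection of deterministic corrections that are linear or quadratic in the fluctuations $G_{[i,j],s}-M_{[i,j],s}$ and $G_{i,s}-m_{i,s}$.

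\textbf{Bounding the linear/quadratic corrections.} Each remaining correction term has the schematic form $\int_0^t \langle G_{[i,j]}-M_{[i,j]}\rangle\,\langle M_{[j,i]}\rangle\,\dd s$ (or analogous variants where both chains carry a fluctuation). Using $\Psi_{j-i+1}(s)\prec \psi_{j-i+1}$ together with \eqref{eq:bMm} for the $M$-factor, and applying the supermultiplicativity $\m_p\m_q\lesssim \m_{p+q}$ and the comparison $\m_k\lesssim \s_k$ from Lemma~\ref{lem:m-relnoIM}, each integrand is bounded in terms of $\s_k(s)$ times a power of $1/\eta_s$. The time integration is then absorbed using the rule \eqref{etaint}, which gains a factor $\log N/\ell_t$, and the monotonicity \eqref{eq:4in} lets us replace $\s_k(s)$ by $\s_k(t)$. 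After dividing by $\s_k(t)\sqrt{N\ell_t}/N$ to convert back to $\Psi_k$-units, these terms produce contributions of the form $\psi_j$ with $1\le j\le k-2$, yielding the second summand in \eqref{eq:masterkaa}; the purely deterministic $\mathcal{O}(1)$ remainder produces the leading $1$.

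\textbf{Bounding the martingale.} By the Burkholder--Davis--Gundy inequality it suffices to control the quadratic variation
\begin{equation*}
[\mathcal{M}]_t=\int_0^t \frac{1}{N}\sum_{a,b}\bigl|\partial_{ab}\langle G_{[1,k],s}A_k\rangle\bigr|^2\,\dd s\,.
\end{equation*}
Expanding $\partial_{ab}$ by the Leibniz rule, each term is a chain with one extra insertion; summing $|\cdot|^2$ over $a,b$ folds two such chains into a single closed chain of length $2k$ (with deterministic matrices $A_i\cup A_i$). Writing $G=M+(G-M)$ in this $2k$-chain and using the doubling inequality \eqref{eq:3in}, $\s_{2k}\lesssim [1+\sqrt{N\ell_s}]\,\s_k^2$, together with $\Psi_{2j}\prec \psi_{2j}$, the quadratic variation is bounded by a sum of terms of the form $\s_k(t)^2\sqrt{N\ell_t}\,\psi_j\psi_{k-j}/N^2$ after integration and application of \eqref{etaint}. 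BDG then gives a factor $N^{-1/2}\cdot (N\ell_t)^{1/4}\cdot \s_k(t)$, so that in $\Psi_k$-units the total contribution is $(N\ell_T)^{-1/4}\sum_j\psi_j\psi_{k-j}$.

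\textbf{Main obstacle.} The delicate point is the appearance of the tilded quantity $\widetilde{\psi}_l=\psi_l+\mathbf{1}(l\text{ odd})\sqrt{\psi_{l-1}\psi_{l+1}}$. When $k$ is odd, the folding described above naturally produces a chain of length $2k$ which can be split into two halves of lengths $j$ and $k-j$ of opposite parity; one of these halves is necessarily of odd length. For odd $l$, Lemma~\ref{lem:m-relnoIM}\,(ii) gives $\sqrt{\m_{2l}}\lesssim \s_l$ but there is no direct doubling of the stochastic control, so $\Psi_l$ with odd $l$ is controlled by a Cauchy--Schwarz between the even neighbors $\Psi_{l-1}$ and $\Psi_{l+1}$, precisely producing the $\sqrt{\psi_{l-1}\psi_{l+1}}$ term. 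This is also the reason the hypothesis must extend to index $k+\mathbf{1}(k\text{ odd})$: when $k$ is odd, controlling the martingale requires knowledge of $\Psi_{k+1}$. Careful bookkeeping of which parity appears in each factorization, and showing that these are the only odd-length chains that arise (no chain of length $>k+1$ is needed because the doubling already reduces $2k$ back to products of $\le k$), is the most technical step of the argument.
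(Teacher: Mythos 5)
Your overall skeleton (the It\^o formula \eqref{eq:flowka}, the exact cancellation via Lemma \ref{lem:cancM}, bounding the linear and quadratic fluctuation terms with Lemma \ref{lem:m-relnoIM} and the integration rule \eqref{etaint}, and BDG for the martingale) matches the paper's proof. The genuine gap is in your treatment of the martingale term. The quadratic variation \eqref{eq:quadvarexp} is a chain of $2k$ resolvents, and you propose to control it by writing $G=M+(G-M)$ inside this $2k$-chain and invoking ``$\Psi_{2j}\prec\psi_{2j}$'' together with the doubling inequality \eqref{eq:3in}. But the hypothesis of the proposition only provides $\Psi_l(t)\prec\psi_l$ for $l\le k+\mathbf{1}(k\,\mathrm{odd})$; control of chains of length up to $2k$ is precisely what is \emph{not} available at this stage (and assuming it would spoil the closability of the hierarchy that the master inequalities are designed to provide). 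Moreover, the doubling inequality \eqref{eq:3in} is a statement about the deterministic sizes $\s_k$ only; it does not reduce the stochastic $2k$-chain to stochastic $k$-chains. The missing ingredient is the reduction inequality \eqref{eq:redin}, proved by spectral decomposition and a Schwarz inequality on eigenvector overlaps: it bounds the random $2k$-chain by $N$ times a product of two random chains of roughly half the length (two chains of length $k$ for even $k$; lengths $k-1$ and $k+1$ for odd $k$), each of which \emph{is} covered by the assumed $\psi_l$'s. This is exactly where the odd-$k$ term $\sqrt{\psi_{k-1}\psi_{k+1}}$ and the need for the hypothesis up to index $k+1$ come from. Without \eqref{eq:redin} (or an equivalent device) your martingale estimate does not close.

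A secondary bookkeeping point: in the paper the products $\psi_j\psi_{k-j}$ in \eqref{eq:masterkaa} do not come from the martingale at all, but from the genuinely quadratic fluctuation terms $\langle G_{[i,j],s}-M_{[i,j],s}\rangle\langle G_{[j,i],s}-M_{[j,i],s}\rangle$ in \eqref{eq:flowkmaa}, estimated in \eqref{eq:prob3aaa} with the better prefactor $(N\ell_T)^{-1/2}$, while the martingale contributes only $1+\widetilde{\psi}_k/(N\ell_T)^{1/4}$. This misattribution is harmless for the final form of \eqref{eq:masterkaa}, but it reflects the same underlying issue: you have not identified the mechanism by which the length-$2k$ chain in the quadratic variation is reduced to chains of length at most $k+1$, which is the one nontrivial idea of this proof.
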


Using \eqref{eq:masterkaa}, in the next section we show that $\Psi_k(t)\prec 1$
 by an iterative procedure. \nc
 Then, to conclude $\Phi_k(t)\prec 1$, we rely on the following proposition, 
 which will eventually prove Proposition \ref{prop:zig}. \nc
\begin{proposition}
\label{pro:masterinII}
Fix $k\in \N$, assume that $\Phi_l(t)\prec 1$, for $l\le k-2$ uniformly in $t\in [0,T]$, and that $\Phi_l(t)\prec \sqrt{N\ell_t}$ for $1\le l\le 2k$ and $t\in [0,T]$. Then 
\begin{equation}
\label{eq:masterkaaII}
\Phi_k(t)\prec 1,
\end{equation}
uniformly in $t\in [0,T]$.
\end{proposition}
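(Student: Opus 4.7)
Plan: The approach mirrors the proof of Proposition~\ref{pro:masterinI}, but now exploits the stronger a priori input: $\Phi_l\prec 1$ for $l\le k-2$ (rather than only $\Psi_l\prec 1$) together with $\Phi_l\prec\sqrt{N\ell_t}$ for $l\le 2k$. First I would start from the SDE \eqref{eq:flowka} applied to $\langle (G_{[1,k],t}-M_{[1,k],t})A_k\rangle$ and use the characteristic ODE \eqref{eq:chardef} together with $\partial_t m(z_{i,t})=m(z_{i,t})/2$ from \eqref{eq:characteristics} to cancel the linear-in-$(G-M)$ deterministic drift against $-\partial_t\langle M_{[1,k],t}A_k\rangle$ at the level of the $M$-profile. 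This cancellation, identical to the one behind \eqref{eq:masterkaa}, leaves a martingale plus quadratic-in-$(G-M)$ forcing terms generated by the cross sums $\langle G_{[i,j],t}\rangle\langle G_{[j,i],t}\rangle$ and by subleading pieces arising from $\langle G_{i,t}-m_{i,t}\rangle$.

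For the forcing, each cross product is decomposed by writing both factors as $M+(G-M)$; the $M\cdot M$ piece is absorbed into the cancellation above, while the $(G-M)\cdot(G-M)$ piece involves resolvent chains whose numbers of traceless matrices $a,b$ satisfy $a+b=k$. These contribute as $N^{-2}\s_a(t)\s_b(t)\,\Phi_a(t)\Phi_b(t)$. The key dichotomy is: when both $a,b\le k-2$, the hypothesis gives $\Phi_a\Phi_b\prec 1$; when one index is $k-1$ or $k$ (only bounded by $\sqrt{N\ell_t}$), the resulting extra factor $\sqrt{N\ell_t}$ is compensated by the gain from integrating $\dd s/\eta_s^{\alpha}$ along the characteristics via \eqref{etaint}, combined with the super-multiplicativity $\s_a\s_b\lesssim\s_k$ from \eqref{eq:1in} and, where needed, the doubling inequality \eqref{eq:3in}.

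For the martingale part, the Burkholder--Davis--Gundy inequality reduces matters to estimating a quadratic variation built from chains of length at most $2k$. Using $\Phi_l\prec\sqrt{N\ell_t}$ for $l\le 2k$ together with \eqref{eq:2in}--\eqref{eq:3in} to convert $\s_{2k}$-type quantities into $\s_k^2$, the martingale contribution is shown to be $\mathcal{O}_\prec(\s_k(T)/N)$ after time integration. Combined with the forcing-term bound and the initial datum $\Phi_k(0)\prec 1$ (which follows from the global law Proposition~\ref{prop:initial} applied at $t=0$, where $\ell_0\gtrsim 1$), a Gr\"onwall-type argument on $t\in[0,T]$ closes the inequality and yields $\Phi_k(t)\prec 1$ uniformly.

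The main obstacle will be the careful bookkeeping of the marginal cross-term contributions with $\max(a,b)\in\{k-1,k\}$: one must verify that for every such split the gain from integrating $\dd s/\eta_s^{\alpha}$ along the flow genuinely absorbs the $\sqrt{N\ell_t}$-factor carried by the corresponding $\Phi$, so that the end bound on $\Phi_k$ is $\mathcal{O}_\prec(1)$ rather than $\mathcal{O}_\prec(\sqrt{N\ell_T})$ as in Proposition~\ref{pro:masterinI}. This is exactly where the hypothesis $\Phi_l\prec 1$ for $l\le k-2$ (which has no analogue in the derivation of \eqref{eq:masterkaa}) becomes indispensable: the dominant low-order sub-chain contributions then enter without any $\sqrt{N\ell_t}$-loss, leaving only a thin "boundary" layer of marginal splits that is precisely absorbed by the time-integration gain.
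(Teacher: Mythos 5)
Your overall skeleton is right, and you correctly identified the one place where this proof genuinely differs from that of Proposition~\ref{pro:masterinI}: the martingale term. As in the paper, the quadratic variation is \emph{not} split via the reduction inequality \eqref{eq:redin} (which costs a factor $N$); instead the full $2k$-chain is estimated directly through the a priori bound $\Phi_{2k}(t)\prec\sqrt{N\ell_t}$ together with \eqref{eq:2in}--\eqref{eq:4in}, and this is exactly what makes the $\sqrt{N\ell}$-loss of Step~1 avoidable. Your use of the global law for the initial datum is also consistent with the scheme.

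However, your treatment of the drift terms has a genuine gap. Decomposing $\langle G_{[i,j]}\rangle\langle G_{[j,i]}\rangle$ as $(M+(G-M))(M+(G-M))$ produces, besides the $M\cdot M$ part (cancelled by Lemma~\ref{lem:cancM}) and the $(G-M)\cdot(G-M)$ part, the \emph{mixed} terms $\langle M_{[i,j],s}\rangle\langle G_{[j,i],s}-M_{[j,i],s}\rangle$, which your sketch omits entirely. These mixed terms are the leading forcing and are precisely where the hypothesis $\Phi_l(t)\prec 1$ for $l\le k-2$ is indispensable: $\langle M_{[i,j],s}\rangle$ is of size $\ell_s\m_{j-i}(s)/\eta_s$, so with only $\Phi_{k-(j-i)}\prec\sqrt{N\ell}$ the corresponding contribution to $\Phi_k$ would be of order $\sqrt{N\ell_T}$, and the time-integration gain \eqref{etaint} does \emph{not} absorb this factor. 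Moreover, the potentially dangerous case of a length-$(k-1)$ fluctuation (i.e.\ $j-i=1$) is excluded not by any integration gain but by the algebraic identity $\langle M_{[i,i+1],s}\rangle=0$, which is why the hypothesis only needs $l\le k-2$; this mechanism is absent from your argument. Conversely, the $(G-M)\cdot(G-M)$ terms, to which you attribute the role of the new hypothesis, close already with the a priori bounds $\Phi_a,\Phi_b\prec\sqrt{N\ell}$ exactly as in \eqref{eq:prob3aaa} (with $\psi_l=1$, after multiplying by $\sqrt{N\ell_t}$), since they are quadratically small. So your ``dichotomy in $a,b$ plus gain from $\int \dd s/\eta_s^\alpha$'' bookkeeping is misdirected: as written it leaves the dominant mixed terms unestimated, and the compensation mechanism you propose for the marginal indices would not suffice for them.
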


\subsection{Closing the hierarchy: Proof of Proposition \ref{prop:zig}}
\label{subsec:closing} 

To show that Proposition~\ref{pro:masterinI} in fact implies $\Psi_k(t)\prec 1$ we rely on the following procedure, which we refer to as \emph{iteration} (see, e.g., \cite[Lemma~4.1]{iidpaper}).
 
 \begin{lemma}[Iteration]
 \label{lem:iteration}
 Fix $k\in \N$, $T > 0$, and $N$-independent constants $\epsilon,\delta>0$,  $\alpha\in (0,1)$ and $D>0$.  Let $X$ be a random variable
depending on $k$ time dependent spectral parameters $z_{1,t}, ... , z_{k,t}$, $t \in [0,T]$, and recall that $\ell_t = \min_{j\in [k]} \eta_{j, t}\rho_{j, t}$. 
Assume that the a-priori bound $X\prec N^D$ holds uniformly in $t \in [0,T]$, $\ell_t\ge N^{-1+\epsilon}$. 
   Suppose that  there is a deterministic quantity $x$ (may depend on $\ell_T$ and $N$) such that \nc
    for any  fixed $l\in \N$ the fact that $X\prec x$ uniformly for $t \in [0,T]$ and $\ell_t\ge N^{-1+l\epsilon}$ implies\footnote{Here the scalar $A>0$ should not be confused with the matrices $A_i$ which appear throughout the proof.}
\begin{equation}
\label{eq:iterationrep}
X\prec A+\frac{x}{B}+x^{1-\alpha}C^\alpha,
\end{equation}
uniformly for $t \in [0,T]$ and $\ell_t\ge N^{-1+(l+l')\epsilon}$, for some constants\footnote{The constants $A, B, C$  may depend on $N$,
while $l$ and $l'$ are independent of $N$.}
 $l'\in \N$, $B\ge \delta>0$, $A,C>0$.
  Then, iterating \eqref{eq:iterationrep} finitely many times, we obtain
\[
X\prec A+C,
\]
uniformly for $t \in [0,T]$ and $\ell_t\ge N^{-1+(1+Kl')\epsilon}$, for some $K=K(\alpha,D,\delta)$.
\end{lemma}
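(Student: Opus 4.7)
The plan is to iterate the hypothesis finitely many times, starting from the a priori bound $X \prec N^D$ and successively improving it. Concretely, I would define the deterministic sequence $x_0 := N^D$ and $x_{n+1} := A + x_n/B + x_n^{1-\alpha}C^\alpha$ for $n \ge 0$. The inductive step is then a direct application of the hypothesis: if at stage $n$ we have established $X \prec x_n$ uniformly on the set $\{t \in [0,T]: \ell_t \ge N^{-1+(1+nl')\epsilon}\}$, then the assumption immediately yields $X \prec x_{n+1}$ uniformly on $\{\ell_t \ge N^{-1+(1+(n+1)l')\epsilon}\}$. The base case $n=0$ is exactly the given a priori bound on $\{\ell_t \ge N^{-1+\epsilon}\}$. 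Since the number of iterations $K$ and the loss parameter $l'$ will both be $N$-independent, the cumulative degradation of the $\ell_t$-budget remains bounded by an $N$-independent multiple of $\epsilon$.

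It then remains to analyze the purely deterministic sequence $(x_n)$ and show that, after some $N$-independent number $K = K(\alpha, D, \delta)$ of iterations, $x_K \le (A+C)N^\eta$ for any prescribed $\eta > 0$. Absorbing the factor $N^\eta$ into the $\prec$-relation then yields the desired conclusion $X \prec A + C$. To that end, I would observe that when $x_n \gg A+C$, the two contributions $x_n/B$ and $x_n^{1-\alpha}C^\alpha$ both tend to shrink $x_n$: the former contributes a multiplicative factor $1/B$, while the latter contributes the factor $(C/x_n)^\alpha$, which is $o(1)$ as $x_n$ grows. Splitting into cases according to which of the two terms dominates, the purely nonlinear regime produces a geometric contraction of the $N$-exponent, leading to the heuristic bound $x_n \lesssim N^{D(1-\alpha)^n}(A+C)^{1-(1-\alpha)^n}$, so that after $K \sim \log(D/\eta) / \log\bigl(\tfrac{1}{1-\alpha}\bigr)$ steps the $N$-exponent is driven below $\eta$.

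The main obstacle is controlling the linear term $x/B$ cleanly, since it is only a true contraction when $B$ is large, whereas the hypothesis only guarantees $B \ge \delta > 0$. My approach would be to package both contributions together and establish, by induction, an upper envelope of the form $x_n \le \max\bigl(\kappa(A+C),\, N^{D(1-\alpha)^n}\bigr)$ for a constant $\kappa = \kappa(\alpha, \delta)$ depending only on the fixed parameters; once $x_n$ falls into the first branch, a single extra application of $f(x) = A + x/B + x^{1-\alpha}C^\alpha$ (together with the inhomogeneity $A$) keeps it pinned there. The $N$-independence of $K$ is the crucial structural point, as it ensures the final admissible region $\{\ell_t \ge N^{-1+(1+Kl')\epsilon}\}$ still has a positive $N$-exponent of the same order as in the hypothesis, matching the statement of the lemma.
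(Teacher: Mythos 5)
The paper itself does not prove this lemma -- it is quoted from \cite[Lemma~4.1]{iidpaper} -- so your proposal can only be compared with the standard argument, whose skeleton you do have correctly: set $x_0 := N^D$ and $x_{n+1} := A + x_n/B + x_n^{1-\alpha}C^\alpha$, apply the self-improving hypothesis once per step (with $l = 1+nl'$, so the admissible region shrinks from $\ell_t \ge N^{-1+(1+nl')\epsilon}$ to $\ell_t \ge N^{-1+(1+(n+1)l')\epsilon}$), and at the end absorb a small residual $N$-power into $\prec$. That bookkeeping is fine.

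The genuine gap is your treatment of the linear term. The envelope $x_n \le \max\bigl(\kappa(A+C),\, N^{D(1-\alpha)^n}\bigr)$ cannot be proved under the assumption you are actually using, namely that $B$ is merely bounded below by a constant $\delta$: the linear term alone forces $x_n \ge N^D B^{-n}$, which for any $N$-independent $n$ and $B = O(1)$ is still of size $N^D$ up to a constant, hence vastly larger than $N^{D(1-\alpha)^n}$. Worse, under that literal reading the lemma itself is false: take $X := N^D$ deterministic, $A = C = 1$, $B = 1$; then $A + x/B + x^{1-\alpha}C^\alpha \ge x$, so the implication ``$X \prec x$ implies $X \prec A + x/B + x^{1-\alpha}C^\alpha$'' holds trivially for every admissible $x$, yet $X \prec A+C$ fails. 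The statement is only true -- and is only ever invoked in this paper -- when $1/B$ itself carries a negative power of $N$ (in every application $B$ is a power of $N\ell_T \ge N^{\epsilon}$, e.g.\ $(N\ell_T)^{1/4}$), and this is also the only reading consistent with an $N$-independent $K = K(\alpha, D, \delta)$. With $B \ge N^{\delta}$ the proof is the elementary exponent recursion $d_{n+1} = \max\{0,\, d_n - \delta,\, (1-\alpha)d_n\}$ starting from $d_0 = D$: the additive $\delta$-gain kills the linear part, the factor $(1-\alpha)$ contracts the nonlinear part, and after $K$ steps the exponent drops below the arbitrarily small threshold absorbed by $\prec$, yielding $X \prec A + C$. (Even then, your closing claim that $x_K \le (A+C)N^{\eta}$ for \emph{every} prescribed $\eta$ with $K$ independent of $\eta$ deserves care, since the $(1-\alpha)$-contraction leaves a positive residual exponent after finitely many steps; this is exactly where the $\prec$-slack is spent.) So the ``package both contributions together'' step should be replaced by the observation that $B$ must be at least a small positive power of $N$; without that input, no finite $N$-independent number of iterations can bring $N^D$ down to $O(A+C)$.
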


We are now ready to present the proof of Proposition \ref{prop:zig}.

\begin{proof}[Proof of Proposition \ref{prop:zig}]

The proof of this proposition is divided into two steps: we first prove that $\Psi_k(t)\prec 1$  for any $k$ \nc
 and then use this information as an input to conclude that $\Phi_k(t)\prec 1$  for any $k$. 
 These two bounds are quite different in spirit. The first one is an a-priori bound so 
the iterative procedure behind its proof must be self-improving. This is reflected in the triangular structure of~\eqref{eq:masterkaa}:
the right hand side contains quantities with index at most $k$ (or $k+1$ when $k$ is odd)
and terms with the highest index come with a small prefactor
(recall that $N\ell_T$ is large). This makes the system~\eqref{eq:masterkaa} closable. 
The  iteration is done essentially for each fixed $k$ and then we use an induction on $k$.
Owing to the parity issue in the definition of $\widetilde\psi$, we use a step-two induction, but this is just
a small technicality.
The second bound $\Phi_k(t)\prec 1$ is quite different, since its proof relies on the a-priori bound 
obtained in the first step. The key point is that in order to prove $\Phi_k(t)\prec 1$ for some fixed $k$, we need
to know the a-priori bound $\Phi_l(t)\prec \sqrt{N\ell_t}$ for any $l\le 2k$, i.e. without the a-priori bound
the system of inequalities behind the proof of $\Phi_k(t)\prec 1$ would not be closable.
 This explains why we need to 
proceed in two stages.

We now start we the proof of $\Psi_k(t)\prec 1$. We first prove this for $k=1,2$ and then, using an inductive argument, we show that the same bound holds for any $k\ge 3$. By \eqref{eq:masterkaa} for $k=1,2$ we have
\begin{equation}
\Psi_1(t)\prec 1+\frac{\psi_1+\sqrt{\psi_2}}{(N\ell_T)^{1/4}}, \qquad\quad \Psi_2(t)\prec 1+\frac{\psi_1^2+\psi_2}{(N\ell_T)^{1/4}}
\end{equation}
Using iteration we then obtain (all estimates are uniform in $t\in [0,T]$)
\begin{equation}
\Psi_1(t)\prec 1+\frac{\sqrt{\psi_2}}{(N\ell_T)^{1/4}}, \qquad\quad \Psi_2(t)\prec 1+\frac{\psi_1^2}{(N\ell_T)^{1/4}}.
\end{equation}
Plugging the first inequality into the second one and using iteration, this immediately gives that $\Psi_1(t)+\Psi_2(t)\prec 1$.

Next, we proceed with the induction step. Fix an even $k\in \N$, and assume that $\Psi_l(t)\prec 1$, for $l\le k-2$, then by \eqref{eq:masterkaa} we have
\begin{equation}
\Psi_{k-1}(t)\prec 1+\frac{\psi_{k-1}+\sqrt{\psi_k}}{(N\ell_T)^{1/4}}, \qquad\quad \Psi_2(t)\prec 1+\psi_{k-1}+\frac{\psi_k+\sqrt{\psi_k}}{(N\ell_T)^{1/4}}
\end{equation}
By iteration, we have
\begin{equation}
\Psi_{k-1}(t)\prec 1+\frac{\sqrt{\psi_k}}{(N\ell_T)^{1/4}}, \qquad\quad \Psi_2(t)\prec 1+\psi_{k-1},
\end{equation}
which concludes $\Psi_l(t)\prec 1$, for $l\le k$, by plugging the first inequality into the second one and using iteration once again.

Finally, to conclude $\Phi_k(t)\prec 1$, we proceed by  a step-two  induction on $k$. 
 For $k=1,2$, the assumption $\Phi_l(t)\prec \sqrt{N\ell_t}$, for $l\le 4$, of Proposition~\ref{pro:masterinII} is satisfied, and so we have $\Phi_1(t)+\Phi_2(t)\prec 1$. Then we proceed with the induction step, i.e. for a fixed even $k\in \N$ we assume that $\Phi_l(t)\prec \sqrt{N\ell_t}$, for $l\le 2k$, and $\Phi_l(t)\prec 1$, for $l\le k-2$. Then, by Proposition~\ref{pro:masterinII}, we have $\Phi_l(t)\prec 1$ for $l\le k$; this concludes the  induction step, hence the proof.
\end{proof}

\subsection{Master and reduction inequalities: Proofs of Propositions~\ref{pro:masterinI}--\ref{pro:masterinII}}
\label{subsec:pfmasred}

We first present the proof of Proposition~\ref{pro:masterinI} in detail and then at the end of this section we explain the minor changes to obtain Proposition~\ref{pro:masterinII}.
\begin{proof}[Proof of Proposition~\ref{pro:masterinI}]
To keep the notation simple, from now on we often omit the $t$--dependence from the resolvents $G_i=G_{i,t}$ and from their deterministic approximations $M_{[i,j]}=M_{[i,j],t}$,  
but we will still keep the $t$--dependence in the spectral parameters, in $\Psi_k(t)$
and in the quantities \eqref{eq:sizes}; additionally, we stress that $\psi_k$ does not depend on time.
All estimates are uniform in the time parameters $t, s\in [0,T]$.

Note that in \eqref{eq:defpsi} we defined $\Psi_k(t)$ only for alternating chains of single resolvents and deterministic $A$'s, i.e. no $G_i^k$ appears. However, in \eqref{eq:flowka}   
we naturally get chains involving $G_i^2$. For these terms we use the estimate (recall the estimate for $\langle M_{[i,j]} \rangle$ from \eqref{eq:bMm}):
\begin{equation} 
	\label{eq:shortenchain}
	\big|\langle G_{[i,j]}-M_{[i,j]} \rangle\big|\prec \frac{\psi_{j-i}\mathfrak{s}_{j-i}(t)}{N\eta_t}.
\end{equation}
For $\Im z_{i,t}\Im z_{j,t}>0$ this trivially follows by integral representation
\begin{equation}
\label{eq:intrep}
G_jG_i=\frac{1}{2\pi \ii }\int_\Gamma \frac{G(z)}{(z-z_{i,t})(z-z_{j,t})}\,\dd z,
\end{equation}
for "linearizing" the product of the first and last $G$'s in $\langle G_{[i,j]}\rangle=\langle G_iA_i\dots A_{j-i}G_j\rangle$ after using cyclicity of the trace. Here $\Gamma$ is a contour which lies in the region $\{ z \in \C :|\Im z| \rho(z)\ge \ell_t/2\}$. For  $\Im z_{i,t}\Im z_{j,t}<0$ we "linearize"  by the resolvent identity. Using \eqref{eq:intrep} will change the value of the imaginary part of the spectral parameters, so the domain on which the inequalities below hold (characterized by $N \ell \ge N^{\epsilon}$) may change from time to time, i.e., say $\epsilon \to \epsilon/2$. However, this can happen only finitely many times as it does not affect the $\prec$--bound (see \cite[Figure 2]{iidpaper} for a detailed discussion of this minor technicality).

	To describe the evolution of $M_{[1,k],t}$ in \eqref{eq:flowka} we rely on the following lemma. 
	\begin{lemma}[Lemma 4.8 of \cite{edgeETH}]
		\label{lem:cancM}
		For any deterministic matrices $A_i\in \C^{N\times N}$ (i.e.~not necessarily traceless), we have
		\begin{equation}
			\partial_t \langle M_{[1,k],t}A_k\rangle=\frac{k}{2}\langle M_{[1,k],t}A_k\rangle+\sum_{i,j=1, \atop i<j}^k\langle M_{[i,j],t}\rangle \langle M_{[j,i],t}\rangle. 
		\end{equation}
	\end{lemma}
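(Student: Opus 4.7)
The proof is a purely deterministic computation starting from the definition \eqref{eq:Mdef} together with the characteristic ODE \eqref{eq:chardef}. My plan is to differentiate the sum over non-crossing partitions term-by-term and then rearrange using the Kreweras complement structure.

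First, since the matrices $A_i$ are time-independent, differentiating
$$\langle M_{[1,k],t} A_k \rangle \;=\; \sum_{\pi \in \mathrm{NC}([k])} \big\langle \mathrm{pTr}_{K(\pi)}(A_1,\dots,A_{k-1})\, A_k \big\rangle \prod_{S \in \pi} m_\circ[S]$$
only hits the free-cumulant factors. So my next step is to compute $\partial_t m[S]$ for a subset $S \subseteq [k]$. Starting from the integral representation \eqref{msc dd} and differentiating under the integral sign, I would substitute $\partial_t z_{j,t} = -m(z_{j,t}) - z_{j,t}/2$ and use the defining equation $m^2 + zm + 1 = 0$ together with $\frac{x}{x-z} = 1 + \frac{z}{x-z}$ to split $\partial_t m[S]$ into two natural pieces: a "diagonal" piece contributing $\tfrac{|S|}{2}\, m[S]$ (from the $-z_{j,t}/2$ drift combined with the mean-field self-consistent equation), and "splitting" pieces of the form $m(z_{j,t})\cdot m[S \cup \{j'\}]$, i.e.~terms in which an index is effectively duplicated.

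Next I would convert these identities from the moments $m[\cdot]$ to the free cumulants $m_\circ[\cdot]$ via the Möbius inversion built into \eqref{eq:freecumulant} and substitute back. The diagonal pieces assemble into $\tfrac{1}{2}\sum_{\pi}\sum_{S\in\pi}|S|(\cdots) = \tfrac{k}{2}\langle M_{[1,k],t} A_k \rangle$, because $\sum_{S\in\pi}|S|=k$ for every $\pi\in\mathrm{NC}([k])$. This accounts for the first term on the right-hand side of the lemma.

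The remaining task is to show that the splitting contributions regroup exactly as $\sum_{i<j}\langle M_{[i,j],t}\rangle\langle M_{[j,i],t}\rangle$. This is where the Kreweras complement enters: duplicating an index inside some block $S \in \pi$ corresponds, on the $K(\pi)$ side, to cutting the cyclic chain $[1,k]$ into two sub-chains $[i,j]$ and $[j,i]$, and non-crossing partitions of $[k]$ that are "split" at the pair $(i,j)$ decompose bijectively into pairs of non-crossing partitions of the two sub-chains (with Kreweras complements respected on each side). I expect this last combinatorial identity to be the main obstacle, because it requires care in tracking where the distinguished index $k$ lies with respect to $K(\pi)$, and in verifying that the partial-trace factorisation $\mathrm{pTr}_{K(\pi)}$ matches the product structure $\langle M_{[i,j]}\rangle\langle M_{[j,i]}\rangle$. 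I would sanity-check the full identity on $k=1,2,3$ directly (where it is an elementary computation using $\partial_t m = m/2$ and the explicit formulas for $M_{[1,2]}$, $M_{[1,3]}$) and then prove the general case by induction on $k$, using the bijection between non-crossing partitions split at a fixed pair and the product of non-crossing partitions of the two sub-chains. Applying \eqref{eq:Mdef} to the two sub-chains then produces exactly $\langle M_{[i,j],t}\rangle\langle M_{[j,i],t}\rangle$, completing the proof.
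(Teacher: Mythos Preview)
The paper does not give its own proof of this lemma: it is simply quoted as Lemma~4.8 of \cite{edgeETH}, so there is no argument here to compare against. Your outline is along the right lines---differentiate only the scalar factors $\prod_{S\in\pi} m_\circ[S]$ using \eqref{msc dd} and \eqref{eq:chardef}, extract the homogeneous part $\tfrac{k}{2}\langle M_{[1,k],t}A_k\rangle$ from $\sum_{S\in\pi}|S|=k$, and then match the remaining ``splitting'' terms to $\sum_{i<j}\langle M_{[i,j],t}\rangle\langle M_{[j,i],t}\rangle$ via the Kreweras duality between refining a block of $\pi$ and cutting the cyclic chain.

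That said, as written this is a plan rather than a proof. The step you flag as ``the main obstacle'' is precisely where all the content is, and it is not just a matter of checking $k\le 3$ and waving at an induction: you need an explicit bijection between, on one side, triples $(\pi,S,j)$ with $\pi\in\mathrm{NC}([k])$, $S\in\pi$, $j\in S$ arising from the $m(z_{j,t})\cdot m[S\cup\{j'\}]$ terms after M\"obius inversion to free cumulants, and on the other side, pairs of non-crossing partitions of the two arcs $[i,j]$ and $[j,i]$ for some pair $i<j$. You also need to argue that the partial-trace weight $\langle\mathrm{pTr}_{K(\pi)}(\cdots)A_k\rangle$ factors exactly into the two traces in $\langle M_{[i,j]}\rangle\langle M_{[j,i]}\rangle$, which requires tracking where the block $\mathfrak{B}(k)\in K(\pi)$ sits relative to the cut. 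None of this is done in your sketch. If you want a self-contained argument, either carry out this bijection carefully (the proof in \cite{edgeETH} does essentially this), or bypass the combinatorics by verifying the identity via the recursion that defines $M_{[1,k]}$ (i.e.\ the Dyson-type equation for the deterministic approximation), which gives a cleaner induction on $k$.
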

	Adding and subtracting the deterministic approximation of each term in \eqref{eq:flowka} and using Lemma~\ref{lem:cancM} we thus obtain
		\begin{equation}
		\begin{split}
			\label{eq:flowkmaa}
			\dd \langle (G_{[1,k],t}-M_{{[1,k]},t})A_k\rangle&=\frac{1}{\sqrt{N}}\sum_{a,b=1}^N \partial_{ab}  \langle {G}_{[1,k],t}A_k\rangle\dd B_{ab,t}+\frac{k}{2}\langle {G}_{[1,k],t}A_k\rangle\dd t+\sum_{i,j=1\atop i< j}^k\langle {G}_{[i,j],t}-M_{[i,j],t}\rangle\langle M_{[j,i],t}\rangle \dd t\\
			&\quad+\sum_{i,j=1\atop i< j}^k\langle  M_{[i,j],t}\rangle\langle G_{[j,i],t}-M_{[j,i],t}\rangle\dd t +\sum_{i,j=1\atop i< j}^k\langle {G}_{[i,j],t}-M_{[i,j],t}\rangle\langle G_{[j,i],t}-M_{[j,i],t}\rangle \dd t \\
			&\quad+ \sum_{i=1}^k \langle G_{i,t}-m_{i,t}\rangle \langle {G}^{(i)}_{[1,k],t}A_k\rangle \dd t .
		\end{split}
	\end{equation}

	We point out that, using a simple change of variables, the term $\langle {G}_{[1,k]}A_k\rangle$ amounts to a simple rescaling $e^{kt/2}$,
	so we will ignore it. The quadratic variation of the martingale term in \eqref{eq:flowkmaa} is given by
	\begin{equation}
	\label{eq:quadvarexp}
	\frac{1}{N}\sum_{a,b=1}^N \big|\partial_{ab}\langle G_{[1,k]}A_k\rangle\big|^2\lesssim \sum_{i=1}^k\frac{\langle\Im G_i(A_i\Im G_{i+1}\dots A_{i-1})\Im G_i(A_i\Im G_{i+1}\dots A_{i-1})^*\rangle}{N^2\eta_{i,t}^2},
	\end{equation}
	 where we also used the {\it Ward-identity} $G_iG_i^* = \eta^{-1}_i\Im G_i$. \nc
	Notice that the rhs. of \eqref{eq:quadvarexp} naturally contains chains of $2k$ resolvents. However, to have a closed system of master inequalities for products of resolvents of length $k$, we split the chain of length $2k$ into the product of two 
	chains of length $k$. 	
	For this purpose we use the following \emph{reduction inequality} which will be proven at the end of this section. For any fixed matrices $R,Q\in \C^{N\times N}$, and spectral parameters $z,w\in\C\setminus \R$, we have\footnote{\label{ftn:absval}We point out that the bound $\Psi_k(t)\le\psi_k$ holds also for chains when some resolvents $G$'s are replaced by their absolute value $|G|$. This can be easily seen using the integral representation
\begin{equation} \label{eq:absGintrep}
	|G(E+\ii\eta)|=\frac{2}{\pi}\int_0^\infty \frac{\Im G(E+\ii\sqrt{\eta^2+v^2})}{\sqrt{\eta^2+v^2}}\,\dd v\,,
\end{equation}
	together with the bound \eqref{eq:defpsi} for chains containing only resolvents (see e.g. \cite[Lemma 5.1]{multiG}).}
\begin{equation}
\label{eq:redin}
\langle \Im G(z) Q G(w) R \Im G(z) R^* G (w)^* Q^* \rangle \lesssim N \langle \Im G(z) Q |G(w)| Q^* \rangle \langle \Im G(z) R^* |G(w)| R \rangle.
\end{equation}

We now focus on the case $k$ being even for notational simplicity. Combining \eqref{eq:quadvarexp} with \eqref{eq:redin} used for $z=z_1, w=z_{k/2}$ and
\[
Q = A_1 G_2 \dots G_{k/2-1}A_{k/2-1}\,, \qquad\qquad\quad R  = A_{k/2} G_{k/2+1} \dots G_{k}A_k,
\]
together with $\Psi_k(t)\prec \psi_k$, we obtain the following bound for the quadratic variation
	\begin{equation}
		\label{eq:imGimGaa}
		\frac{1}{N}\sum_{a,b=1}^N \big|\partial_{ab}\langle G_{[1,k],s}A_k\rangle\big|^2\prec \frac{\mathfrak{m}_k(s)^2}{N\eta_s^2}+\frac{N\ell_t\mathfrak{s}(s)^2\psi_k^2}{N^3\eta_s^2}.
	\end{equation}
	Then, using the Burkholder--Davis--Gundy (BDG) inequality, we conclude that the martingale
	 term in \eqref{eq:flowkmaa} is bounded by
	\begin{equation}
		\label{eq:prob2}
		\begin{split}
		\frac{N}{{\mathfrak{s}_k(t)}\sqrt{N\ell_t}}\left[\int_0^t \left(\frac{\ell_s\mathfrak{m}_k(s)^2}{N\eta_s^2}+\frac{N\ell_s\mathfrak{s}_k(s)^2\psi_k^2}{N^3\eta_s^2}\right)\,\dd s\right]^{1/2}&\lesssim \frac{\mathfrak{m}_k(t)}{\mathfrak{s}_k(t)}+\frac{\psi_k}{\sqrt{N}}\left(\int_0^t \frac{1}{\eta_s^2}\,\dd s\right)^{1/2} \\
		&\lesssim1+\frac{\psi_k}{\sqrt{N\ell_t}},
		\end{split}
	\end{equation}
	with very high probability. Here  in the first inequality we used \eqref{eq:4in}, and in the second inequality we used \eqref{eq:2in} for the first term and~\eqref{etaint} for the second term. 
	
	Similarly, for odd $k$, using \eqref{eq:redin} for $z=z_1, w=z_{(k-1)/2}$ and
	\[
	Q = A_1 G_2 \dots G_{(k-3)/2}A_{(k-3)/2}\,, \qquad\qquad\quad R  = A_{(k-1)/2} G_{(k+1)/2} \dots G_{k}A_k
	\]
	we get a bound $1+\sqrt{\psi_{k-1}\psi_{k+1}}/(N\ell_t)^{1/4}$ for the martingale term.  
		
	Next, using \eqref{eq:shortenchain}, we estimate the contribution of the last term in~\eqref{eq:flowkmaa} 
	\begin{equation}
		\label{eq:proba}
		\begin{split}
			\frac{N}{{\mathfrak{s}_k(t)}\sqrt{N\ell_t}}\int_0^t \langle G_{i,s}-m_{i,s}\rangle \langle G_{[1,k],s}^{(i)}A_k\rangle\,\dd s&\prec \frac{N}{{\mathfrak{s}_k(t)}\sqrt{N\ell_t}} \int_0^t\frac{1}{N\eta_s^2} \left(\ell_s\mathfrak{m}_k(s)+\frac{\psi_k\mathfrak{s}_k(s)\sqrt{N\ell_s}}{N}\right)\,\dd s \\
			&\lesssim \frac{1}{\sqrt{N\ell_t}}+\frac{\psi_k}{N\ell_t},
		\end{split}
	\end{equation}
	where in the first inequality we used $|\langle G_{i,s}-m_{i,s}\rangle|\prec (N\eta_s)^{-1}$ from \eqref{eq:singlegllaw}, and in the second inequality we used \eqref{eq:2in}, \eqref{eq:4in} and \eqref{etaint}.

	Using again \eqref{eq:shortenchain}, we estimate the last term in the first line and the first term in the second line of \eqref{eq:flowkmaa} as follows:
	\begin{equation}
		\begin{split}
			\label{eq:prob3aa}
		\frac{N}{{\mathfrak{s}_k(t)}\sqrt{N\ell_t}}&\int_0^t\sum_{i,j=1\atop i< j}^{k-1}\langle M_{[i,j],s}\rangle\langle G_{[j,i],s}-M_{[j,i],s}\rangle\, \dd s \\ &\prec\frac{N}{{\mathfrak{s}_k(t)}\sqrt{N\ell_t}} \int_0^t\sum_{i,j=1\atop i+1< j}^{k-1} \frac{\ell_s\mathfrak{m}_{j-i}(s)}{\eta_s}\frac{\psi_{k-(j-i)}\mathfrak{s}_{k-j+i}(s)\sqrt{N\ell_s}}{N\eta_s}\,\dd s \lesssim \sum_{j=1}^{k-2}\psi_j,
		\end{split}
	\end{equation}
	where in the first inequality we used that $\langle M_{[i,i+1],s}\rangle=0$ and \eqref{eq:bMm} to estimate $\langle M_{[i,j],s}\rangle$, and in the second inequality we used \eqref{eq:1in}--\eqref{eq:2in}, \eqref{eq:4in} and \eqref{etaint}.

	Finally, for the last term in the second line of \eqref{eq:flowkmaa} we estimate:
	\begin{equation}
		\begin{split}
		\label{eq:prob3aaa}
			&\frac{N}{{\mathfrak{s}_k(t)}\sqrt{N\ell_t}} \int_0^t\sum_{i,j=1\atop i< j}^{k-1}\langle G_{[i,j],s}-M_{[i,j],s}\rangle\langle G_{[j,i],s}-M_{[j,i],s}\rangle \,\dd s \\
			&\qquad\qquad\qquad\quad\prec \frac{N}{{\mathfrak{s}_k(t)}\sqrt{N\ell_t}} \sum_{i,j=1\atop i< j}^{k-1}\int_0^t\frac{\psi_{j-i}\mathfrak{s}_{j-i}(s)\sqrt{N\ell_s}}{N\eta_s}\frac{\psi_{k-(j-i)}\mathfrak{s}_{k-j+i}(s)\sqrt{N\ell_s}}{N\eta_s}\,\dd s \\
			&\qquad\qquad\qquad\quad\lesssim  \frac{1}{\sqrt{N\ell_t}}\sum_{j=1}^{k-1}\psi_j\psi_{k-j},
		\end{split}
	\end{equation}
	where in the second inequality we used \eqref{eq:1in}, \eqref{eq:4in} and \eqref{etaint}. Collecting all these bounds, using that $\ell_t\gtrsim \ell_T$ and $N\ell_T\ge 1$, we obtain
\begin{equation*}
\Psi_k(t)\prec 1+\frac{\psi_k+\bm1(k\,\mathrm{odd})\sqrt{\psi_{k-1}\psi_{k+1}}}{(N\ell_T)^{1/4}}+\sum_{j=1}^{k-2}\psi_j +\frac{1}{(N\ell_T)^{1/2}}\sum_{j=1}^{k-1}\psi_j\psi_{k-j}.
\end{equation*}	
Finally, using that $\psi_l\ge 1$ and $N\ell_T\ge 1$, we conclude \eqref{eq:masterkaa}.
\end{proof}
\begin{proof}[Proof of Proposition~\ref{pro:masterinII}]
The proof of \eqref{eq:masterkaaII} is very similar to the one of \eqref{eq:masterkaa}, for this reason we only explain the minor differences. All the terms in \eqref{eq:flowkmaa} are estimated exactly in the same way as in the proof 
of Proposition~\ref{pro:masterinI}  with the exception of the martingale term. 
 In fact, this is the only step where the estimate for $\Psi_k$ (first step) differs
	from the estimate on $\Phi_k$ (second step). Estimating a longer chain by two smaller ones loses a certain 
	$N\ell$ factor; this loss is unavoidable in the first step, but it can be avoided in the second one, once the a-priori bound is
	available. \nc

 More precisely, the estimates in \eqref{eq:proba}, \eqref{eq:prob3aaa}, after multiplying them by a factor $\sqrt{N\ell_t}$, are the same with the only minor difference that we can now use $\psi_l=1$. The estimate \eqref{eq:prob3aa} becomes (recall that $\langle M_{[i,i+1],s}\rangle=0$)
\begin{equation}
\frac{N}{{\mathfrak{s}_k(t)}}\int_0^t\sum_{i,j=1\atop i+1< j}^{k-1}\langle M_{[i,j],s}\rangle\langle G_{[j,i],s}-M_{[j,i],s}\rangle\, \dd s \prec\frac{N}{{\mathfrak{s}_k(t)}} \int_0^t\sum_{i,j=1\atop i+1< j}^{k-1} \frac{\ell_s\mathfrak{m}_{j-i}(s)}{\eta_s}\frac{\Phi_{k-(j-i)}\mathfrak{s}_{k-j+i}(s)}{N\eta_s}\,\dd s \lesssim 1,
\end{equation}
where the only difference with  \eqref{eq:prob3aa} is that now  in the second inequality we used that $\Phi_l(t)\prec 1$, for $l\le k-2$,
 by  the induction \nc assumption. Instead of the bound for the quadratic variation \eqref{eq:quadvarexp}, using that $\Phi_{2k}(t)\prec \sqrt{N\ell_t}$, the estimate in \eqref{eq:prob2} is replaced by
\begin{equation}
\begin{split}
\frac{N}{\mathfrak{s}_k(t)}\left[\int_0^t \frac{\langle\Im G_{1,s}(A_1\dots A_k)\Im G_{1,s}(A_k\dots A_1)^*\rangle}{N^2\eta_s^2}\,\dd s\right]^{1/2}&\prec \frac{N}{\mathfrak{s}_k(t)} \left[\int_0^t \left(\frac{\mathfrak{m}_{2k}(s)\ell_s}{N^2\eta_s^2}+\frac{\Phi_{2k}(s)\mathfrak{s}_{2k}(s)}{N^3\eta_s^2}\right)\,\dd s \right]^{1/2} \\
&\lesssim  \left[\int_0^t \left(\frac{\ell_t}{\eta_s^2}+\frac{\sqrt{N\ell_t}}{N\eta_s^2}\right)\,\dd s \right]^{1/2} \lesssim 1\,,
\end{split}
\end{equation}
where in the second inequality we used \eqref{eq:2in}, \eqref{eq:4in} and in the last inequality we used \eqref{etaint}. This concludes the proof of Proposition~\ref{pro:masterinII}.
\end{proof}

We conclude this section with the proof of \eqref{eq:redin}.

\begin{proof}[Proof of \eqref{eq:redin}]

By spectral decomposition  of $W$ ($\lambda_i, {\bm u}_i$ being its eigenvalues and eigenvectors) \nc we have
\begin{equation}
\begin{split}
\langle \Im G(z) Q G(w) R \Im G(z) R^* G (w)^* Q^* \rangle &=\frac{|\Im z|^2}{N}\sum_{i,j,k,l=1}^N\frac{\langle {\bm u}_i,Q{\bm u}_j\rangle\langle {\bm u}_j,R{\bm u}_k\rangle\langle {\bm u}_k,R^*{\bm u}_l\rangle\langle {\bm u}_l,Q^*{\bm u}_i\rangle}{|\lambda_i-z|^2(\lambda_j-w)|\lambda_k-z|^2(\lambda_l-\overline{w})}  \\
&\lesssim N \langle \Im G(z) Q |G(w)| Q^* \rangle \langle \Im G(z) R^* |G(w)| R \rangle,
\end{split}
\end{equation}
where in the last step we used Schwarz inequality
\[
\big|\langle {\bm u}_i,Q{\bm u}_j\rangle\langle {\bm u}_j,R{\bm u}_k\rangle\langle {\bm u}_k,R^*{\bm u}_l\rangle\langle {\bm u}_l,Q^*{\bm u}_i\rangle\big|\lesssim \big|\langle {\bm u}_i,Q{\bm u}_j\rangle\langle {\bm u}_k,R^*{\bm u}_l\rangle\big|^2+\big|\langle {\bm u}_j,R{\bm u}_k\rangle\langle {\bm u}_l,Q^*{\bm u}_i\rangle\big|^2\,. \qedhere
\]
\end{proof}

\section{Green function comparison: Proof of Proposition \ref{prop:zag}} \label{sec:GFT}
In this section, we remove the Gaussian component introduced in Proposition \ref{prop:zig} using
 a Green function comparison (GFT) argument, i.e.~we prove Proposition \ref{prop:zag}. The basic idea is the same as in \cite[Section~5]{edgeETH}: We perform a \emph{self-consistent} GFT (i.e., given a local law for one ensemble, we aim to prove it for a different one) using an entry-by-entry
  Lindeberg replacement strategy in $O(N^2)$ many steps.  Note that, unlike here,  in typical applications of GFT to 
  answer universality questions, the local law is given as
  an a-priori input. Prior to~\cite{edgeETH} the GFT has been used in a similar spirit by Knowles and Yin \cite{KnowYin}
   in order to prove a single resolvent local law for ensembles, where the deterministic approximation 
   $M$ to $G$ is no longer a multiple of the identity (e.g.~deformed Wigner matrices). In contrast to our approach, they used a continuous interpolation between ensembles, but we stick with the entrywise Lindeberg replacement, which is easier
   to adjust to multiple resolvents, similarly as in \cite{edgeETH}. 

A characteristic property of the Lindeberg strategy, is that along the replacement procedure
  \emph{isotropic resolvent chains} naturally 
arise. In particular, 
we have to consider the isotropic analog 
of Theorem \ref{thm:main}, the average local law, as well, and need to show that also  
\begin{equation} \label{eq:isochain}
\big(G_1A_1...G_k A_k G_{k+1}\big)_{\bm x \bm y}
\end{equation}
concentrates around a deterministic value $\big(M(z_1, A_1, ... , z_k, A_k, z_{k+1})\big)_{\bm x \bm y}$ with $M$ given by \eqref{eq:Mdef}. This will also be done via the Zigzag strategy (recall the outline in the beginning of Section \ref{sec:proof}) in Section \ref{subsec:isoproof}. 

First, analogously to Lemma \ref{lem:Mbound}, we have the following bound on the deterministic approximation of \eqref{eq:isochain}, the proof of which is deferred to Appendix \ref{sec:addtech}. 
\begin{lemma}[Isotropic $M$-bounds] \label{lem:Mboundiso}
	Assume the setting of Lemma \ref{lem:Mbound} but with $k+1$ (instead of $k$) spectral parameters $z_1, ... , z_{k+1} \in \C \setminus \R$. 
	Then, for deterministic vectors $\bm x, \bm y \in \C^N$ with $\Vert \bm x \Vert , \Vert \bm y \Vert \lesssim 1$, it holds that\footnote{Analogously to Footnote \ref{ftn:odd}, we point out that the case of $k$ odd admits an improved bound by $\prod_{i \in [k]}  \Vert A_i \Vert^{\frac{1}{k}}  \,  \vertiii{A}_{\infty, \ell}^{1-\frac{1}{k} }$, but we do not follow this improvement for simplicity. } 
	\begin{equation} \label{eq:Mboundiso}
			\left| \langle \bm{x}, {M}_{[1,k+1]}(z_1, A_1, ... , A_k, z_{k+1} ) \bm{y} \rangle \right| \lesssim  
			 \prod_{i \in [k]} \vertiii{A_i}_{\infty, \ell}\,,
				%\langle |A_i|^2 \rangle^{k/2} + \Vert A_i \Vert^k \right)^{1/k} 
	\end{equation}
	where $\vertiii{A_i}_{\infty, \ell}$ has been introduced in Definition \ref{def:plSchatten}. 
\end{lemma}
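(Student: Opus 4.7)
The plan is to expand $\langle \bm{x}, M_{[1,k+1]} \bm{y} \rangle$ via the explicit formula \eqref{eq:Mdef} as a sum over the Catalan-many non-crossing partitions $\pi \in \mathrm{NC}([k+1])$, so that each summand factorises into the product of free cumulants $\prod_{S \in \pi} m_\circ[S]$ and the $\bm{x}$-$\bm{y}$ matrix element of $\mathrm{pTr}_{K(\pi)}(A_1,\dots,A_k)$. By the definition \eqref{eq:partrdef} of the partial trace, the latter is itself a product of normalised traces $\langle \prod_{j\in S'} A_j \rangle$ over all non-distinguished blocks $S' \in K(\pi)$, times the single ordinary matrix element $(\prod_{j\in S''} A_j)_{\bm{x} \bm{y}}$ for the distinguished block $S''$ (the one containing the cyclic marker). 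Since the number of $\pi$'s is bounded uniformly in $N$, it suffices to estimate a single summand for each fixed $\pi$.

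The first ingredient is a bound on the free cumulants: starting from the integral representation \eqref{msc dd} and the inductive relation \eqref{eq:freecumulant}, one obtains the standard estimate $|m_\circ[S]| \lesssim \ell^{1 - |S|}$ for $|S| \geq 2$, together with $|m_\circ[\{z\}]| \lesssim 1$. Multiplying over blocks of $\pi$ and invoking the Kreweras identity $|\pi| + |K(\pi)| = k+2$ yields $\prod_{S \in \pi} |m_\circ[S]| \lesssim \ell^{1 - |K(\pi)|}$, i.e.~one factor of $\ell^{-1}$ for every extra block of $K(\pi)$ beyond the trivial partition.

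The second ingredient is the matrix factor, where tracelessness enters crucially. Any singleton non-distinguished block $S' = \{j\}$ of $K(\pi)$ produces $\langle A_j \rangle = 0$ and kills the summand, so we may restrict to those $\pi$'s for which every block $S' \in K(\pi) \setminus \{S''\}$ has cardinality $m = |S'| \geq 2$. On such a block I would apply the normalised H\"older inequality $|\langle A_{j_1} \cdots A_{j_m} \rangle| \leq \prod_{i} \langle |A_{j_i}|^m \rangle^{1/m}$ followed by the elementary interpolation $\langle |A|^m \rangle^{1/m} \leq \langle |A|^2 \rangle^{1/m} \|A\|^{1 - 2/m} \leq \ell^{1/m} \vertiii{A}_{\infty, \ell}$, so that the block contributes precisely one power of $\ell$ (distributed as $\ell^{1/m}$ per matrix) times $\prod_{j \in S'} \vertiii{A_j}_{\infty, \ell}$. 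For the distinguished block I would use the trivial bound $|(A_{j_1} \cdots A_{j_{|S''|}})_{\bm{x} \bm{y}}| \leq \|\bm{x}\| \|\bm{y}\| \prod_{j \in S''} \|A_j\| \lesssim \prod_{j \in S''} \vertiii{A_j}_{\infty,\ell}$, which carries no $\ell$-factor.

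Collecting these estimates for a fixed admissible $\pi$, the $\ell$-powers from the free cumulants and from the $|K(\pi)|-1$ non-distinguished blocks arrange as $\ell^{1 - |K(\pi)|} \cdot \ell^{|K(\pi)|-1} = 1$, so all powers of $\ell$ cancel exactly and one is left with precisely $\prod_{j \in [k]} \vertiii{A_j}_{\infty, \ell}$ per partition, as required in \eqref{eq:Mboundiso}. The main technical obstacle is this careful bookkeeping of $\ell$-powers across the Kreweras-dual structure together with the verification of the free cumulant bound $|m_\circ[S]| \lesssim \ell^{1-|S|}$ uniformly in the spectral parameters (including the edge regime, where the interplay with $\rho$ is delicate); these two points, however, mirror almost verbatim the corresponding steps in the proof of Lemma~\ref{lem:Mbound}.
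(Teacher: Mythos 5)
Your proposal is correct and follows essentially the same route as the paper: expand $M_{[1,k+1]}$ via \eqref{eq:Mdef} over the finitely many non-crossing partitions, use tracelessness to kill singleton non-distinguished blocks of $K(\pi)$, bound the free cumulants by inverse powers of the spectral scale, and convert the trace factors into $\vertiii{\cdot}_{\infty,\ell}$ via H\"older/Young, with the Kreweras identity closing the $\ell$-power bookkeeping. The only (harmless) difference is that the paper channels the partition-wise estimate through Lemma~\ref{lem:Mboundstrong}(b) with the slightly sharper prefactor $\eta^{-(s-1)}$ (proof cited from \cite{edgeETH}), whereas you prove the marginally weaker blockwise bound $|m_\circ[S]|\lesssim \ell^{1-|S|}$ directly; this suffices for the isotropic bound \eqref{eq:Mboundiso}, although it would not by itself recover the extra factor $\ell$ in the averaged Lemma~\ref{lem:Mbound}.
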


The analog of Theorem \ref{thm:main} is the following \emph{isotropic multi-resolvent local law}. The proof is given in Section \ref{subsec:isoproof}. 

\begin{theorem}[Isotropic multi-resolvent local laws] \label{thm:isolaw} Assume the setting of Theorem \ref{thm:main} but with $k+1$ (instead of $k$) spectral parameters $z_1, ... , z_{k+1} \in \C\setminus \R$ and let $\bm{x}, \bm{y} \in \C^N$ be deterministic vectors with $\Vert \bm x\Vert, \Vert \bm y \Vert \lesssim 1$. 
	Then it holds that 
\begin{equation} \label{eq:iso}
		\begin{split}
\left| \langle \bm x, {G}_{1} A_1 ... A_k{G}_{k+1} \bm y \rangle - \langle \bm x, {M}_{[1,k+1]}\bm y \rangle \right|  
\prec  \frac{1}{\sqrt{N\ell}} \prod_{i \in [k]} \vertiii{A_i}_{\infty,\ell}
\,,
		\end{split}
	\end{equation}
		where $\vertiii{A_i}_{\infty, \ell}$ has been introduced in Definition \ref{def:plSchatten}. 
\end{theorem}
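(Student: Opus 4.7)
The proof of Theorem~\ref{thm:isolaw} mirrors the Zigzag strategy of Section~\ref{sec:proof}, but now with the averaged local law of Theorem~\ref{thm:main} available as an input.

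\emph{Step 1 (Global isotropic law).} First I would establish the isotropic analog of Proposition~\ref{prop:initial}: for spectral parameters with $\min_j \mathrm{dist}(z_j,[-2,2])\ge\delta$ and deterministic $\bm x,\bm y$ of unit norm,
\begin{equation*}
	\left| \langle \bm x, G_1 B_1 \cdots B_k G_{k+1} \bm y \rangle - \langle \bm x, M_{[1,k+1]} \bm y\rangle \right| \prec \frac{1}{\sqrt{N}} \prod_{i\in[k]} \vertiii{B_i}_{\infty,d},
\end{equation*}
via the same cumulant expansion arguments cited for \cite[Appendix~B]{multiG}.

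\emph{Step 2 (Characteristic flow).} I would introduce the isotropic stochastic control quantity
\begin{equation*}
	\Phi^{\mathrm{iso}}_k(t) := \frac{\sqrt{N\ell_t}}{\prod_{i\in[k]} \vertiii{A_i}_{\infty,\ell_t}}\bigl|\langle \bm x,(G_{[1,k+1],t} - M_{[1,k+1],t})\bm y\rangle\bigr|,
\end{equation*}
together with its a priori version $\Psi^{\mathrm{iso}}_k(t):=\Phi^{\mathrm{iso}}_k(t)/(N\ell_t)^{1/4}$, and run It\^o's formula along the joint flow \eqref{eq:OUOUOU}--\eqref{eq:chardef}. This produces (i) a martingale term whose quadratic variation, after invoking the Ward identity, involves doubled isotropic chains of the form $\eta_{i,t}^{-2}N^{-1}(\Im G_i\,P\,|G'|^{2} P^{*}\Im G_i)_{\bm x \bm x}$; (ii) second order terms of the form $\langle G_{[i,j],t}-M_{[i,j],t}\rangle\,\langle \bm x,(G^{(\ell)}_{[1,k+1],t})\bm y\rangle$; (iii) the deterministic evolution of $M_{[1,k+1],t}$, which by (the isotropic analogue of) Lemma~\ref{lem:cancM} cancels the main parts of (ii).

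\emph{Step 3 (Reduction, master inequalities and iteration).} For the quadratic variation I would use the isotropic Cauchy--Schwarz reduction
\begin{equation*}
	\bigl(\Im G\,P\,|G'|^2 P^{*}\,\Im G\bigr)_{\bm x \bm x} \lesssim N\bigl(\Im G\,P\,|G'|\,P^{*}\bigr)_{\bm x \bm x}\,\langle \Im G\,R^{*}\,|G'|\,R\rangle,
\end{equation*}
splitting each doubled chain into a shorter \emph{isotropic} chain (controlled inductively by $\Phi^{\mathrm{iso}}_{j}$ for $j<k$, plus an $M$-term bounded via Lemma~\ref{lem:Mboundiso}) times a shorter \emph{averaged} chain (controlled by Theorem~\ref{thm:main} and Lemma~\ref{lem:Mbound}). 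In term (ii) the averaged factor is gained directly from Theorem~\ref{thm:main} after ``linearizing'' via~\eqref{eq:intrep}--\eqref{eq:absGintrep}. Then, using the BDG inequality exactly as in~\eqref{eq:prob2}--\eqref{eq:prob3aaa} and the integration rule~\eqref{etaint}, these estimates collapse into master inequalities of the schematic form
\begin{equation*}
	\Phi^{\mathrm{iso}}_k(t) \prec 1 + \sum_{j<k} \Phi^{\mathrm{iso}}_j(t) + \frac{1}{(N\ell_T)^{1/4}}\sum_{j\le k}\Phi^{\mathrm{iso}}_j(t),
\end{equation*}
which close by Lemma~\ref{lem:iteration} and a two-step induction on $k$, following Section~\ref{subsec:closing} verbatim (first $\Psi^{\mathrm{iso}}_k\prec 1$, then $\Phi^{\mathrm{iso}}_k\prec 1$). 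Removing the Gaussian component by a Lindeberg GFT argument identical in structure to Proposition~\ref{prop:zag} completes the proof.

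\textbf{Main obstacle.} The subtle point is the reduction step. Unlike \eqref{eq:redin} in the averaged setting, where Cauchy--Schwarz on the doubled chain produces two \emph{averaged} chains, in the isotropic case one factor remains isotropic and only the second becomes averaged. Consequently the isotropic hierarchy is not self-closing: it must be coupled to the averaged hierarchy from Theorem~\ref{thm:main}, and the Schatten bookkeeping has to match exactly the pairing $\vertiii{\cdot}_{\infty,\ell}$ on isotropic factors versus $\vertiii{\cdot}_{2k,\ell}$ on averaged factors. Getting this pairing right is what yields the precise prefactor $1/\sqrt{N\ell}$ in~\eqref{eq:iso} and prevents the self-improving loop from losing the optimal operator-norm weight on the isotropic side.
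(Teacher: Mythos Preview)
Your proposal is correct and follows essentially the same Zigzag approach as the paper (global law $\to$ characteristic flow with $\Psi_k^{\mathrm{iso}}/\Phi_k^{\mathrm{iso}}$ control quantities $\to$ GFT), including the key insight that the isotropic hierarchy must be coupled to the averaged one via a Cauchy--Schwarz reduction. Two small discrepancies worth noting: the paper's reduction inequality~\eqref{eq:rediniso} actually splits a long isotropic chain into \emph{two} shorter isotropic factors and one averaged factor (not one of each), and the resulting master inequality~\eqref{eq:masterisonew} carries a prefactor $(N\ell_T)^{-1/8}$ rather than $(N\ell_T)^{-1/4}$, with an extra $\psi_{k+1}^{\mathrm{iso}}$ term when $k$ is odd --- but these do not affect the structure of your argument.
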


Analogously to Theorem \ref{thm:main} and \eqref{eq:SchattenNormHSav}, this unifies and improves the previous 
local laws with operator norm (see \cite[Eq.~(2.11b) in Theorem 2.5]{multiG}) and Hilbert-Schmidt norm (see \cite[Corollary~2.4]{A2}) 
 in the bulk of the spectrum. This follows by estimating (in the relevant $\ell \lesssim 1$ regime)
\begin{equation} \label{eq:SchattenNormHSiso}
	\vertiii{A_i}_{\infty,\ell} \lesssim \begin{cases}
		\dfrac{ \Vert A_i \Vert}{\ell^{1/2}} \qquad &\text{for \  \cite[Eq.~(2.11b) in Theorem 2.5]{multiG}} \\[2mm]
		(N\ell)^{1/2}  \dfrac{\langle |A_i|^2 \rangle^{1/2}}{\ell^{1/2}} \qquad & \text{for \ \cite[Corollary~2.4]{A2}}
	\end{cases}
\end{equation}
in \eqref{eq:iso} for $N \ell > 1$ and every $i \in [k]$ by means of elementary inequalities. 

Note that in the isotropic law \eqref{eq:iso} we use the $\vertiii{A_i}_{\infty, \ell}$ norm instead of the $\vertiii{A_i}_{2k, \ell}$ norm in the corresponding averaged law \eqref{eq:mainresult}. By taking $A_{k+1} := N \bm y \bm x^*$ (assume for simplicity that $\langle \bm x, \bm y \rangle = 0$) in \eqref{eq:mainresult} for $k \to k+1$, it would in fact be possible to obtain an isotropic law immediately from Theorem~\ref{thm:main}. However, the bound provided in \eqref{eq:iso} is stronger than that, as can be seen by means of \eqref{eq:plSchattenrel} and $\vertiii{A_{k+1}}_{2(k+1), \ell} \sim N/(N\ell)^{\frac{1}{2(k+1)}}$, which yield
\begin{equation} \label{eq:Schattencompare}
\frac{1}{\sqrt{N \ell}} \prod_{i \in [k]} \vertiii{A_i}_{\infty, \ell} \lesssim \frac{1}{N} \prod_{i \in [k+1]} \vertiii{A_i}_{2(k+1), \ell} \,. 
\end{equation}
In case that all $A_i$ for $i \in [k]$ have large rank, the lhs.~of \eqref{eq:Schattencompare} is in fact much smaller (by some inverse $(N \ell)$-power) than the rhs.~of \eqref{eq:Schattencompare}.

As for Theorem \ref{thm:main}, we now give a concrete example \eqref{eq:opt1Aiso} how to use Theorem \ref{thm:isolaw} for general (i.e.~not necessarily traceless) matrices. 
\begin{example}
	\label{ex:smallkiso}
	For $k=1$, by \eqref{eq:iso}, we have
	\begin{equation}
		\label{eq:opt1Aiso}
		\left|\langle \bm{x}, G_1B G_2 \bm{y}\rangle - m_1 m_2 B_{\bm x \bm y} - \frac{m_1 m_2 \langle B \rangle \langle \bm x, \bm y \rangle}{1 -m_1 m_2}\right| \prec  \, \frac{|\langle B \rangle|}{\sqrt{N \ell^3}} + \frac{\langle |\mathring{B}|^2\rangle^{1/2} }{\sqrt{N\ell^2}} + \frac{\Vert \mathring{B} \Vert}{\sqrt{N \ell}}\,,
	\end{equation}
for a general matrix $B = \mathring{B} + \langle B \rangle$, completely analogously to \eqref{eq:avexample}. 
\end{example}

\subsection{Isotropic law: Proof of Theorem \ref{thm:isolaw}} \label{subsec:isoproof}
Analogously to the proof of the averaged law, Theorem~\ref{thm:main}, the proof of the isotropic law, 
Theorem \ref{thm:isolaw}, is also conducted via the Zigzag strategy with natural modifications, 
so we will be very brief. The initial step, the global law, has already been proven in \cite[Theorem 2.5]{multiG} (see Eq.~(2.11b) in the $d \ge 1$ regime, which is even stronger than \eqref{eq:mainresultinISO} below). 
\begin{proposition}[Step 1: Isotropic global law]
	\label{prop:initialISO}
	Let $W$ be a Wigner matrix satisfying Assumption~\ref{ass:entries}, and fix $k \in \N$ and $\delta>0$.  Consider spectral parameters $z_1, ... , z_{k+1} \in \C \setminus \R$, the associated resolvents $G_j = G(z_j) := (W-z_j)^{-1}$, and  deterministic matrices $B_1, ... , B_k \in \C^{N \times N}$. 
	Then, uniformly in spectral parameters satisfying $d := \min_{j \in [k+1]}\mathrm{dist}(z_j, [-2, 2])\ge \delta$, deterministic matrices $B_1, ... , B_k$ and deterministic vectors $\bm x, \bm y$ with $\Vert \bm x \Vert , \Vert \bm y \Vert \lesssim 1$, it holds that
	\begin{equation}
		\label{eq:mainresultinISO}
		\left| \big(G_1 B_1 ...  B_k G_{k+1}\big)_{\bm x \bm y} -  \big({M}_{[1,k+1]}\big)_{\bm x \bm y} \right| \prec \frac{\prod_{i \in [k]} \vertiii{B_i}_{\infty, d}}{\sqrt{Nd}} \,. 
	\end{equation}
\end{proposition}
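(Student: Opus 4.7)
\medskip

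\noindent \textbf{Proof proposal.} My plan is to establish Proposition~\ref{prop:initialISO} by induction on the chain length $k$, using the standard cumulant expansion / high-moment strategy that underlies the global-law proofs of \cite{multiG, A2, edgeETH}. For the base case $k=0$, I invoke the classical single-resolvent isotropic law in the global regime, which gives $|G_{\bm x\bm y}-m\langle\bm x,\bm y\rangle|\prec 1/\sqrt{Nd}$, together with the deterministic bound $\|G_j\|\lesssim 1/d$ that holds with very high probability because $\|W\|\le 2+\epsilon$ w.h.p. and $\mathrm{dist}(z_j,[-2,2])\ge\delta$. For the inductive step, I would compute $\E|X_{\bm x\bm y}|^{2p}$ for arbitrary fixed $p\in\N$, where $X:=G_1B_1\cdots B_kG_{k+1}-M_{[1,k+1]}$, and show that it is bounded by $\bigl(N^\xi (Nd)^{-1/2}\prod_i\vertiii{B_i}_{\infty,d}\bigr)^{2p}$ for every $\xi>0$. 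A Chebyshev/Markov argument then yields the $\prec$-estimate.

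The key step is to apply the cumulant expansion
\[
\E\bigl[W_{ab}\,F(W)\bigr]=\sum_{n\ge 1}\frac{\kappa_{n+1}(W_{ab})}{n!}\,\E\bigl[\partial_{ab}^{n}F(W)\bigr]
\]
to the leftmost factor, using the identity $WG_1=I+z_1G_1$ to expose one copy of $W$. The second-cumulant ($n=1$) contribution produces, up to lower-order terms, precisely the self-consistent equation that $M_{[1,k+1]}$ from \eqref{eq:Mdef} is designed to solve (cf.\ the derivation in \cite[Sec.~3]{thermalisation}), so the ``deterministic'' part of the expansion is absorbed into $(M_{[1,k+1]})_{\bm x\bm y}$. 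The higher cumulants ($n\ge 2$) each come with a factor $N^{-(n+1)/2}$ and, after repeated differentiation, produce sub-chains of strictly smaller length to which the inductive hypothesis applies, combined with the deterministic bounds $\|G_j\|\lesssim 1/d$. The only non-routine bookkeeping is to show that the error terms assemble into the claimed right-hand side involving $\vertiii{B_i}_{\infty,d}=\langle|B_i|^2\rangle^{1/2}/d^{1/2}+\|B_i\|$, and this is where the Ward identity $G_jG_j^*=\eta_j^{-1}\Im G_j$ is used, applied to estimate sums like $\sum_a|(B_iG_{i+1})_{a\bm v}|^2=\bigl(G_{i+1}^{*}B_i^{*}B_iG_{i+1}\bigr)_{\bm v\bm v}$ in two complementary ways--via $\|B_i\|^2$ directly, or via the trace $\langle B_i^{*}B_i\rangle$ after cycling--whose geometric interpolation yields exactly the $\vertiii{\cdot}_{\infty,d}$ weighting. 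The matching deterministic bound on $(M_{[1,k+1]})_{\bm x\bm y}$ is supplied by Lemma~\ref{lem:Mboundiso}.

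The main obstacle is extracting the \emph{Schatten}-norm form of the error rather than the cruder operator-norm form from \cite[Eq.~(2.11b)]{multiG}. A shortcut would be to set $A_{k+1}:=N\bm y\bm x^{*}$ in the averaged global law Proposition~\ref{prop:initial} and read off an isotropic bound from cyclicity of the trace; however, as observed around \eqref{eq:Schattencompare}, this route loses powers of $(Nd)$ when many of the $B_i$ have large rank, so it does not give \eqref{eq:mainresultinISO} in the sharp form stated. The careful version of the cumulant bookkeeping outlined above--essentially parallel to the appendices of \cite{multiG, A2, edgeETH} but retaining the $\langle|B_i|^2\rangle^{1/2}$ estimates rather than immediately dominating them by $\|B_i\|$--is what delivers the full strength of~\eqref{eq:mainresultinISO}.
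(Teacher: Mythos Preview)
Your cumulant-expansion outline would work, but it is considerable overkill: the paper disposes of Proposition~\ref{prop:initialISO} in one sentence by citing \cite[Theorem~2.5, Eq.~(2.11b)]{multiG} and observing that the operator-norm bound proved there is already \emph{stronger} than \eqref{eq:mainresultinISO} in the global regime. Indeed, since
\[
\vertiii{B_i}_{\infty,d}=\frac{\langle|B_i|^2\rangle^{1/2}}{d^{1/2}}+\|B_i\|\;\ge\;\|B_i\|,
\]
any error estimate with $\prod_i\|B_i\|$ on the right automatically implies the same estimate with the larger quantity $\prod_i\vertiii{B_i}_{\infty,d}$; and once $d\ge\delta$ is fixed, all residual $d$-powers in the \cite{multiG} bound are harmless $\delta$-dependent constants.

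Your ``main obstacle'' therefore reflects a misreading of the norm hierarchy. In the \emph{local} regime $\ell\ll 1$, the $\vertiii{\cdot}_{\infty,\ell}$ bound of Theorem~\ref{thm:isolaw} is genuinely sharper than the \cite{multiG} operator-norm bound, because the latter carries an extra factor $\ell^{-1/2}$ per observable (this is exactly \eqref{eq:SchattenNormHSiso}). In the \emph{global} regime $d\ge\delta$, however, that factor is $O(1)$ and the inequality runs the other way: the operator-norm estimate is the stronger one, so there is nothing to ``extract''. The careful Schatten bookkeeping you sketch is needed only later, in the characteristic-flow step (Proposition~\ref{prop:zigISO}), where $\ell$ shrinks below order one.
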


In the second step, the global law 
 is propagated to a local law through the characteristic flow \eqref{eq:OUOUOU}--\eqref{eq:chardef}, thereby introducing an order one Gaussian component. The proof of Proposition \ref{prop:zigISO} is postponed to Appendix \ref{sec:isocase}.  
\begin{proposition}[Step 2: Isotropic characteristic flow] \label{prop:zigISO}
	Fix any $\epsilon, \gamma>0$, a time $0\le T\le \gamma$, and $K \in \N$. Consider $z_{1,0},\dots,z_{K+1,0}\in\C \setminus \R$ as initial conditions to the solution $z_{j,t}$ of \eqref{eq:chardef} for $0 \le t \le T$ and define $G_{j,t}:=G_t(z_{j,t})$ as well as $\eta_{j,t}:=|\Im z_{j,t}|$ and $\rho_{j,t}:=\pi^{-1}|\Im m(z_{j,t})|$. 
	
	Let $k \le K$, define $\ell_t := \min_{j \in [k+1]} \eta_{j,t} \rho_{j,t}$ and recall \eqref{eq:plSchatten}. Assuming that 
		\begin{equation}
		\label{eq:inass0ISO}
		\left|  \big({G}_{1,0} A_1 ...  A_{k} {G}_{k+1,0}\big)_{\bm x \bm y} - \big({M}_{[1,k+1],0}\big)_{\bm x \bm y} \right| \prec \frac{1}{\sqrt{N\ell_0}} \prod_{i \in [k]} \vertiii{A_i}_{\infty,\ell_0}
			\end{equation}
	holds uniformly for any $k \le K$, any choice of deterministic traceless $A_1, ... , A_k$, any choice of $z_{i,0}$'s such that $N \ell_0 \ge N^\epsilon$ and $\max_{j\in [k]} |z_{j,0}| \le N^{1/\epsilon}$, and all deterministic vectors $\Vert \bm x \Vert , \Vert \bm y \Vert \lesssim 1$, then we have
	\begin{equation}
		\label{eq:flowgISO}
		\left|  \big({G}_{1,T} A_1 ...  A_{k} {G}_{k+1,T}\big)_{\bm x \bm y} - \big({M}_{[1,k+1],T}\big)_{\bm x \bm y} \right|\prec \frac{1}{\sqrt{N\ell_T}} \prod_{i \in [k]} \vertiii{A_i}_{\infty,\ell_T}
	\end{equation}
	for any $k \le K$, again uniformly in traceless matrices $A_i$, in deterministic vectors $\bm x, \bm y$ with $\Vert \bm x \Vert, \Vert \bm y \Vert \lesssim 1$ and
	in spectral parameters satisfying $ N \ell_T  \ge N^{\epsilon}$ and $\max_{j\in [k]} |z_{j,T}| \le N^{1/\epsilon}$.
\end{proposition}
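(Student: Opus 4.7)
The plan is to mirror the proof of Proposition~\ref{prop:zig} (the averaged characteristic flow statement), adjusting the stochastic control quantities and the reduction inequality to the isotropic setting. In analogy with \eqref{eq:defphi}--\eqref{eq:defpsi}, introduce
\begin{equation*}
\Phi_k^{\mathrm{iso}}(t) := \frac{\sqrt{N\ell_t}}{\prod_{i\in[k]}\vertiii{A_i}_{\infty,\ell_t}}\, \big|\big(G_{[1,k+1],t}-M_{[1,k+1],t}\big)_{\bm x\bm y}\big|, \qquad \Psi_k^{\mathrm{iso}}(t) := \frac{\Phi_k^{\mathrm{iso}}(t)}{\sqrt{N\ell_t}}\,.
\end{equation*}
By \eqref{eq:inass0ISO} we have $\Phi_k^{\mathrm{iso}}(0)\prec 1$, and the goal is $\Phi_k^{\mathrm{iso}}(T)\prec 1$. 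As before, I would first prove the a priori bound $\Psi_k^{\mathrm{iso}}(t)\prec 1$ by a self-improving iteration and then feed it back to obtain $\Phi_k^{\mathrm{iso}}(t)\prec 1$.

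First I would apply It\^o's formula to $(G_{[1,k+1],t})_{\bm x\bm y}$ along the OU flow \eqref{eq:OUOUOU} with spectral parameters evolving by \eqref{eq:chardef}. The resulting drift decomposes, exactly as in \eqref{eq:flowkmaa}, into: (i) contributions in which a single $G_{j,t}^2$ appears inside the chain, obtained from the second derivative in direction $ab$; (ii) tracial cross-terms of the form $\langle G_{[i,j],t}\rangle (G_{[j,i],t})_{\bm x\bm y}$ with $i<j$; and (iii) the $\partial_t M_{[1,k+1],t}$ term, which by the isotropic analog of Lemma~\ref{lem:cancM} (derivable by the same argument) cancels the leading $M$-valued drift. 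After adding and subtracting deterministic approximations, what remains are error terms expressible through shorter chains minus their $M$'s.

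The quadratic variation of the martingale, via the Ward identity, reads
\begin{equation*}
\frac{1}{N}\sum_{a,b}\big|\partial_{ab}(G_{[1,k+1],t})_{\bm x\bm y}\big|^2 \lesssim \sum_{i,j=1}^{k+1}\frac{1}{\eta_{i,t}\eta_{j,t}}\bigl(\Im G_{i,t}\,Q_{ij}\,\Im G_{j,t}\,Q_{ij}^*\bigr)_{\bm x\bm x}\,,
\end{equation*}
where each $Q_{ij}$ is itself a chain of resolvents and $A$'s. The key step is the isotropic analog of the reduction inequality \eqref{eq:redin}: by spectral decomposition and Cauchy--Schwarz, each summand splits into the product of an \emph{isotropic} chain (retaining the $\bm x$-boundary) of length $\sim k+1$ and an \emph{averaged} chain of length $\sim k+1$. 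Crucially, the averaged factor is controlled by the averaged local law \eqref{eq:flowg} of Proposition~\ref{prop:zig}, which has already been established in the present Gaussian-divisible regime, and the isotropic factor by the inductive hypothesis $\Psi_{k'}^{\mathrm{iso}}\prec\psi_{k'}^{\mathrm{iso}}$ for $k'\le k+1$. The remaining drift terms are treated exactly as in the proof of Proposition~\ref{prop:zig}: $G^2$-chains are reduced to single resolvents via \eqref{eq:intrep}, and tracial factors $\langle G_{[i,j],t}-M_{[i,j],t}\rangle$ are controlled by the $\Phi$-bound already proved in Proposition~\ref{prop:zig}.

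Putting all bounds together with BDG for the martingale and the time-integration rule \eqref{etaint} yields a triangular system of master inequalities for $\{\Psi_k^{\mathrm{iso}}\}$ of the same structure as \eqref{eq:masterkaa}, but now with the $\vertiii{\cdot}_{\infty,\ell_t}$ Schatten norms on the right-hand side. The iteration scheme of Lemma~\ref{lem:iteration}, applied inductively in $k$ with the same even/odd parity handling as in Section~\ref{subsec:closing}, then gives $\Psi_k^{\mathrm{iso}}(t)\prec 1$. Re-running the argument with this a priori bound — so that in particular $\Phi_{2(k+1)}^{\mathrm{iso}}(t)\prec\sqrt{N\ell_t}$ may be used in the quadratic variation, exactly parallel to the proof of Proposition~\ref{pro:masterinII} — closes the hierarchy and yields $\Phi_k^{\mathrm{iso}}(T)\prec 1$, which is \eqref{eq:flowgISO}. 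The main obstacle is the isotropic reduction inequality: one must split the length-$(2k+2)$ chain arising from the Ward identity optimally into an isotropic chain carrying the $\vertiii{A_i}_{\infty,\ell_t}$-norms and an averaged chain carrying the $\vertiii{A_i}_{2(k+1),\ell_t}$-norms, so that the sharper target bound $1/\sqrt{N\ell_T}$ (rather than $1/N$) is preserved and the averaged input from Proposition~\ref{prop:zig} can be applied at its full optimality.
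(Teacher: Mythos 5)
Your plan coincides with the paper's own proof (Appendix~\ref{sec:isocase}): there the isotropic control quantities $\Phi_k^{\rm iso},\Psi_k^{\rm iso}$ are introduced, the flow is obtained by taking $A_{k+1}=N\bm y\bm x^*$ in \eqref{eq:flowka} together with the cancellation Lemma~\ref{lem:cancM}, the quadratic variation is handled via the Ward identity and the isotropic reduction inequality \eqref{eq:rediniso} (spectral decomposition plus Schwarz, splitting the long chain into an isotropic factor and an averaged factor controlled by the already established averaged law of Proposition~\ref{prop:zig}), and the argument is closed by master inequalities, iteration and the two-step $\Psi^{\rm iso}\to\Phi^{\rm iso}$ bootstrap, exactly as you propose. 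The only deviations are cosmetic: the paper takes $\Psi^{\rm iso}_k=\Phi^{\rm iso}_k/(N\ell_t)^{1/4}$ rather than your $\Phi^{\rm iso}_k/\sqrt{N\ell_t}$, and the quadratic variation factorizes as $\frac1N\sum_i\big[G_{[1,i]}(G_{[1,i]})^*\big]_{\bm x\bm x}\big[(G_{[i,k+1]})^*G_{[i,k+1]}\big]_{\bm y\bm y}$, i.e.\ an $\bm x\bm x$-block times a $\bm y\bm y$-block, rather than the single $\bm x\bm x$ quadratic form you wrote.
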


In the third and final step, we remove the Gaussian component introduced in Proposition \ref{prop:zigISO} by a GFT argument. The proof of Proposition \ref{prop:zagISO} is given in Section \ref{subsubsec:GFTiso} below.

\begin{proposition}[Step 3: Isotropic Green function comparison] \label{prop:zagISO}
	Let $H^{(\bm v)}$ and $H^{(\bm w)}$ be two $N \times N$ Wigner matrices with matrix elements given by the random variables $v_{ab}$ and $w_{ab}$, respectively, both satisfying Assumption \ref{ass:entries} and having matching moments up to third order, i.e.
	\begin{equation} \label{eq:momentmatchISO}
		\E \bar{v}_{ab}^u v_{ab}^{s-u} = \E \bar{w}_{ab}^u w_{ab}^{s-u}\,, \quad s \in \{0,1,2,3\}\,, \quad u \in \{0,...,s\}\,. 
	\end{equation}
	Fix $K \in \N$ and consider spectral parameters $z_1, ... , z_{K+1} \in \C \setminus \R$ satisfying $ \min_j N \eta_j \rho_j  \ge N^{\epsilon}$ and $\max_j |z_j| \le N^{1/\epsilon}$ for some $\epsilon > 0$ and the associated resolvents $G_j^{(\#)} = G^{(\#)}(z_j) := (H^{(\#)}-z_j)^{-1}$ with $\# = \bm v, \bm w$. Pick traceless matrices $A_1, ... , A_K \in \C^{N \times N}$.

	For any $k \le K$, consider any subset of cardinality $k+1$ of the $K+1$ spectral parameters, and similarly, consider any subset of cardinality $k$ of the deterministic matrices. Relabeling them by $[k+1]$ and $[k]$, respectively, setting $\ell := \min_{j \in [k+1]} \eta_j \rho_j$ and recalling \eqref{eq:plSchatten}, 
	we assume that for $G_j = G_j^{(\bm v)}$ we have
	\begin{equation} \label{eq:zagmultiGISO}
		\left|  \big({G}_{1} A_1 ...  A_{k} {G}_{k+1}\big)_{\bm x \bm y} - \big({M}_{[1,k+1]}\big)_{\bm x \bm y} \right|\prec \frac{1}{\sqrt{N\ell}} \prod_{i \in [k]} \vertiii{A_i}_{\infty,\ell} \,,
	\end{equation}
	uniformly in all choices of subsets of $z$'s and $A$'s and in bounded deterministic vectors $\Vert \bm x \Vert, \Vert \bm y \Vert \lesssim 1$.

	Then, \eqref{eq:zagmultiGISO} also holds for the ensemble $H^{(\bm{w})}$, i.e. for $G_j = G_j^{(\bm w)}$,
	uniformly in all choices of subsets of $z$'s and $A$'s and in bounded deterministic vectors $\Vert \bm x \Vert, \Vert \bm y \Vert \lesssim 1$.
\end{proposition}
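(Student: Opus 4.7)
The plan is to follow the self-consistent Green function comparison (GFT) strategy of \cite[Section~5]{edgeETH}, suitably adapted to the isotropic multi-resolvent setting with Schatten-norm error terms. The basic mechanism is to compare high moments $\E |X^{(\bm v)}|^{2p}$ and $\E |X^{(\bm w)}|^{2p}$ of the normalized isotropic error
\begin{equation*}
X^{(\#)} := \big(G^{(\#)}_1 A_1 \cdots A_k G^{(\#)}_{k+1}\big)_{\bm x \bm y} - \big(M_{[1,k+1]}\big)_{\bm x \bm y}, \qquad \# \in \{\bm v, \bm w\},
\end{equation*}
via an entry-by-entry Lindeberg replacement. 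Combined with Markov's inequality, such a moment comparison transfers the $\prec$-bound \eqref{eq:zagmultiGISO} from $H^{(\bm v)}$ to $H^{(\bm w)}$.

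Concretely, I would order the independent entry pairs of $W$, introduce interpolating ensembles $H^{(\gamma)}$, $\gamma = 0, 1, \ldots, O(N^2)$, in which the first $\gamma$ entries are taken from $\bm w$ and the rest from $\bm v$, and for each $\gamma$ Taylor expand the difference $\E |X^{(\gamma)}|^{2p} - \E |X^{(\gamma-1)}|^{2p}$ to fifth order in the single entry $w_{ab}$ being replaced. The moment-matching assumption \eqref{eq:momentmatchISO} cancels all contributions of orders $s \le 3$, so the telescoping sum over $\gamma$ is controlled by the fourth-order terms and the fifth-order remainder; the remainder can be handled in a standard way using the trivial a-priori bound $\Vert G(z)\Vert \le 1/|\Im z|$ for one of the inserted resolvents.

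The technical core is the estimation of the fourth-order term. Each differentiation $\partial_{ab}$ inserts the rank-one matrix $E^{ab} := \bm e_a \bm e_b^*$ between two consecutive resolvents, producing longer chains of length up to $k+5$ with the basis vectors $\bm e_a, \bm e_b$ at the inserted positions. After the Leibniz expansion of $(X^{(\gamma)})^p (\bar X^{(\gamma)})^p$ and the summation over $a, b$ weighted by $\E |v_{ab}|^4 + \E |w_{ab}|^4 \lesssim N^{-2}$, Cauchy--Schwarz and the identity $\sum_a \bm e_a \bm e_a^* = I$ reduce the resulting expressions to combinations of (i) isotropic multi-resolvent chains with bounded deterministic vectors and (ii) averaged multi-resolvent chains, of lengths up to $k+5$. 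These can then be controlled by the input bound \eqref{eq:zagmultiGISO} together with the averaged analog \eqref{eq:zagmultiG} from Proposition~\ref{prop:zag} (applied to subchains), provided the parameter $K$ in the proposition is chosen sufficiently large at the outset.

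The main obstacle will be the careful bookkeeping of Schatten norms of the matrices $A_i$ through the insertions of $E^{ab}$. Unlike in the operator-norm argument of \cite[Section~5]{edgeETH}, each insertion splits the chain into subchains whose bounds, obtained from the input isotropic/averaged local laws, involve Schatten norms; these must recombine---via the super-multiplicativity \eqref{eq:1in} and the other relations in Lemma~\ref{lem:m-relnoIM}---to reproduce the target factor $\prod_{i \in [k]} \vertiii{A_i}_{\infty,\ell}$ on the right-hand side of \eqref{eq:zagmultiGISO}. The $N^{-2}$ from $\E |w_{ab}|^4$ combined with the $O(N^2)$ summation over replacement steps will then exactly balance the gain in smallness per step, yielding the desired moment comparison up to an $N^\epsilon$-factor absorbed by $\prec$.
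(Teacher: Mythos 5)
Your overall strategy --- entry-by-entry Lindeberg replacement for high moments of the normalized isotropic error, cancellation of the orders $s\le 3$ by \eqref{eq:momentmatchISO}, and control of the fourth-order terms --- is the same self-consistent GFT scheme that the paper imports from \cite[Section~5.2]{edgeETH}. However, two of your key steps would fail as written. First, the resolvent chains produced by the fourth-order (and remainder) terms at replacement step $\gamma$ involve the partially replaced matrices $H^{(\gamma)}$, for which neither the hypothesis \eqref{eq:zagmultiGISO} (assumed only for $H^{(\bm v)}$) nor Proposition~\ref{prop:zag} is available; establishing such bounds along the interpolation is exactly what is being proved. The argument must therefore be genuinely self-consistent: the fluctuating $(G-M)$-factors are estimated by H\"older against the same high moments of the normalized errors of the interpolating ensembles (for all chain lengths up to $K$ simultaneously) that one is propagating, and the resulting system is closed by a Gronwall-type iteration over the $O(N^2)$ steps. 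Invoking Proposition~\ref{prop:zag} here would moreover be circular in the paper's architecture, since the averaged comparison in Section~\ref{subsubsec:GFTav} itself consumes the isotropic law along the replacement; in fact no averaged input is needed --- the isotropic comparison closes entirely within isotropic quantities and the relations of Lemma~\ref{lem:m-relnoIMiso} (rather than Lemma~\ref{lem:m-relnoIM}).

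Second, your final bookkeeping (``the $N^{-2}$ from $\E|w_{ab}|^4$ combined with the $O(N^2)$ summation over replacement steps will then exactly balance'') is precisely the traditional per-step GFT accounting, and the paper stresses that it fails here: for the fourth-order terms consisting solely of deterministic $M$-factors, with no $(G-M)$ smallness left, the individual replacement step is too large by a factor $(N\ell)^{1/2}$, cf.~\eqref{eq:d=1exampleCase2Schatten}. The missing ingredient is the \emph{gain from the summation} over all replacement positions: since the $M$-factors do not depend on the step, one sums quantities like $|(M_{k_1+1})_{\bm x \bm e_i}|$ explicitly over $i,j$ and uses the Schwarz bound \eqref{eq:schwarzfirstSchatten}, $\sum_i |(M_{k_1+1})_{\bm x \bm e_i}| \le \sqrt{N}\,\big((|M_{k_1+1}|^2)_{\bm x \bm x}\big)^{1/2} \lesssim \sqrt{N}\, \m^{\rm iso}_{k_1}$, to recover exactly the factor $N^{-1/2}$ lost in the naive per-step estimate, cf.~\eqref{eq:d=1exampleCase2SUMSchatten}. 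Your Cauchy--Schwarz/$\sum_a \bm e_a \bm e_a^* = I$ remark gestures at this mechanism, but it is not wired into the step count, so as written the pure-$M$ terms are not controlled. A smaller point: expanding only to fifth order and bounding the remainder by the trivial estimate $\Vert G \Vert \le 1/|\Im z|$ is insufficient when $\ell$ is close to $N^{-1}$ and $2p$ chains are multiplied; one needs to expand to an order depending on $\epsilon$, $k$ and $p$, as in \cite{edgeETH}.
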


Based on Propositions \ref{prop:initialISO}--\ref{prop:zagISO}, the proof of Theorem \ref{thm:isolaw} is concluded in the same way as the proof of Theorem \ref{thm:main} in the end of Section \ref{sec:proof}. \qed

 \subsection{GFT argument: Proof of Propositions \ref{prop:zag} and \ref{prop:zagISO}} 
 The principal idea of the GFT argument is  the same as in \cite[Section~5]{edgeETH}
 and even the detailed argument almost directly translates to our case. The main difference is that we now use the conceptual \emph{mean and standard deviation sizes} $\m/\s$ (recall \eqref{eq:sizes}) and $\m^{\rm iso}/\s^{\rm iso}$ (see \eqref{eq:msisoshort} and \eqref{eq:sizeiso} below) as basic deterministic control quantities; while in \cite[Section~5]{edgeETH} they were not introduced explicitly as they essentially boiled down to simple $N$-powers. More precisely, in view of the bounds \cite[Eqs.~(2.17)--(2.18) and (2.20)--(2.21)]{edgeETH}, we simply replace (ignoring all the $\rho$-factors and using the normalization $\langle |A_i|^2\rangle = 1$ for the just mentioned terms in \cite{edgeETH})
\begin{equation} \label{eq:msreplace}
\frac{N^{k/2} }{N\ell} \longrightarrow \m_k \,, \qquad \frac{N^{k/2} }{(N\ell)^{1/2}} \longrightarrow \s_k \,, \qquad N^{k/2} \longrightarrow \m_k^{\rm iso} \,, \qquad \text{and} \qquad \frac{N^{k/2}}{\ell^{1/2}} \longrightarrow \s_k^{\rm iso}\,. 
\end{equation}
 \nc

 Given this similarity \eqref{eq:msreplace} we will henceforth be very brief and only point out a few minor adaptions of the proof in \cite[Section~5]{edgeETH} to our new conceptual notations in Sections~\ref{subsubsec:GFTiso}--\ref{subsubsec:GFTav}, discussing the isotropic and averaged case, respectively. For further simplicity of notation, we shall drop all irrelevant sub-scripts of resolvents and deterministic matrices, i.e.~write $G = G_j$ and $A = A_j$. Additionally, we will also drop the $\widecheck{G}$- and $G^{(\gamma)}$-notations (see \cite[Eq.~(5.10)]{edgeETH}), indicating the precise step in the replacement procedure, as it will be irrelevant for the modifications discussed below. 
 \subsubsection{Part (a): Proof of the isotropic law, Proposition \ref{prop:zagISO} (cf.~Section 5.2 in \cite{edgeETH})} \label{subsubsec:GFTiso}
We start with some preliminaries. In order to express the bounds in \eqref{eq:Mboundiso} and \eqref{eq:iso} concisely, we employ
 the $\m^{\rm iso}/\s^{\rm iso}$-notation 
 (the \emph{isotropic mean and standard deviation sizes}, analogously to \eqref{eq:sizes} in the average case) to write \eqref{eq:Mboundiso} and \eqref{eq:iso} as 
 	\begin{equation} \label{eq:msisoshort}
 		\left| ( {M}_{[1,k+1]})_{\bm x \bm{y}}  \right| \lesssim   (\vertiii{A}_{\infty, \ell})^k =: \m_k^{\rm iso} \quad \text{and} \quad \left|  \big((GA)^kG\big)_{\bm x \bm y} - \big({M}_{[1,k+1]}\big)_{\bm x \bm y} \right|\prec \frac{(\vertiii{A}_{\infty,\ell})^k}{\sqrt{N\ell}} =: \frac{\s_k^{\rm iso}}{N^{1/2}}  \,,
 \end{equation}
 respectively (the general  definition of $\m^{\rm iso}/\s^{\rm iso}$  is given in~\eqref{eq:sizeiso}). 
  Moreover, we also write $M_{j-i+1} \equiv M_{[i,j]}$ for all $1 \le i < j \le k+1$ with a slight abuse of notation (see \cite[Eq.~(5.33)]{edgeETH}). Lastly, we will heavily use the following relations, proven in parts (i)--(ii) of Lemma \ref{lem:m-relnoIMiso}, 
   		\begin{equation} \label{eq:msisoshort2}
 		\mathfrak{m}^{\rm iso}_j \, \mathfrak{m}_{k-j}^{\rm iso} \lesssim \mathfrak{m}_k^{\rm iso}\,, \qquad 
 		\s^{ \rm iso}_j \, \s^{\rm iso}_{k-j} \lesssim \ell^{-1/2}\s^{\rm iso}_k\,, \quad \text{and} \quad 
 	\mathfrak{m}_k^{\rm iso}\lesssim \ell^{1/2} \,  \s_k^{ \rm iso} \,
 \end{equation}
 for all $k \in \N$ and $0 \le j \le k$ using the conventions $\m_0^{\rm iso} := 1$ and $\s_0^{\rm iso} := \ell^{-1/2}$
 (recall $\ell = \min_i \eta_i \rho_i$).
  With \eqref{eq:msisoshort}--\eqref{eq:msisoshort2} at hand, we can now discuss the two bits of the argument in \cite[Section~5.2]{edgeETH}, which are not completely straightforward to adapt to our current setting (cf.~Case (i) and Case (ii) in 
  \cite[Section~5.2.2]{edgeETH}). 
  We will refer to explicit equation numbers within a longer proof in~\cite{edgeETH} that we do not repeat here, hence 
 the reader needs to be familiar with \cite[Section~5.2]{edgeETH} to follow the details. However, to facilitate a high level 
 understanding without going into details, we recall the novel idea in~\cite{edgeETH}. 
 Traditional GFT proofs via the Lindeberg strategy estimated the change of, say, 
 $\E \big|\big((GA)^kG\big)_{\bm x \bm y}\big|^p$ after {\it each} replacement  ${\bm v}_{ab}\to {\bm w}_{ab}$ and showed that it is 
 bounded by $o(N^{-2})$-times its natural target size, so $N^2$ replacements were affordable. In other words the
 telescopic summation over $(a,b)\in [N]^2$ was done trivially. This does not hold for the self-consistent GFT argument
 in~\cite{edgeETH}: the replacement for some $(a,b)$ pairs  may be too large, but their sum is still affordable.
 We will refer to it as the {\it gain from the summation} idea.
 This is related to our efforts to control the observables in tracial norms; the simplest toy example is the estimate
 \begin{equation}
  \frac{1}{N^2} \sum_{ab} |A_{ab}| \le  N^{-1/2} \langle |A|^2 \rangle^{1/2} 
 \label{ex}
 \end{equation}
 which is optimal. However, if we use the best available bound $|A_{ab}|\le \|A \|$ for each summand individually, then 
 the lhs. of \eqref{ex} is overestimated by $\|A\|$ which is much bigger than the rhs.
 With keeping this idea in mind, we now return to the detailed modifications  in \cite[Section~5.2.2]{edgeETH}.

The terms considered in Case (i) are the ones  for which there are enough $(G-M)$-type terms to balance the $N$-prefactor (cf.~\eqref{eq:d=1exampleCase1Schatten} and \eqref{eq:d=1exampleCase2Schatten})  by
solely using \eqref{eq:msisoshort2}. The terms in Case (ii) require to \emph{gain from the summation} over all steps in the replacement procedure. 
\begin{itemize}
\item[Case (i):] First, the analog of \cite[Eq.~(5.34)]{edgeETH} becomes (neglecting the irrelevant $ |\Psi_k|^{p-1}$- and $N^\xi$-factors)
	\begin{equation} \label{eq:d=1exampleCase1Schatten}
	\begin{split}
		&  \frac{N^{1/2}}{\s_k^{\rm iso}}  \sum_{\substack{0\le k_l \le k-1 : \\
				\sum_{l} k_l = k}}\E \Big[\big| \big( ({G} A)^{k_1} {G} - M_{k_1+1}\big)_{\bm x \bm{e}_i}  \big(M_{k_2+1}\big)_{\bm e_j \bm{e}_j} \big( M_{k_3+1}\big)_{\bm e_i \bm{e}_i} \big( M_{k_4+1}\big)_{\bm e_j \bm{e}_j} \big( M_{k_5+1}\big)_{\bm e_i \bm{y}}  \big|  \\
		& \quad  + \big| \big( ({G} A)^{k_1} {G} - M_{k_1+1}\big)_{\bm x \bm{e}_i}  \big(({G} A)^{k_2} {G} -M_{k_2+1}\big)_{\bm e_j \bm{e}_j} \big( M_{k_3+1}\big)_{\bm e_i \bm{e}_i} \big( M_{k_4+1}\big)_{\bm e_j \bm{e}_j} \big( M_{k_5+1}\big)_{\bm e_i \bm{y}}  \big| + 	... \Big] \\
		\lesssim \, &  \frac{N^{1/2}}{\s_k^{\rm iso}} \sum_{\substack{0\le k_l \le k-1 : \\
				\sum_{l} k_l = k}} \left[ \frac{\s_{k_1}^{\rm iso}}{N^{1/2}} \left(\prod_{l =2}^5 \m_{k_l}^{\rm iso}\right)+ \frac{\s_{k_1}^{\rm iso}\s_{k_2}^{\rm iso}}{N} \left(\prod_{l =3}^5 \m_{k_l}^{\rm iso}\right) + ... \right]\lesssim  \left[1 + \frac{1}{(N \ell)^{1/2}}+ ... \right] \lesssim 1\,. 
	\end{split}
\end{equation}
In the penultimate step we used the relations in \eqref{eq:msisoshort2} multiple times to estimate $\s_{k_1}^{\rm iso}\prod_{l =2}^5 \m_{k_l}^{\rm iso} \lesssim \s_k^{\rm iso}$ and $\s_{k_1}^{\rm iso} \s_{k_l}^{\rm iso}\prod_{l =3}^5 \m_{k_l}^{\rm iso} \lesssim \ell^{-1/2}\s_k^{\rm iso}$, and similarly for the other analogous terms indicated by dots. The last step in \eqref{eq:d=1exampleCase1Schatten} is due to $N \ell > 1$. 
\item[Case (ii):] Second, the key trick in \cite[Section~5.2.2]{edgeETH}, the \emph{gain from summations} described in \cite[Example~5.7]{edgeETH}, turns into the following. The \emph{trivial estimate}, cf.~\cite[Eq.~(5.36)]{edgeETH}, reads (again neglecting the irrelevant $ |\Psi_k|^{p-1}$- and $N^\xi$-factors)
	\begin{equation} \label{eq:d=1exampleCase2Schatten}
	\begin{split}
		&\frac{N^{1/2}}{\s_k^{\rm iso}}  \sum_{\substack{0\le k_l \le k: \\
				\sum_{l} k_l = k}}\left[\big| \big( M_{k_1+1}\big)_{\bm x \bm{e}_i}  \big(M_{k_2+1}\big)_{\bm e_j \bm{e}_j} \big( M_{k_3+1}\big)_{\bm e_i \bm{e}_i} \big( M_{k_4+1}\big)_{\bm e_j \bm{e}_j} \big( M_{k_5+1}\big)_{\bm e_i \bm{y}}  \big| + ... \right] \\
		\lesssim \, & \frac{N^{1/2}}{\s_k^{\rm iso}}  \sum_{\substack{0\le k_l \le k: \\
				\sum_{l} k_l = k}} \left[ \left(\prod_{l =1}^5 \m_{k_l}^{\rm iso}\right) + ... \right]\lesssim  \left(N \ell \right)^{1/2}\,, 
	\end{split}
\end{equation}
where we again used \eqref{eq:msisoshort2} multiple times. Analogously to \cite[Eq.~(5.37)]{edgeETH}, this can be improved on by averaging over all replacement steps: Fixing one constellation of $k_l$'s in \eqref{eq:d=1exampleCase2Schatten}, we find
 	\begin{equation} \label{eq:d=1exampleCase2SUMSchatten}
	\begin{split}
		&\frac{N^{1/2}}{\s_k^{\rm iso}} \frac{1}{N^2} \sum_{i,j } \left[\big| \big( M_{k_1+1}\big)_{\bm x \bm{e}_i}  \big(M_{k_2+1}\big)_{\bm e_j \bm{e}_j} \big( M_{k_3+1}\big)_{\bm e_i \bm{e}_i} \big( M_{k_4+1}\big)_{\bm e_j \bm{e}_j} \big( M_{k_5+1}\big)_{\bm e_i \bm{y}}  \big| + ... \right] \\
		& \qquad \qquad \lesssim \, (N\ell )^{1/2}  \frac{1}{N^2} \sum_{i,j } \left[ \frac{\big| (M_{k_1+1})_{\bm x \bm e_i} \big|}{\m_{k_1}^{\rm iso}}  + ... \right] \lesssim (N \ell)^{1/2} \frac{1}{N^{1/2}} \lesssim 1
	\end{split}
\end{equation}
in the relevant $\ell \le 1$ regime (the opposite regime being covered by Proposition \ref{prop:initialISO}). To go to the second line, we employed \eqref{eq:msisoshort2}. In the penultimate step, we used 
 	\begin{equation} \label{eq:schwarzfirstSchatten}
	\sum_i \big| (M_{k_1+1})_{\bm x \bm e_i} \big|  \le \sqrt{N}\sqrt{ \big(| M_{k_1+1}|^2 \big)_{\bm x \bm x} } \lesssim \sqrt{N} \, \m_{k_1}^{\rm iso}\,, 
\end{equation}
which follows by a Schwarz inequality, completely analogously to \cite[Eq.~(5.38)--(5.39)]{edgeETH}. 
\end{itemize}
With these two slight adjustments, which 
 rest on the estimates in \eqref{eq:msisoshort2}, the proof in \cite[Section~5.2]{edgeETH} can entirely be translated to our current setting in a straightforward way, so we omit further details. This concludes the proof of Proposition~\ref{prop:zagISO}. \qed 
\subsubsection{Part (b): Proof of the averaged law, Proposition \ref{prop:zag} (cf.~Section 5.3 in \cite{edgeETH})} \label{subsubsec:GFTav}
For the averaged law, in addition to the bounds in \eqref{eq:msisoshort}--\eqref{eq:msisoshort2}, we will also use the relations
		\begin{equation}
	\label{eq:msavisoshort}
		\m_k  \lesssim 	\m_k^{\rm iso}\,, \qquad \s_k  \lesssim \ell^{1/2} \, 	\s_k^{ \rm iso} \,, \quad \text{and} \quad 
	\s_k^{ \rm iso}  \lesssim \, N \,  \big[1+ (N\ell)^{-1/2}\big]\,  \s_k 
\end{equation}
from \eqref{eq:1inaviso}--\eqref{eq:2inaviso} in Lemma \ref{lem:m-relnoIMaviso} for all $k \in \N$. Just as in the case of Section \ref{subsubsec:GFTiso}, there are two bits of the argument in \cite[Section~5.3]{edgeETH}, that are not entirely straightforward to adjust to the setting of the current paper (cf.~Case (i) and Case (ii) in \cite[Section~5.3]{edgeETH}). 

Analogously to the isotropic case in Section \ref{subsubsec:GFTiso}, while the terms considered in Case (i) solely use \eqref{eq:msisoshort2} and \eqref{eq:msavisoshort}, the terms in Case (ii) require to \emph{gain from the summation} over all steps in the replacement procedure. 

\begin{itemize}
\item[Case (i):] First, consider \cite[Eq.~(5.54)]{edgeETH} with $d=2$ (for concreteness). Then, the estimate turns into (neglecting the irrelevant $N^\xi$-factor)
\begin{equation*}
\left(\frac{N}{\s_k}\right)^2 \frac{1}{N^2}\sum_{\substack{k_1 + k_2 = k \\
		k_1' + k_2' = k}} \left[ \m_{k_1}^{\rm iso} \frac{\s_{k_2}^{\rm iso}}{N^{1/2}} \m_{k_1'}^{\rm iso} \frac{\s_{k_2'}^{\rm iso}}{N^{1/2}} + \m_{k_1}^{\rm iso} \m_{k_2}^{\rm iso} \frac{\s_{k_1'}^{\rm iso}}{N^{1/2}}\frac{\s_{k_2'}^{\rm iso}}{N^{1/2}}+ ...  \right] \lesssim  \left[1 + \frac{1}{N \ell}\right] \lesssim 1 \,, 
\end{equation*}
where we used \eqref{eq:msisoshort2} and \eqref{eq:msavisoshort} to bound $\m_{k_1}^{\rm iso} \s_{k_2}^{\rm iso} \lesssim \ell^{-1/2}[1+(N\ell )^{1/2} ] \s_k$ as well as $\m_{k_1}^{\rm iso} \m_{k_2}^{\rm iso} \lesssim [1+ (N \ell)^{1/2}] \s_k $ and $\s_{k_1'}^{\rm iso} \s_{k_2'}^{\rm iso} \lesssim \ell^{-1}[1+ (N \ell)^{1/2}] \s_k $ in the penultimate step, and the last step used $N \ell > 1$. 
\item[Case (ii):] Second, we again discuss how to \emph{gain from the summation}, as originally explained in \cite[Example~5.11]{edgeETH}. The \emph{trivial estimate} from \cite[Eq.~(5.55)]{edgeETH}, again neglecting the irrelevant $|\Psi_k|^{p-1}$ factor and fixing one constellation of $k_l$'s summing up to $k$, becomes (neglecting the irrelevant $N^\xi$-factor)
 	\begin{equation} \label{eq:d=1exampleCase2AVSchatten}
	\begin{split}
		&(\s_k)^{-1} \left[\big| \big( M_{k_1+1}\big)_{\bm e_i \bm{e}_i}  \big(M_{k_2+1}\big)_{\bm e_j \bm{e}_j} \big( M_{k_3+1}\big)_{\bm e_i \bm{e}_i} \big( M_{k_4+1}\big)_{\bm e_j \bm{e}_j}  \big| + ... \right] \\
		& \qquad \qquad \qquad \lesssim \, (\s_k)^{-1} \left[ \prod_{l}\m_{k_l}^{\rm iso} + ... \right] \lesssim  (N\ell )^{1/2} \,, 
	\end{split}
\end{equation}
where we used \eqref{eq:msisoshort2} and \eqref{eq:msavisoshort}. Analogously to \cite[Eq.~(5.59)]{edgeETH}, we can again improve upon this by averaging over all replacement positions. The key for this \emph{gain from the summation} is the bound
\begin{equation} \label{eq:sumgainAV}
\langle |M_{k+1}|^2 \rangle \lesssim (\s_k)^2 \qquad \forall k \in \N\,, 
\end{equation}
which can be obtained completely analogously to \cite[Lemma~5.12]{edgeETH}. Armed with \eqref{eq:sumgainAV}, the analog of the improved bound \cite[Eq.~(5.59)]{edgeETH} now reads (again neglecting the irrelevant $N^\xi$-factor)
 \begin{equation} \label{eq:d=1exampleCase2AVSUMSchatten}
	\begin{split}
		&(\s_k)^{-1}\frac{1}{N^2}\sum_{i,j}\left[\big| \big( M_{k_1+1}\big)_{\bm e_i \bm{e}_i}  \big(M_{k_2+1}\big)_{\bm e_j \bm{e}_j} \big( M_{k_3+1}\big)_{\bm e_i \bm{e}_i} \big( M_{k_4+1}\big)_{\bm e_j \bm{e}_j}  \big| + ... \right] \\
		\lesssim &(\s_k)^{-1} \left[\prod_{l \in [4]} \left(\frac{1}{N}\sum_{i} \big|\big( M_{k_l+1}\big)_{\bm e_i \bm{e}_i}\big|^2 \right)^{1/2} + ... \right] 
		\lesssim (\s_k)^{-1}\left[\prod_{l \in [4]}  \s_{k_l}+ ... \right] 
		\lesssim 1\,, 
	\end{split}
\end{equation}
where in the first step we employed a trivial Schwarz inequality. 
\end{itemize}

With the above two slight adjustments at hand, which basically rest on the estimates in \eqref{eq:msisoshort2}, \eqref{eq:msavisoshort} and \eqref{eq:sumgainAV}, we can straightforwardly translate the entire proof of the averaged part in \cite[Section~5.3]{edgeETH} to our current setting. This finishes our discussion of the adjustments of the arguments from \cite[Section~5.2]{edgeETH} and thus the proof of Proposition~\ref{prop:zag}. \qed

\appendix

\section{Additional technical results}
\label{sec:addtech}
In this section we prove several additional technical results which were used in the main sections.
\subsection{Bound on the deterministic approximation: Proofs of Lemmas \ref{lem:Mbound} and \ref{lem:Mboundiso}}
We will deduce Lemmas \ref{lem:Mbound} and \ref{lem:Mboundiso} from the following stronger bound. Part (a) of Lemma \ref{lem:Mboundstrong} is already proven in \cite[Lemma A.1]{edgeETH} applied to the special case $\mathfrak{I}_k = \emptyset$. 
The proof of part (b) is analogous to that and hence omitted. 
\begin{lemma}[cf.~Lemma A.1 in \cite{edgeETH}]\label{lem:Mboundstrong}
Fix $k \ge 1$ and let $A_1, ... , A_k \in \C^{N \times N}$ be traceless deterministic matrices.
\begin{itemize}
\item[(a)]  Consider spectral parameters $z_1, ... , z_{k} \in \C \setminus \R$ and define $\eta := \min_{j \in [k]} |\Im z_j|$. 
Then, for every $1 \le s \le \lfloor k/2 \rfloor$ and $\pi \in \mathrm{NC}([k])$ with $|\pi| = k+1-s$, it holds that
\begin{equation} \label{eq:Mboundstrong}
	\left| 
		\langle \mathrm{pTr}_{K(\pi)}(A_1,\ldots,A_{k-1})A_k \rangle \prod_{S\in\pi} m_\circ[S] \right| \lesssim  \frac{1}{\eta^{s-1}} \left(\prod_{\substack{S \in K(\pi) \\ |S| \ge 2}} \prod_{j \in S} \langle |A_j|^{|S|} \rangle^{\frac{1}{|S|}}\right)  \,. 
\end{equation}
with $m_\circ[S]$ being defined in \eqref{eq:freecumulant}. For $s > \lfloor k/2 \rfloor$ the lhs.~of \eqref{eq:Mboundstrong} equals zero. 
\item[(b)] Consider spectral parameters $z_1, ... , z_{k+1} \in \C \setminus \R$ and define $\eta := \min_{j \in [k+1]} |\Im z_j|$. Then, for every $1 \le s \le \lceil (k+1)/2 \rceil$ and $\pi\in\mathrm{NC}([k+1])$ with $|\pi| = k+1-s$, it holds that
\begin{equation} \label{eq:MboundstrongISO}
	\left\Vert  
	\mathrm{pTr}_{K(\pi)}(A_1,\ldots,A_{k})  \prod_{S\in\pi} m_\circ[S] \right\Vert \lesssim  \frac{1}{\eta^{s-1}} \left(\prod_{\substack{S \in K(\pi) \setminus \mathfrak{B}: \\ |S| \ge 2}} \, \prod_{j \in S} \langle |A_j|^{|S|} \rangle^{\frac{1}{|S|}} \right) \left(\prod_{j \in \mathfrak{B}\setminus \{k+1\}} \Vert A_j \Vert \right)\,. 
\end{equation}
where $\mathfrak{B} \equiv \mathfrak{B}(k+1) \in K(\pi)$ being the unique block containing $k+1$ (recall \eqref{eq:partrdef}). For $s > \lceil (k+1)/2 \rceil$ the lhs.~of \eqref{eq:MboundstrongISO} equals zero. 
\end{itemize}
\end{lemma}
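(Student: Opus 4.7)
My plan is to prove Lemma \ref{lem:Mboundstrong} directly from the definitions of $\mathrm{pTr}_{K(\pi)}$ and $m_\circ[\cdot]$, following closely the strategy of \cite[Lemma~A.1]{edgeETH}, which already establishes part (a) upon specialising the index set $\mathfrak{I}_k$ there to $\emptyset$. The proof decomposes into three essentially independent pieces: a vanishing step for large $s$, a pointwise bound on the cumulant factor, and a pointwise bound on the partial-trace factor, which are then multiplied together.

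The vanishing claim uses that the Kreweras complement satisfies $|\pi| + |K(\pi)| = k+1$ in part (a), and $|\pi| + |K(\pi)| = k+2$ in part (b). Unfolding \eqref{eq:partrdef}, the left-hand side factors as a product of terms indexed by the blocks of $K(\pi)$: each block $S \in K(\pi)$ different from the distinguished one $\mathfrak{B}$ contributes a scalar $\langle \prod_{j \in S} A_j\rangle$, which vanishes as soon as $|S| = 1$ by tracelessness of each $A_j$; in part (a) the distinguished block $\mathfrak{B}(k)$ likewise produces $\langle A_k\rangle=0$ if it is itself a singleton. Hence any non-vanishing configuration requires every non-distinguished block of $K(\pi)$ to have size at least two, which combined with $|\mathfrak{B}| \geq 1$ enforces the claimed thresholds $s \leq \lfloor k/2 \rfloor$ in part (a) and $s \leq \lceil (k+1)/2 \rceil$ in part (b).

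For the cumulant factor, the divided-difference representation \eqref{msc dd} combined with the trivial bound $|x - z_i|^{-1} \leq \eta^{-1}$ yields $|m[S]| \lesssim \eta^{-(|S|-1)}$ uniformly in $S$. Inverting the Möbius relation \eqref{eq:freecumulant} over the non-crossing partition lattice transfers this to $|m_\circ[S]| \lesssim \eta^{-(|S|-1)}$. The counting identity $\sum_{S\in\pi}(|S|-1) = k - |\pi| = s-1$ then gives
\begin{equation*}
\prod_{S\in\pi}|m_\circ[S]| \lesssim \eta^{-(s-1)},
\end{equation*}
producing the $\eta$-factor in the right-hand side of \eqref{eq:Mboundstrong}. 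For the partial-trace factor, I apply the generalised tracial Hölder inequality
\begin{equation*}
\bigl|\langle B_1 B_2 \cdots B_n\rangle\bigr| \leq \prod_{j=1}^n \langle |B_j|^n\rangle^{1/n}
\end{equation*}
to each non-distinguished block $S \in K(\pi)$, which directly yields the $\langle |A_j|^{|S|}\rangle^{1/|S|}$ factors. In part (a), the distinguished block $\mathfrak{B}(k)$ is closed by the outer trace against $A_k$ and the same Hölder bound applies verbatim; in part (b) the distinguished block $\mathfrak{B}(k+1)$ leaves a residual matrix factor $\prod_{j \in \mathfrak{B}\setminus\{k+1\}} A_j$, which I estimate by submultiplicativity of the operator norm by $\prod_{j \in \mathfrak{B}\setminus\{k+1\}}\|A_j\|$. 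Multiplying the cumulant and Hölder bounds yields \eqref{eq:Mboundstrong} and \eqref{eq:MboundstrongISO} respectively.

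The main obstacle I anticipate is purely combinatorial: correctly tracking the distinguished block of $K(\pi)$ across the two parts of the lemma, respecting the cyclic order of the indices in each block so that Hölder is applied to the appropriate tuple, and carefully matching the Kreweras combinatorics on the shifted index set $[k+1]$ in part (b) to that of $[k]$ in part (a). All of this is essentially the content of \cite[Lemma~A.1]{edgeETH}; once that bookkeeping is in place, the analytic estimates above are entirely routine.
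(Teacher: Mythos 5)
Your overall strategy is the right one and is essentially the argument the paper has in mind (the paper itself gives no proof beyond citing \cite[Lemma~A.1]{edgeETH} for part (a) and declaring (b) analogous): vanishing of singleton blocks of $K(\pi)$ by tracelessness, the bound $|m_\circ[S]|\lesssim \eta^{1-|S|}$ obtained from \eqref{msc dd} and M\"obius inversion of \eqref{eq:freecumulant}, and block-wise H\"older/operator-norm estimates for $\mathrm{pTr}_{K(\pi)}$. Two caveats, one minor and one substantive. Minor: the "trivial bound $|x-z_i|^{-1}\le\eta^{-1}$" alone gives $|m[S]|\lesssim\eta^{-|S|}$, and keeping just one factor inside the integral costs a $\log(1/\eta)$; to get the clean $\eta^{1-|S|}$ you should keep \emph{two} factors inside and use $2|x-z_i|^{-1}|x-z_j|^{-1}\le |x-z_i|^{-2}+|x-z_j|^{-2}$ together with $\int\rho_{\mathrm{sc}}(x)|x-z|^{-2}\,\dd x=\Im m(z)/|\Im z|\lesssim\eta^{-1}$.

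The substantive gap is in part (b), exactly the bookkeeping you flagged but did not carry out. Your counting identity $\sum_{S\in\pi}(|S|-1)=k-|\pi|=s-1$ is specific to the ground set $[k]$; for $\pi\in\mathrm{NC}([k+1])$ with $|\pi|=k+1-s$ one has $\sum_{S\in\pi}(|S|-1)=(k+1)-|\pi|=s$, so your argument yields $\eta^{-s}$, not the claimed $\eta^{-(s-1)}$, and you apply the part (a) identity to both parts without comment. Moreover, no cleverer estimate can recover the missing power under the literal indexing: already for $k=2$ and $\pi=\{\{1,3\},\{2\}\}$ (so $|\pi|=2$, i.e.\ $s=1$ as stated) one has $K(\pi)=\{\{1,2\},\{3\}\}$ with $\mathfrak{B}(3)=\{3\}$, and the corresponding term is $\langle A_1A_2\rangle\, m_2\, m_\circ[\{1,3\}]\cdot\mathbb{1}$, whose norm is of order $(\rho/\eta)\,\langle |A_1|^2\rangle^{1/2}\langle |A_2|^2\rangle^{1/2}$ when $z_3=\bar z_1$ in the bulk — i.e.\ genuinely one power of $\eta^{-1}$ larger than $\eta^{-(s-1)}$ with $s=1$. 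So in part (b) the relation between $s$ and $|\pi|$ must be read parallel to part (a), namely $|\pi|=(k+1)+1-s$, and your proof has to state and use that convention (equivalently, prove the bound with exponent $\sum_{S\in\pi}(|S|-1)$); as written, the exponent count for (b) is wrong/unjustified, which is precisely the shifted-index Kreweras matching you identified as the crux.
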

Given Lemma \ref{lem:Mboundstrong}, analogously to \cite[Appendix A.1]{edgeETH}, the proofs of Lemmas \ref{lem:Mbound} and \ref{lem:Mboundiso} immediately follow after realizing that $\eta \gtrsim \ell$ and applying Hölder's and Young's inequality. 
To show this mechanism,  consider \eqref{eq:Mboundstrong}
for an example where  $k=6$ and $s=2$, and focus on the non-crossing partition $\pi = \{ 15|2|3|4|6\}$ for which $K(\pi) = \{1234|56\}$. In this case, the rhs.~of \eqref{eq:Mboundstrong} can be estimated as
\begin{equation*}
	\begin{split}
\eta^{-1} \left(\prod_{i=1}^4 \langle |A_i|^4 \rangle^{1/4}\right) \left(\prod_{i=5}^6 \langle |A_i|^2 \rangle^{1/2}\right) &\lesssim 
\ell \left(\prod_{i=1}^4 \big[\langle |A_i|^6 \rangle^{1/8} \ell^{-1/8}\big] \,  \big[\langle |A_i|^2 \rangle^{1/8}\ell^{-1/8}\big]\right) \left(\prod_{i=5}^6 \langle |A_i|^2 \rangle^{1/2}\ell^{-1/2}\right) \\
&\lesssim \ell \prod_{i\in [6]} \left(\frac{\langle |A_i|^6 \rangle^{1/6}}{\ell^{1/6}} +  \frac{\langle |A_i|^2 \rangle^{1/2}}{\ell^{1/2}}\right) \,. 
	\end{split}
\end{equation*}
In the first step, we employed $\eta \gtrsim \ell$ and Hölder's inequality in the form $\langle |A_i|^4\rangle^{1/4} \le \langle |A_i|^6 \rangle^{1/8} \langle |A_i|^2 \rangle^{1/8}$ for $i \in [4]$. In the second step, we then used Young's inequality for $i \in [4]$ and added $\langle |A_i|^6 \rangle^{1/6}\ell^{-1/6}$ for $i =5,6$ to complete the $\vertiii{A_i}_{6,\ell}$ norm for all $i \in [6]$ (recall \eqref{eq:plSchatten} and \eqref{eq:Mbound}). \qed

\subsection{GFT and isotropic local law: Additional proofs for Section \ref{sec:GFT}}

\subsubsection{Isotropic law: Proof of Theorem \ref{thm:isolaw}}
\label{sec:isocase}
In this section we want to study chains of the form
\begin{equation}
	\label{eq:objectiso}
	(G_1A_1\dots A_kG_{k+1})_{{\bm x}{\bm y}}:=\langle {\bm x}, G_1A_1\dots A_kG_{k+1} {\bm y}\rangle,
\end{equation}
for unit deterministic vectors ${\bm x}, {\bm y}$.

Following the notation \eqref{eq:deflongchaings}, by $G_{[1,k+1],t}$ we denote the evolution of the quantity in \eqref{eq:objectiso} along the Ornstein-Uhlenbeck flow \eqref{eq:OUOUOU} with the characteristic equation \eqref{eq:chardef}. Then, by \eqref{eq:flowka} and Lemma~\ref{lem:cancM} (used for $k\to k+1$), choosing $A_{k+1}=N{\bm y}{\bm x}^*$, we obtain the flow
\begin{equation}
	\begin{split}
		\label{eq:flowkaiso}
		\dd (G_{[1,k+1],t}-M_{{[1,k+1]},t})_{{\bm x}{\bm y}}&=\frac{1}{\sqrt{N}}\sum_{a,b=1}^N \partial_{ab}  ({G}_{[1,k+1],t})_{{\bm x}{\bm y}}\dd B_{ab,t}+\frac{k}{2} (G_{[1,k+1],t}-M_{{[1,k+1]},t})_{{\bm x}{\bm y}}\dd t \\
		&\quad+\sum_{i,j=1\atop i< j}^{k+1}\langle {G}_{[i,j],t}-M_{[i,j],t}\rangle ({M}_{[1,i]\cup [j,k+1],t})_{{\bm x}{\bm y}}\dd t \\
		&\quad+\sum_{i,j=1\atop i< j}^{k+1}\langle {M}_{[i,j],t}\rangle ({G}_{[1,i]\cup [j,k+1],t}-M_{[1,i]\cup [j,k+1],t})_{{\bm x}{\bm y}}\dd t  \\
		&\quad+\sum_{i,j=1\atop i< j}^{k+1}\langle G_{[i,j],t}-{M}_{[i,j],t}\rangle ({G}_{[1,i]\cup [j,k+1],t}-M_{[1,i]\cup [j,k+1],t})_{{\bm x}{\bm y}}\dd t  \\
		&\quad+\sum_{i=1}^{k+1} \langle G_{i,t}-m_{i,t}\rangle  ({G}^{(i)}_{[1,k+1],t})_{{\bm x}{\bm y}}\dd t,
	\end{split}
\end{equation}
where we used the notation
\[
({G}_{[1,i]\cup [j,k+1],t})_{{\bm x}{\bm y}}:= (G_{1,t}A_1\dots A_{i-1}G_{i,t}G_{j,t}A_j\dots A_{k-1}G_{k+1,t})_{{\bm x}{\bm y}}
\]
and similarly for $M_{[1,i]\cup [j,k+1],t}$. 
\\[2mm]
\noindent\textbf{Deterministic control quantities.} Next, we introduce the new \emph{isotropic} control quantities,
 the analogues of the \emph{averaged} quantities defined in~\eqref{eq:sizes} (recall \eqref{eq:plSchatten} for the definition of $\vertiii{A_\alpha}_{\infty, \ell}$): 
\begin{equation} \label{eq:sizeiso}
	\begin{split}
		\mathfrak{m}_k^{\rm iso}(\ell; \bm{A}_{J_k}) :=  \prod_{\alpha \in J_k} \vertiii{A_\alpha}_{\infty, \ell} 
	\qquad \text{and} \qquad 
		\s_k^{\rm iso}(\ell; \bm{A}_{J_k}) := \ell^{-1/2}\prod_{\alpha \in J_k} \vertiii{A_\alpha}_{\infty, \ell} \,,
	\end{split}
\end{equation}
which will be called the \emph{isotropic mean size} and \emph{isotropic standard deviation size}, respectively. They are functions of a positive number $\ell$ (usually given by $\ell = \min_i \eta_i \rho_i$ for some spectral parameters $z_i$) and a multiset $\bm A_{J_k} = (A_\alpha)_{\alpha \in J_k}$ of deterministic matrices of cardinality $|J_k| =k$. 

 Similarly to the paragraph below \eqref{eq:sizes}, we may often omit the arguments of  $\mathfrak{m}_k^{\rm iso}$ and $\s_k^{\rm iso}$, write $J_k$ instead of $\bm A_{J_k}$, and for time dependent spectral parameters we use the short--hand notations $\mathfrak{m}_k^{\rm iso}(t) = \m_k^{\rm iso}(t;J_k) = \m_k^{\rm iso}(\ell_t; \bm A_{J_k})$ and $\s_k^{\rm iso}(t) = \s_k^{\rm iso}(t;J_k) = \s_k^{\rm iso}(\ell_t; \bm A_{J_k})$. Note that with this definitions, by \eqref{eq:Mboundiso}, we have (analogously to \eqref{eq:bMm}), with $\eta_t := \min_i \eta_{i,t}$,
\begin{equation}
	\label{eq:mboundiso}
	\big| (M_{[i,j],t})_{{\bm x}{\bm y}}\big|\lesssim \mathfrak{m}_{j-i}^{\rm iso}(t) \qquad \text{and} \qquad \big| (M_{[1,i] \cup [j,k+1], t})_{\bm x \bm y} \big| \lesssim \eta_t^{-1} \m_{k-(j-i)}^{\rm iso}(t)\,.
\end{equation}

We now record the following relations about $\mathfrak{m}_k^{\rm iso}/\s_k^{\rm iso}$, analogously to Lemma \ref{lem:m-relnoIM}.
The proof is again an immediate consequence of \eqref{eq:plSchattenrel} and $\ell_s \gtrsim \ell_t$ for $s \le t$ (recall the discussion below \eqref{eq:characteristics}) and hence omitted. 

\begin{lemma}[$\mathfrak{m}^{\rm iso}/\s^{\rm iso}$-relations] \label{lem:m-relnoIMiso}
		Let $k \ge 1$ and consider time-dependent spectral parameters $z_{1,t}, ... , z_{k+1,t}$ and a multiset 
		 of traceless matrices $\bm A_{J_k}$ as above. Set $\ell_t := \min_{j \in [k+1]} \eta_{j,t} \rho_{j,t}$. 
		 Then, the mean size $\mathfrak{m}^{\rm iso}_k $ and the standard deviation size $\mathfrak{s}_k^{ \rm iso}$ 
	satisfy the following  inequalities (using the conventions $\m_0^{\rm iso}(t) := 1$ and $\s_0^{\rm iso}(t) := \ell_t^{-1/2}$). 
	\begin{itemize}
		\item[(i)] Super-multiplicativity; i.e. for any $0 \le j \le k$ it holds that
		\begin{equation}
			\label{eq:1iniso}
			\begin{split}
				\mathfrak{m}^{\rm iso}_j ({t;J_{j}})\, \mathfrak{m}_{k-j}^{\rm iso}( {t;J_{k-j}})  \lesssim \mathfrak{m}_k^{\rm iso}( {t; J_{k}}) \quad \text{and} \quad 
				\s^{ \rm iso}_j(t; {J_{j}}) \, \s^{\rm iso}_{k-j}(t ; {J_{k-j}}) \lesssim \ell_t^{-1/2}\s^{\rm iso}_k(t ; {J_{k}})
			\end{split}
		\end{equation}
		for all disjoint decompositions $ {J_{k}} = {J_{j}} \dot{\cup} {J_{k-j}}$.
		\item[(ii)] The mean size can be upper bounded by the standard deviation size as   
		\begin{equation}
			\label{eq:2iniso}
			\mathfrak{m}_k^{\rm iso} (t; {J_{k}})\lesssim \ell_t^{1/2} \,  \s_k^{ \rm iso}(t; {J_{k}}) \,.
		\end{equation}
		\item[(iii)] The standard deviation size satisfies the doubling inequality
		\begin{equation}
			\label{eq:3iniso}
				\s_{2k}^{\rm iso}(t; J_{k} \cup J_{k})
			\lesssim \, \ell_t^{1/2} \, \left(\s_k^{ \rm iso}(t;{J_{k}})\right)^2\,,
		\end{equation}
		where $J_k \cup J_k$ denotes the union of the multiset $J_k$ with itself. 
		\item[(iv)] Monotonicity  in time: for any $0 \le s \le t$ and $\alpha \in [0,1/2]$, we have
		\begin{equation}
			\label{eq:4iniso}
			\mathfrak{m}_k^{\rm iso}(s;J_k) \lesssim \mathfrak{m}_k^{\rm iso}(t;J_k) \qquad \text{and} \qquad \s_k^{\rm iso}(s;J_k) \, \ell_s^{\alpha}\lesssim \s_k^{\rm iso}(t;J_k) \, \ell_t^{\alpha}\,. 
		\end{equation}
	\end{itemize}
\end{lemma}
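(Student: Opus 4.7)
The plan is to observe that, unlike Lemma~\ref{lem:m-relnoIM} for the averaged sizes (which mixes Schatten norms of different orders and thus genuinely needs the norm-interpolation inequality \eqref{eq:plSchattenrel}), the isotropic sizes in~\eqref{eq:sizeiso} involve only a \emph{single} weighted norm $\vertiii{\cdot}_{\infty,\ell}$. Consequently, properties (i), (ii), (iii) are in fact exact equalities (up to the conventions $\m^{\rm iso}_0 = 1$, $\s^{\rm iso}_0 = \ell^{-1/2}$), and only (iv) requires a genuine inequality, coming from the monotonicity of $\vertiii{\cdot}_{\infty,\ell}$ in $\ell$ combined with $\ell_s \gtrsim \ell_t$ for $s \le t$.

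More precisely, for (i) one simply splits the product over $J_k = J_j \dot\cup J_{k-j}$:
\[
\m^{\rm iso}_j(t;J_j)\,\m^{\rm iso}_{k-j}(t;J_{k-j}) = \prod_{\alpha \in J_k} \vertiii{A_\alpha}_{\infty,\ell_t} = \m^{\rm iso}_k(t;J_k),
\]
while the corresponding identity for $\s^{\rm iso}$ produces the extra factor $\ell_t^{-1}$, which is absorbed as $\ell_t^{-1/2}\s^{\rm iso}_k$. For (ii), comparing $\m^{\rm iso}_k = \prod_\alpha \vertiii{A_\alpha}_{\infty,\ell_t}$ with $\s^{\rm iso}_k = \ell_t^{-1/2}\prod_\alpha \vertiii{A_\alpha}_{\infty,\ell_t}$ gives $\m^{\rm iso}_k = \ell_t^{1/2}\s^{\rm iso}_k$ exactly. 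For (iii), writing $J_k \cup J_k$ doubles the product:
\[
\s^{\rm iso}_{2k}(t;J_k \cup J_k) = \ell_t^{-1/2}\Big(\prod_{\alpha \in J_k}\vertiii{A_\alpha}_{\infty,\ell_t}\Big)^2 = \ell_t^{1/2}\bigl(\s^{\rm iso}_k(t;J_k)\bigr)^2,
\]
again an equality. None of these steps involves any estimate beyond the definition.

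For (iv), I use that $\eta_{j,t}$ is non-increasing along the characteristic flow and $\rho_{j,s}\sim\rho_{j,t}$ (as recalled below \eqref{eq:characteristics}), so $\ell_s \gtrsim \ell_t$ for $0 \le s \le t$. Since
\[
\vertiii{B}_{\infty,\ell} = \ell^{-1/2}\langle |B|^2\rangle^{1/2} + \|B\|
\]
is a decreasing function of $\ell$, we get $\vertiii{A_\alpha}_{\infty,\ell_s} \lesssim \vertiii{A_\alpha}_{\infty,\ell_t}$, which immediately yields the mean bound $\m^{\rm iso}_k(s;J_k) \lesssim \m^{\rm iso}_k(t;J_k)$. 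For the standard deviation bound, I rewrite
\[
\s^{\rm iso}_k(s;J_k)\,\ell_s^{\alpha} = \ell_s^{\alpha - 1/2}\prod_{\alpha' \in J_k}\vertiii{A_{\alpha'}}_{\infty,\ell_s},
\]
and here the restriction $\alpha \in [0,1/2]$ is precisely what guarantees $\alpha - 1/2 \le 0$, so that $\ell_s \gtrsim \ell_t$ also gives $\ell_s^{\alpha-1/2} \lesssim \ell_t^{\alpha-1/2}$ in the right direction. Combining the two monotonicities completes the proof. The only point requiring any attention is this sign-tracking of the exponent $\alpha - 1/2$; everything else is a direct reading of the definition~\eqref{eq:sizeiso}.
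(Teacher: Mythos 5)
Your proof is correct and follows essentially the same route the paper intends: the paper omits the argument as an immediate consequence of the definition \eqref{eq:sizeiso}, the norm relations \eqref{eq:plSchattenrel}, and $\ell_s \gtrsim \ell_t$ along the flow, which is exactly your direct verification. Your added observation that (i)--(iii) are in fact exact identities (only the single norm $\vertiii{\cdot}_{\infty,\ell}$ enters, unlike the averaged Lemma~\ref{lem:m-relnoIM}), so that only (iv) genuinely uses $\ell_s \gtrsim \ell_t$ together with the monotonicity of $\vertiii{\cdot}_{\infty,\ell}$ in $\ell$ and the sign of $\alpha-1/2$, is accurate and slightly sharper than what is strictly needed.
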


Moreover, we have the following relations among $\m^{\rm iso}/\s^{\rm iso}$ and $\m/\s$ from \eqref{eq:sizes}, whose proofs 
are again  immediate from \eqref{eq:plSchattenrel} and hence omitted. 

\begin{lemma}[$\m^{\rm iso}/\s^{\rm iso}$-$\m/\s$-relations] \label{lem:m-relnoIMaviso}
	Let $k \ge 1$, consider a multiset $ \bm{A}_{J_{k}}$ of traceless matrices as above and fix $\ell > 0$.  Then, the mean sizes $\mathfrak{m}^{\rm iso}_k $ and $\mathfrak{m}_k $, 
	and the standard deviation sizes $\s^{\rm iso}_k $ and $\s_k $,  
	satisfy the following  inequalities. 
	\begin{itemize}
		\item[(i)] The \emph{average sizes} $\m/\s$ are bounded by the isotropic sizes $\m^{\rm iso}/\s^{\rm iso}$: 
		\begin{equation}
			\label{eq:1inaviso}
			\begin{split}
				\m_k(\ell;J_k)  \lesssim 	\m_k^{\rm iso}(\ell;J_k) \qquad \text{and} \qquad \s_k(\ell;J_k)  \lesssim \ell^{1/2} \, 	\s_k^{ \rm iso}(\ell;J_k) \,. 
			\end{split}
		\end{equation}
		\item[(ii)] The isotropic standard deviation size can be upper bounded by the average standard deviation size 
		\begin{equation}
			\label{eq:2inaviso}
			\s_k^{ \rm iso} (\ell;J_k) \lesssim \, N \, \big[1  + (N\ell)^{-1/2}\big]\,  \s_k(\ell;J_k) \,.
		\end{equation}
	\end{itemize}
\end{lemma}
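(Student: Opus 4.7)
\textbf{Proof proposal for Lemma \ref{lem:m-relnoIMaviso}.} The plan is to reduce both inequalities to the elementary interpolation bounds for the weighted Schatten norms stated in \eqref{eq:plSchattenrel}, applied factor-by-factor inside the products that define $\m_k, \s_k, \m_k^{\rm iso}, \s_k^{\rm iso}$. Since all four quantities are products over $\alpha \in J_k$ of a single weighted Schatten norm of $A_\alpha$ (times a simple $\ell^{-1/2}$ prefactor in the case of $\s_k^{\rm iso}$), the problem decouples across factors and nothing genuinely new beyond \eqref{eq:plSchattenrel} is needed.

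For part (i), I would observe that the first inequality in \eqref{eq:plSchattenrel}, namely $\vertiii{B}_{p,\ell} \lesssim \vertiii{B}_{q,\ell}$ for $2 \le p \le q \le \infty$, applied with $p=k, q=\infty$ gives $\vertiii{A_\alpha}_{k,\ell} \lesssim \vertiii{A_\alpha}_{\infty,\ell}$ for each $\alpha$, and taking the product yields $\m_k(\ell;J_k) \lesssim \m_k^{\rm iso}(\ell;J_k)$. The same inequality with $p=2k, q=\infty$ gives $\vertiii{A_\alpha}_{2k,\ell} \lesssim \vertiii{A_\alpha}_{\infty,\ell}$, so $\s_k(\ell;J_k) = \prod_\alpha \vertiii{A_\alpha}_{2k,\ell} \lesssim \prod_\alpha \vertiii{A_\alpha}_{\infty,\ell} = \ell^{1/2}\s_k^{\rm iso}(\ell;J_k)$, which is the second claim of \eqref{eq:1inaviso}.

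For part (ii), I would apply the second inequality of \eqref{eq:plSchattenrel}, namely $\vertiii{B}_{q,\ell} \lesssim [1+(N\ell)^{(q-p)/(pq)}]\vertiii{B}_{p,\ell}$, with $p = 2k$ and $q = \infty$, which gives the single-factor estimate $\vertiii{A_\alpha}_{\infty,\ell} \lesssim [1+(N\ell)^{1/(2k)}]\,\vertiii{A_\alpha}_{2k,\ell}$. Multiplying these $k$ inequalities yields $\prod_\alpha \vertiii{A_\alpha}_{\infty,\ell} \lesssim [1+(N\ell)^{1/(2k)}]^k \prod_\alpha \vertiii{A_\alpha}_{2k,\ell} \lesssim [1+(N\ell)^{1/2}]\,\s_k(\ell;J_k)$, so that
\begin{equation*}
\s_k^{\rm iso}(\ell;J_k) = \ell^{-1/2}\prod_\alpha \vertiii{A_\alpha}_{\infty,\ell} \lesssim \bigl(\ell^{-1/2} + N^{1/2}\bigr)\,\s_k(\ell;J_k).
\end{equation*}
Since $N \ge 1$, one has $\ell^{-1/2} \le N^{1/2}\ell^{-1/2} = N\cdot(N\ell)^{-1/2}$ and $N^{1/2} \le N$, hence $\ell^{-1/2} + N^{1/2} \lesssim N[1+(N\ell)^{-1/2}]$, giving the desired bound \eqref{eq:2inaviso}.

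There is no genuine obstacle here; the only thing to double-check is that the exponent $(q-p)/(pq)$ in \eqref{eq:plSchattenrel} indeed produces the factor $(N\ell)^{1/(2k)}$ when $p=2k$ and $q=\infty$ (interpreted as a limit), and that the resulting $k$-fold product gives $(N\ell)^{1/2}$ rather than something larger — both of which are immediate. The whole argument is a one-line consequence of \eqref{eq:plSchattenrel}, exactly as the authors indicate.
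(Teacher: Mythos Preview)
Your proposal is correct and follows exactly the approach indicated by the paper, which simply states that the proofs ``are again immediate from \eqref{eq:plSchattenrel} and hence omitted.'' The only minor point worth noting is that in part~(i) the case $k=1$ for the first inequality is trivial because of the indicator $\mathbf{1}(k\ge 2)$ in the definition of $\m_k$, so the restriction $p\ge 2$ in \eqref{eq:plSchattenrel} causes no issue.
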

~\\[2mm]
\noindent\textbf{Stochastic control quantities.} Consider deterministic vectors $\bm x, \bm y \in \C^N$ with $\Vert \bm x \Vert, \Vert \bm y \Vert \lesssim 1$, traceless matrices $A_1,\dots,A_k$, and for $k\ge 1$ define, analogously to \eqref{eq:defphi}--\eqref{eq:defpsi},
\begin{equation}
	\label{eq:defpsiiso}
	\begin{split}
	\Psi_k^{\rm iso}(t) &= \Psi_k^{\rm iso}(t; \bm x, \bm y; \bm z_{[1,k+1]}; \bm A_{[1,k]}):= \frac{N^{1/2}}{\s_k^{\rm iso}(\ell_t ; \bm A_{[1,k]})(N\ell_t)^{1/4}} \big| (G_{[1,k+1],t}-M_{[1,k+1],t})_{{\bm x}{\bm y}} \big|\,,  \\
	\Phi_k^{\rm iso}(t) &= \Phi_k^{\rm iso}(t; \bm x, \bm y; \bm z_{[1,k+1]}; \bm A_{[1,k]}):= \frac{N^{1/2}}{\s_k^{\rm iso}(\ell_t ; \bm A_{[1,k]})} \big| (G_{[1,k+1],t}-M_{[1,k+1],t})_{{\bm x}{\bm y}} \big|\,, 
	\end{split}
\end{equation}
where we denoted the multisets of deterministic matrices by $\bm A_{[1,k]}$ and $\bm z_{[1,k+1]} = (z_1, ... , z_{k+1})$ with $z_i = z_{i,0}$ (initial condition), respectively.
Note that by \eqref{eq:singlegllaw}, using the convention $\s_0^{\rm iso}(t):=\ell_t^{-1/2}$, we have $\Phi_0(t)\prec 1$, and that by \eqref{eq:mainresultinISO} it follows
\begin{equation}
	\label{eq:inbisoneed}
	\Phi_k^{\rm iso}(0)\prec 1,
\end{equation}
for any $k\ge 1$. Similarly to the averaged case (see the two paragraphs around \eqref{eq:defphi}--\eqref{eq:defpsi}), also in the isotropic case we introduced the two quantities in \eqref{eq:defpsiiso} as we will first prove  $\Psi_k^{\rm iso}(t)=\Phi_k^{\rm iso}(t)/(N\ell_t)^{1/4}\prec 1$ using the master inequalities in Proposition~\ref{pro:masteriniso}, and then use this as an input to prove $\Phi_k^{\rm iso}(t)\prec 1$ in Proposition~\ref{pro:masterinisoII}. 

\begin{proposition}[Isotropic master inequalities]
	\label{pro:masteriniso}
	Fix $k\in \N$, assume that $\Psi_l^{\rm iso}(t)\prec \psi_l^{\rm iso}$, with $\psi_0^{\rm iso}:=1$ and $\Phi_l(t)\prec 1$ (with $\Phi_l(t)$ from \eqref{eq:defpsi}), for any $0\le l \le k+ \mathbf{1}(k \, \mathrm{odd})$ uniformly in $t\in [0,T]$. Then
	\begin{equation}
		\label{eq:masterisonew}
		\Psi_k^{\rm iso}(t)\prec 1+\frac{\psi_k^{\rm iso}+\psi_{k+\bm1(k \,  \mathrm{odd})}^{\rm iso}}{(N\ell_T)^{1/8}} +\sum_{l=1}^{k-1}\psi_l^{\rm iso}
	\end{equation}
	uniformly in $t\in [0,T]$.
\end{proposition}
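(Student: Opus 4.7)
The strategy mirrors the proof of Proposition~\ref{pro:masterinI} (the averaged master inequalities), with the Itô-expansion now performed on the isotropic chain via equation~\eqref{eq:flowkaiso}. After dropping the trivial rescaling coming from the $\tfrac{k}{2}\dd t$ term, I will bound each of the five types of terms on the right-hand side of \eqref{eq:flowkaiso}, divide by the normalization $\s_k^{\rm iso}(t)(N\ell_t)^{1/4}/N^{1/2}$, and integrate over $s\in[0,t]$, systematically using the integration rule \eqref{etaint} together with the $\m/\s$-monotonicity in time from \eqref{eq:4in} and \eqref{eq:4iniso}. Along the way, chains of the form $\langle G_{[i,j]}\rangle$ that arise (with doubled resolvents $G_i^2$) will be linearized via the contour representation \eqref{eq:intrep} and the assumed averaged bound $\Phi_l(t)\prec 1$ (the hypothesis of the Proposition) will be used to estimate the averaged $(G-M)$ factors.

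For the three deterministic terms (lines 2--4 of \eqref{eq:flowkaiso}), the relevant estimates are analogous to \eqref{eq:prob3aa}, \eqref{eq:prob3aaa}, and \eqref{eq:proba}: Here the factor $\langle M_{[i,j]}\rangle$ is controlled by $(\ell_s/\eta_s)\mathfrak{m}_{j-i}(s)$ via \eqref{eq:bMm}; the factor $\langle G_{[i,j]}-M_{[i,j]}\rangle$ is controlled via the assumption $\Phi_l(t)\prec 1$; the factor $(M_{[1,i]\cup[j,k+1]})_{\bm x \bm y}$ is controlled by $\eta_s^{-1}\m_{k-(j-i)}^{\rm iso}(s)$ via \eqref{eq:mboundiso}; and the factor $(G_{[1,i]\cup[j,k+1]}-M_{[1,i]\cup[j,k+1]})_{\bm x \bm y}$ is controlled by the assumption $\Psi_l^{\rm iso}(t)\prec \psi_l^{\rm iso}$. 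The super-multiplicativity \eqref{eq:1iniso}, the $\mathfrak{m}^{\rm iso}$-$\s^{\rm iso}$ comparison \eqref{eq:2iniso}, and monotonicity in time \eqref{eq:4iniso} will combine, together with \eqref{etaint}, to produce $\sum_{l=1}^{k-1}\psi_l^{\rm iso}$ as well as lower-order remainders absorbable into $1$. The single-trace correction $\langle G_{i,s}-m_{i,s}\rangle (G^{(i)}_{[1,k+1],s})_{\bm x \bm y}$ is handled analogously to \eqref{eq:proba}, using $|\langle G_{i,s}-m_{i,s}\rangle|\prec(N\eta_s)^{-1}$ from \eqref{eq:singlegllaw} and the fact that $G^{(i)}$ is reduced to $G_{[1,k+1]}$ via \eqref{eq:intrep}.

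The main obstacle, as in the averaged setting, is the martingale term. Its quadratic variation
\[
\frac{1}{N}\sum_{a,b=1}^N\big|\partial_{ab}(G_{[1,k+1],s})_{\bm x \bm y}\big|^2 \lesssim \sum_{i=1}^{k+1}\frac{(\Im G_{i,s}\, X_i \,\Im G_{i,s}\, X_i^*)_{\bm y \bm y}}{N\eta_{i,s}^2},
\]
obtained using the Ward identity, produces an isotropic chain of $2(k+1)$ resolvents, which has to be split into two shorter chains on which the induction hypothesis is available. The analogue I will use is an \emph{isotropic reduction inequality} of the form
\[
(\Im G(z)\, Q\, G(w)\, R\, \Im G(z)\, R^*\, G(w)^*\, Q^*)_{\bm y \bm y}\lesssim N\,\langle \Im G(z)\, Q\,|G(w)|\,Q^*\rangle\cdot (\Im G(z)\, R^*\,|G(w)|\, R)_{\bm y \bm y}\,,
\]
proved exactly as \eqref{eq:redin} by spectral decomposition and Cauchy--Schwarz (one of the two resulting chains is now isotropic in $\bm y$, the other is averaged). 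Splitting the chain of length $2(k+1)$ at its middle as done below \eqref{eq:imGimGaa}, for $k$ even we obtain a product of one averaged chain of length $k+1$ (controlled by $\Phi_l\prec 1$) and one isotropic chain of length $k+1$ (controlled by $\psi_k^{\rm iso}$); for $k$ odd the asymmetric split yields contributions $\sqrt{\psi_{k-1}^{\rm iso}\psi_{k+1}^{\rm iso}}$, which I will absorb into $\psi_k^{\rm iso}+\psi_{k+1}^{\rm iso}$ via an elementary arithmetic-geometric inequality. Applying the BDG inequality and estimating the resulting time integral through \eqref{etaint}, exactly as in \eqref{eq:prob2}, contributes $1+(\psi_k^{\rm iso}+\psi_{k+\mathbf{1}(k\,\mathrm{odd})}^{\rm iso})/(N\ell_T)^{1/8}$, where the power $1/8$ (rather than $1/4$ as in \eqref{eq:prob2}) comes from the additional $(N\ell_t)^{1/4}$ normalization in the definition \eqref{eq:defpsiiso} of $\Psi_k^{\rm iso}$. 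Collecting all the bounds and using $\ell_t\gtrsim \ell_T$ and $N\ell_T\ge 1$ concludes \eqref{eq:masterisonew}.
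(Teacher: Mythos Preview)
Your treatment of the deterministic terms in \eqref{eq:flowkaiso} (lines~2--4 and the last line) is essentially correct and matches the paper's argument.

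The gap is in the martingale term. In the isotropic setting there is no cyclicity of the trace, so the quadratic variation does \emph{not} yield a single isotropic chain of length $2(k+1)$ at $\bm y$. Differentiating gives $\partial_{ab}(G_{[1,k+1]})_{\bm x\bm y}=-\sum_i (G_{[1,i]})_{\bm x a}(G_{[i,k+1]})_{b\bm y}$, and summing $|\cdot|^2$ over $a,b$ factorises into the product
\[
\tfrac{1}{N}\sum_{i=1}^{k+1}\big[G_{[1,i]}(G_{[1,i]})^*\big]_{\bm x\bm x}\,\big[(G_{[i,k+1]})^*G_{[i,k+1]}\big]_{\bm y\bm y}
\]
of two \emph{separate} isotropic chains (cf.~\eqref{eq:quadvarexpiso}), one at $\bm x$ with $2(i-1)$ matrices and one at $\bm y$ with $2(k-i+1)$ matrices. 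Your displayed formula with a single $(\,\cdot\,)_{\bm y\bm y}$ chain and your two-factor reduction inequality do not fit this structure.

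What the paper does instead: by symmetry one may take $i\le\lceil (k+1)/2\rceil$, so the $\bm x$-chain has at most $k$ matrices and is controlled directly by $\psi^{\rm iso}_{2(i-1)}$. The $\bm y$-chain, however, has up to $2k$ matrices and requires a genuinely different \emph{three-factor} reduction inequality \eqref{eq:rediniso},
\[
\big|(Q_1 G(z) Q_2 G(w) Q_3)_{\bm x\bm y}\big|\lesssim \sqrt{N}\,(Q_1|G(z)|Q_1^*)_{\bm x\bm x}^{1/2}\,(Q_3^*|G(w)|Q_3)_{\bm y\bm y}^{1/2}\,\langle |G(z)|Q_2|G(w)|Q_2^*\rangle^{1/2},
\]
which splits the long $\bm y$-chain into two isotropic factors of length $k$ (controlled by $\psi_k^{\rm iso}$, or $\psi_{k+1}^{\rm iso}$ for odd $k$) and one averaged factor (controlled by the hypothesis $\Phi_l\prec 1$). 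The $1/8$ power in \eqref{eq:masterisonew} then arises from the square roots in this three-factor splitting combined with the BDG square root, not merely from the $(N\ell_t)^{1/4}$ normalisation.
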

Analogously to Section \ref{sec:optA}, using Proposition~\ref{pro:masteriniso} we show that $\Psi_k^{\rm iso}(t)\prec 1$ and then use this information as input to conclude $\Phi_k^{\rm iso}(t)\prec 1$:
\begin{proposition}
	\label{pro:masterinisoII}
	Fix $k\in \N$, assume that $\Phi_l^{\rm iso}(t)\prec1$, for $0\le l \le k-2$, $\Phi_l(t)\prec 1$ for $0\le l \le k$ (with $\Phi_l(t)$ from \eqref{eq:defphi}), and $\Phi_k(t)\prec (N\ell_t)^{1/4}$ for $l\le 2k$, uniformly in $t\in [0,T]$. Then
	\begin{equation}
		\label{eq:masterisonewll}
		\Phi_k^{\rm iso}(t)\prec 1
	\end{equation}
	uniformly in $t\in [0,T]$.
\end{proposition}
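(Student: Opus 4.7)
The proof will follow the characteristic-flow scheme of Proposition \ref{pro:masterinII}, now adapted to the isotropic chain \eqref{eq:objectiso} whose evolution is given by \eqref{eq:flowkaiso}. The plan is to integrate \eqref{eq:flowkaiso} from $0$ to $t$, dispose of the boundary term using the initial global bound \eqref{eq:inbisoneed}, and estimate each of the five remaining contributions (one stochastic, four deterministic) so that, after division by $\s_k^{\rm iso}(t)$ and multiplication by $N^{1/2}$, each one is stochastically dominated by a constant. The hypothesis is designed precisely to remove the $(N\ell_T)^{-1/8}$ loss present in the weaker master inequality \eqref{eq:masterisonew}.

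For the four drift terms I will follow the corresponding steps \eqref{eq:proba}--\eqref{eq:prob3aaa} of the averaged proof: averaged factors $\langle G_{[i,j],s}-M_{[i,j],s}\rangle$ are bounded by $\m_{j-i}(s)/(N\eta_s)$ via the hypothesis $\Phi_l(t) \prec 1$ for $l \le k$ together with the linearisation \eqref{eq:intrep}, while mixed isotropic factors $(G_{[1,i]\cup[j,k+1],s}-M_{[1,i]\cup[j,k+1],s})_{\bm x\bm y}$ are bounded by $\m_{l}^{\rm iso}(s)/\sqrt{N\ell_s}$ via $\Phi_l^{\rm iso}(t) \prec 1$ for the relevant $l$. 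Combined with the $M$-bounds \eqref{eq:bMm} and \eqref{eq:mboundiso}, the super-multiplicativities and mean-versus-standard-deviation relations of Lemmas \ref{lem:m-relnoIM}--\ref{lem:m-relnoIMaviso}, the monotonicity \eqref{eq:4iniso}, and the integration rule \eqref{etaint}, each of these four contributions will be $\mathcal{O}(1)$. The term $\tfrac{k}{2}(G_{[1,k+1],t}-M_{[1,k+1],t})_{\bm x\bm y}$ amounts to an overall rescaling $e^{kt/2}$ and is handled as in the proof of Proposition \ref{pro:masterinI}.

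The main obstacle is the martingale, whose quadratic variation $N^{-1}\sum_{ab}|\partial_{ab}(G_{[1,k+1],t})_{\bm x\bm y}|^2$ expands, via the Ward identity, into a sum of products of isotropic expressions whose chains together involve up to $2(k+1)$ resolvents and isotropic slots at $\bm x$ and $\bm y$. The plan is to split every such expression, by Cauchy--Schwarz and an isotropic reduction inequality analogous to \eqref{eq:redin}, into an averaged chain of length at most $2k$ (controlled by the hypothesis $\Phi_l(t) \prec (N\ell_t)^{1/4}$ for $l \le 2k$) and shorter isotropic pieces covered by $\Phi_l^{\rm iso}(t) \prec 1$ or trivial one-resolvent bounds. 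Inserting the resulting pointwise estimate into the BDG inequality and integrating in $s$ using \eqref{eq:4iniso} and \eqref{etaint} should yield
\begin{equation*}
	\frac{N^{1/2}}{\s_k^{\rm iso}(t)} \bigg[\int_0^t \frac{(\s_k^{\rm iso}(s))^2\,\ell_s}{N\eta_s^2}\,\dd s\bigg]^{1/2} \lesssim 1\,,
\end{equation*}
which is the decisive $\mathcal{O}(1)$ bound, free of the $(N\ell_T)^{1/8}$ loss that appears in \eqref{eq:masterisonew}. The principal bookkeeping subtlety, to be checked carefully, is that after every such split of a resolvent chain in the drift or the martingale, the lengths of the two resulting factors fall inside the ranges covered by the a-priori hypotheses of the proposition; once that is in place, summing the five contributions yields \eqref{eq:masterisonewll}.
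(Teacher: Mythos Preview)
Your overall scheme is correct and matches the paper: integrate the flow \eqref{eq:flowkaiso}, absorb the linear term into an $e^{kt/2}$ rescaling, bound the boundary term by \eqref{eq:inbisoneed}, and treat the drift terms exactly as in \eqref{eq:avisoest}--\eqref{eq:estlastterm} after multiplying through by $(N\ell_t)^{1/4}$ and setting each $\psi_l^{\rm iso}=1$. Your target integral for the martingale,
\[
\frac{N^{1/2}}{\s_k^{\rm iso}(t)}\Big[\int_0^t \frac{(\s_k^{\rm iso}(s))^2\,\ell_s}{N\eta_s^2}\,\dd s\Big]^{1/2}\lesssim 1,
\]
is also the paper's endpoint.

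The one place where you diverge is the route to this martingale bound. You propose to feed the quadratic variation through the reduction inequality \eqref{eq:rediniso}, splitting into an \emph{averaged} chain of length $\le 2k$ and ``shorter isotropic pieces covered by $\Phi_l^{\rm iso}\prec 1$''. Two remarks. First, the reduction \eqref{eq:rediniso} leaves isotropic factors of length exactly $k$ in the observables, not $\le k-2$, so the hypothesis $\Phi_l^{\rm iso}\prec 1$ for $l\le k-2$ does not cover them; you would still need the weaker a-priori bound $\Phi_l^{\rm iso}\prec (N\ell_t)^{1/4}$ on those pieces (this is the intended reading of the last hypothesis in the proposition, cf.\ the closing-hierarchy paragraph after \eqref{eq:stepiso3}). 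Second, once that a-priori isotropic bound for $l\le 2k$ is in hand, the reduction step is unnecessary: the paper simply bounds each factor of the quadratic variation \eqref{eq:quadvarexpiso} directly as an isotropic chain of length $2(i-1)$, respectively $2(k-i+1)$, via Ward identity and $\Phi_l^{\rm iso}\prec (N\ell_t)^{1/4}$, obtaining
\[
\big[G_{[1,i],s}(G_{[1,i],s})^*\big]_{\bm x\bm x}\big[(G_{[i,k+1],s})^*G_{[i,k+1],s}\big]_{\bm y\bm y}\prec \frac{\ell_s}{\eta_s^2}\,\s_{2(i-1)}^{\rm iso}(s)\,\s_{2(k-i+1)}^{\rm iso}(s),
\]
and then uses \eqref{eq:1iniso}, \eqref{eq:3iniso}, \eqref{eq:4iniso} to collapse this to $(\s_k^{\rm iso}(s))^2\ell_s/\eta_s^2$. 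This is shorter and avoids the averaged detour entirely. Your approach can be made to work, but the direct estimate is what the paper does and is what the hypothesis is tailored for.
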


\noindent \textbf{Closing the hierarchy.} We start showing that \eqref{eq:masterisonew}, using \emph{iteration} as in Lemma~\ref{lem:iteration}, in fact implies $\Psi_k(t)\prec 1$. We start with $k=1,2$. By \eqref{eq:masterisonew}, we have
\begin{equation}
\label{eq:stepiso1}
\Psi_1^{\rm iso}(t)\prec 1+\frac{\psi_1^{\rm iso}+\psi_2^{\rm iso}}{(N\ell_T)^{1/8}}, \qquad\quad \Psi_2^{\rm iso}(t)\prec 1+\frac{\psi_2^{\rm iso}}{(N\ell_T)^{1/8}}+\psi_1^{\rm iso}.
\end{equation}
Then, by iteration, we obtain
\begin{equation}
\label{eq:stepiso2}
\Psi_1^{\rm iso}(t)\prec 1+\frac{\psi_2^{\rm iso}}{(N\ell_T)^{1/8}}, \qquad\quad \Psi_2^{\rm iso}(t)\prec 1+\psi_1^{\rm iso}.
\end{equation}
Finally, plugging the first inequality into the second one and using iteration once again we obtain 
$\Psi_1^{\rm iso}(t)+\Psi_2^{\rm iso}(t)\prec 1$.

Next, we proceed by induction. Fix an even $k\in \N$, and assume that $\Psi_l^{\rm iso}(t)\prec 1$ for $l\le k-2$, then by \eqref{eq:masterisonew} we have
\begin{equation}
\label{eq:stepiso3}
\Psi_{k-1}^{\rm iso}(t)\prec 1+\frac{\psi_{k-1}^{\rm iso}+\psi_k^{\rm iso}}{(N\ell_T)^{1/8}}, \qquad\quad \Psi_k^{\rm iso}(t)\prec 1+\frac{\psi_k^{\rm iso}}{(N\ell_T)^{1/8}}+\psi_{k-1}^{\rm iso}.
\end{equation}
Proceeding exactly as in \eqref{eq:stepiso1}--\eqref{eq:stepiso2}, by \eqref{eq:stepiso3}, we conclude that $\Psi_l^{\rm iso}(t)\prec 1$ for any $l\le k$. Finally, using that as a consequence of $(N\ell_t)^{1/4}\Phi_k^{\rm iso}(t)=\Psi_l^{\rm iso}(t)\prec 1$  the hypothesis of Proposition~\ref{pro:masterinisoII} are satisfied, we conclude that $\Phi_k^{\rm iso}(t)\prec 1$ and so the proof of Theorem~\ref{thm:isolaw}.
\qed 

\subsubsection{Isotropic master inequalities: Proofs of Propositions~\ref{pro:masteriniso}--\ref{pro:masterinisoII}} \label{subsubsec:isomasred}

\begin{proof}[Proof of Proposition~\ref{pro:masteriniso}]
	
	Note that the second term in the first line of \eqref{eq:flowkaiso} can be incorporated into the lhs. by differentiating $e^{-kt/2}(G_{[1,k+1,t]}-M_{[1,k+1,t]})_{{\bm x}{\bm y}}$. The exponential factor $e^{kt/2}\sim 1$ is irrelevant, we thus neglect this term from the analysis. In the following we will often use the simple bound \eqref{etaint} even if not stated explicitly. Additionally, every time that two resolvents get next to each other in a chain
	 we use the integral representation \eqref{eq:intrep} to reduce their number by one at the price of an additional $1/\eta$--factor (see e.g. \eqref{eq:shortenchain}).
	
	We now start with the computation of the quadratic variation of the martingale term in \eqref{eq:flowkaiso}:
	\begin{equation}
	\label{eq:quadvarexpiso}
		\frac{1}{N}\sum_{i=1}^{k+1} \big[G_{[1,i],t}(G_{[1,i],t})^*\big]_{{\bm x}{\bm x}}\big[(G_{[i,k+1],t})^*G_{[i,k+1],t}\big]_{{\bm y}{\bm y}} \, \dd t.
	\end{equation}
	 Similarly to the averaged case (see \eqref{eq:quadvarexp}--\eqref{eq:prob2}), to estimate the quadratic variation of 
	 the martingale term in \eqref{eq:flowkaiso} we rely on the following \emph{reduction inequality} for $k\in \N$ even and $j\le k$ (here for simplicity we drop the indices of $G$'s and $A$'s):
		\begin{equation}
	\label{eq:rediniso}
	\big|(G_{[1,k+j+1]})_{{\bm x}{\bm y}}\big|\lesssim \sqrt{N}\langle |G|A(GA)^{j-1}|G|(AG^*)^{j-1}A\rangle^{1/2}\prod_{{\bm v}\in\{{\bm x},{\bm y}\}}[(GA)^{k/2}|G|(AG^*)^{k/2}]_{{\bm v}{\bm v}}^{1/2}.
	\end{equation}
	The proof of \eqref{eq:rediniso} is postponed to the end of this section. Recall the conventions $\psi_0^{\rm iso}=1$ 
	and $\mathfrak{s}_0^{\rm iso}(t)=\ell_t^{-1/2}$, 
	then using \eqref{eq:rediniso} for $j=k-2i+2$, $i\le k/2$ for even $k$ (and $j=k-2i+1$, $i\le (k+1)/2$ for odd $k$) and then \eqref{eq:2iniso}, \eqref{eq:2in}, to bound (recall \eqref{eq:absGintrep} in Footnote \ref{ftn:absval})
	\begin{equation}
	\label{eq:usefred}
	\begin{split}
	\big[(G_{[i,k+1],s})^*G_{[i,k+1],s}\big]_{{\bm y}{\bm y}}&\prec\sqrt{N}\left(\mathfrak{m}_k^{\rm iso}(s)+\frac{\mathfrak{s}_k^{\rm iso}(s)(N\ell_s)^{1/4}}{\sqrt{N}}\right)\left(\ell_s\mathfrak{m}_{2(k-2i+2)}(s)+\frac{\mathfrak{s}_{2(k-2i+2)}(s)}{N}\right)^{1/2} \\
	&\lesssim \mathfrak{s}_k^{\rm iso}(s)\sqrt{\mathfrak{s}_{2(k-2i+2)}(s)}\ell_s^{1/2}\big(\sqrt{N\ell_t}+(N\ell_t)^{1/4}\psi_k^{\rm iso}\big),
	\end{split}
	\end{equation} 
we estimate each term of \eqref{eq:quadvarexpiso} by
	\begin{equation}
		\begin{split}
			\label{eq:estquadvariso}
			&\frac{\sqrt{N}}{\mathfrak{s}_k^{\rm iso}(t)(N\ell_t)^{1/4}}\left(\frac{1}{N}\sum_{i=1}^{k+1}\int_0^t \big[G_{[1,i],s}(G_{[1,i],s})^*\big]_{{\bm x}{\bm x}}\big[(G_{[i,k+1],s})^*G_{[i,k+1],s}\big]_{{\bm y}{\bm y}} \, \dd s\right)^{1/2} \\
			&\prec\frac{\sqrt{N}}{\mathfrak{s}_k^{\rm iso}(t)(N\ell_t)^{1/4}}\Bigg(\sum_{i=1}^{\lceil k/2\rceil}\int_0^t\frac{\mathfrak{s}_k^{\rm iso}(s)}{N\eta_s^2} \left(\mathfrak{m}_{2(i-1)}^{\rm iso}(s)+\frac{\psi_{2(i-1)}^{\rm iso}\mathfrak{s}_{2(i-1)}^{\rm iso}(s)(N\ell_s)^{1/4}}{\sqrt{N}}\right)  \\
			&\qquad\qquad\qquad\qquad\qquad\qquad\qquad\qquad\qquad\times\big(\sqrt{N\ell_s}+\psi_{k+\bm1(k\, \mathrm{odd})}^{\rm iso}(N\ell_s)^{1/4}\big)\sqrt{\ell_s\mathfrak{s}_{2(k-i+1)}(s)} \, \dd s\Bigg)^{1/2} \\
			&\lesssim\frac{\sqrt{N}}{\mathfrak{s}_k^{\rm iso}(t)(N\ell_t)^{1/4}}\left(\sum_{i=1}^{\lceil k/2\rceil}\int_0^t\frac{\mathfrak{s}_k^{\rm iso}(s)\mathfrak{s}_{k-i+1}^{\rm iso}(s)\mathfrak{s}_{i-1}^{\rm iso}(s)}{N\eta_s^2} \ell_s^{3/2} \left(1+\frac{\psi_{2(i-1)}^{\rm iso}}{(N\ell_s)^{1/4}}\right) \big(\sqrt{N\ell_s}+\psi_{k+\bm1(k\, \mathrm{odd})}^{\rm iso}(N\ell_s)^{1/4}\big) \, \dd s\right)^{1/2} \\
			&\lesssim 1+\sum_{i=1}^{\lceil k/2\rceil}\frac{\sqrt{\psi_{2(i-1)}^{\rm iso}}}{(N\ell_s)^{1/8}}+\frac{\sqrt{\psi_{k+\bm1(k\, \mathrm{odd})}^{\rm iso}}}{(N\ell_t)^{1/8}}+\sum_{i=1}^{\lceil k/2\rceil}\frac{\sqrt{\psi_{2(i-1)}^{\rm iso}\psi_{k+\bm1(k\, \mathrm{odd})}^{\rm iso}}}{(N\ell_t)^{1/4}}.
		\end{split}
	\end{equation}
Here in the second inequality \eqref{eq:2iniso}--\eqref{eq:3iniso}, \eqref{eq:1inaviso}, and in the last inequality we used \eqref{eq:1iniso}, \eqref{eq:4iniso}.

	In the following computations, when two $G$'s, with spectral parameters having imaginary parts of the same sign appear next to each other (i.e.~without a $A$ in between), we use the integral representation \eqref{eq:intrep} to reduce the number of $G$'s by one at the price of an additional $1/\eta$. If the imaginary parts of the spectral parameters have different signs, we use resolvent identity (see e.g. around \eqref{eq:shortenchain}--\eqref{eq:intrep} in the averaged case). For the terms in the second line of \eqref{eq:flowkaiso} we estimate
	\begin{equation}
	\label{eq:avisoest}
		\frac{\sqrt{N}}{\mathfrak{s}_k^{\rm iso}(t)(N\ell_t)^{1/4}}\int_0^t \langle {G}_{[i,j],s}-M_{[i,j],s}\rangle ({M}_{[1,i]\cup [j,k+1],s})_{{\bm x}{\bm y}}\,\dd s\prec\frac{\sqrt{N}}{\mathfrak{s}_k^{\rm iso}(t)(N\ell_t)^{1/4}}\int_0^t \frac{\mathfrak{s}_{j-i}(s)}{N\eta_s} \frac{\mathfrak{m}_{k-j+i}^{\rm iso}(s)}{\eta_s}\,\dd s\lesssim \frac{1}{(N\ell_t)^{3/4}},
	\end{equation}
	where we used \eqref{eq:mboundiso}--\eqref{eq:2iniso}, the second inequality in \eqref{eq:1inaviso}, and \eqref{eq:4iniso}.
	
	For the terms in the third line of \eqref{eq:flowkaiso} we estimate
	\begin{equation}
		\label{eq:intermstepiso}
		\begin{split}
		&\frac{\sqrt{N}}{\mathfrak{s}_k^{\rm iso}(t)(N\ell_t)^{1/4}}\int_0^t \langle {M}_{[i,j],s}\rangle ({G}_{[1,i]\cup [j,k+1],s}-M_{[1,i]\cup [j,k+1],s})_{{\bm x}{\bm y}}\,\dd s \\
		&\qquad\qquad\quad\prec \frac{\sqrt{N}}{\mathfrak{s}_k^{\rm iso}(t)(N\ell_t)^{1/4}}\int_0^t \frac{\mathfrak{m}_{j-i}(s)}{\eta_s} \frac{\mathfrak{s}_{k-j+i}^{\rm iso}(s)\psi_{k-j+i}^{\rm iso}(N\ell_s)^{1/4}}{\sqrt{N}\eta_s} \,\dd s\lesssim \psi_{k-j+i}^{\rm iso},
		\end{split}
	\end{equation}
	where in the last inequality we used the first inequality of \eqref{eq:1inaviso}, \eqref{eq:1iniso}--\eqref{eq:2iniso}.
	
	For the terms in the fourth line of \eqref{eq:flowkaiso} we estimate
	\begin{equation}
		\begin{split}
			&\frac{\sqrt{N}}{\mathfrak{s}_k^{\rm iso}(t)(N\ell_t)^{1/4}}\int_0^t \langle {G}_{[i,j],s}-M_{[i,j],s}\rangle ({G}_{[1,i]\cup [j,k+1],s}-M_{[1,i]\cup [j,k+1],s})_{{\bm x}{\bm y}}\,\dd s \\
			&\qquad\qquad\quad\prec \frac{\sqrt{N}}{\mathfrak{s}_k^{\rm iso}(t)(N\ell_t)^{1/4}}\int_0^t \frac{\mathfrak{s}_{j-i}(s)}{N\eta_s} \frac{\mathfrak{s}_{k-j+i}^{\rm iso}(s) \psi_{k-j+i}^{\rm iso}(N\ell_s)^{1/4}}{\sqrt{N}\eta_s} \,\dd s\lesssim \frac{1}{N\ell_t} \psi_{k-j+i}^{\rm iso},
		\end{split}
	\end{equation}
	where in the last step we used the second inequality of \eqref{eq:1inaviso} and the second inequality of \eqref{eq:1iniso}.
	
	Finally, for the term in the last line of \eqref{eq:flowkaiso} we estimate
	\begin{equation}
		\begin{split}
			\label{eq:estlastterm}
			\frac{\sqrt{N}}{\mathfrak{s}_k^{\rm iso}(t)(N\ell_t)^{1/4}}\int_0^t \langle G_{i,s}-m_{i,s}\rangle  ({G}^{(i)}_{[1,k+1],s})_{{\bm x}{\bm y}}\,\dd s &\prec \frac{\sqrt{N}}{\mathfrak{s}_k^{\rm iso}(t)(N\ell_t)^{1/4}} \int_0^t \frac{1}{N\eta_s}\left(\frac{\mathfrak{m}_k^{\rm iso}(s)}{\eta_s}+\frac{\mathfrak{s}_k^{\rm iso}(s)\psi_k^{\rm iso}(N\ell_s)^{1/4}}{\sqrt{N}\eta_s} \right)\,\dd s \\
			&\lesssim \frac{1}{(N\ell_t)^{3/4}}+\frac{\psi_k^{\rm iso}}{N\ell_t},
		\end{split}
	\end{equation}
	where we used \eqref{eq:mboundiso} and \eqref{eq:2iniso} together with \eqref{eq:2iniso}. Combining \eqref{eq:estquadvariso}--\eqref{eq:estlastterm}, recalling that $\langle M_{[i,i+1],s}\rangle=0$ in \eqref{eq:intermstepiso}, using \eqref{eq:inbisoneed} to bound $\Psi_k^{\rm iso}(0)\prec 1$ and $N\ell_T\ge1$, $\psi_l^{\rm iso}\ge 1$, $\ell_t\gtrsim \ell_T$ , we conclude \eqref{eq:masterisonew}.
\end{proof}

\begin{proof}[Proof of Proposition~\ref{pro:masterinisoII}]

The proof of this proposition is analogous to the proof of Proposition~\ref{pro:masterinisoII}. All the terms except for the 
martingale one are estimated as in the proof of Proposition~\ref{pro:masterinisoII} after multiplying each line by $(N\ell_t)^{1/4}$ and setting $\psi_l^{\rm iso}=1$. We conclude the proof pointing out that the only difference is in the estimate of the quadratic variation of the martingale term, i.e. \eqref{eq:estquadvariso} has to be replaced by
\begin{equation}
\begin{split}
&\frac{\sqrt{N}}{\mathfrak{s}_k^{\rm iso}(t)}\left(\frac{1}{N}\sum_{i=1}^{k+1}\int_0^t \big[G_{[1,i],s}(G_{[1,i],s})^*\big]_{{\bm x}{\bm x}}\big[(G_{[i,k+1],s})^*G_{[i,k+1],s}\big]_{{\bm y}{\bm y}} \, \dd s\right)^{1/2} \\
&\qquad\qquad\qquad\qquad\qquad\qquad\qquad\quad\prec \frac{\sqrt{N}}{\mathfrak{s}_k^{\rm iso}(t)}\left(\int_0^t \frac{\ell_s\mathfrak{s}_{2(i-1)}^{\rm iso}(s)\mathfrak{s}_{2k-2(i-1)}^{\rm iso}(s) }{N\eta_s^2} \, \dd s\right)^{1/2} \lesssim 1,
\end{split}
\end{equation}
where in the first inequality we used the definition \eqref{eq:defpsiiso} together with \eqref{eq:mboundiso}, \eqref{eq:2iniso}, and in the second inequality we used \eqref{eq:1iniso}, \eqref{eq:3iniso}--\eqref{eq:4iniso}.
\end{proof}

\begin{proof}[Proof of \eqref{eq:rediniso}]
By spectral decomposition we estimate
\begin{equation}
\begin{split}
\big|(Q_1G(z)Q_2G(w)Q_3)_{{\bm x}{\bm y}}\big|&=\left|\sum_{ij}\frac{\langle {\bm x}, Q_1{\bm u}_i\rangle\langle {\bm u}_i, Q_2{\bm u}_j\rangle\langle {\bm u}_j, Q_3{\bm u}_j\rangle}{(\lambda_i-z)(\lambda_j-w)}\right| \\
&\lesssim \sqrt{N} (Q_1|G(z)|Q_1^*)_{{\bm x}{\bm x}}(Q_3^*|G(w)|Q_3)_{{\bm y}{\bm y}}\langle |G(z)|Q_2|G(w)|Q_2^*\rangle^{1/2},
\end{split}
\end{equation}
where in the last inequality we used Schwarz inequality to separate $Q_2$ from $Q_1,Q_3$. Choosing $z=z_{k/2+1}$, $w=z_{k/2+j+1}$, and
\[
Q_1=G_1A_1\dots A_{k/2}, \qquad Q_2=A_{k/2+1}G_{k/2+2}\dots A_{k/2+j}, \qquad Q_3=A_{k/2+j+1}G_{k/2+j+2}\dots G_{k+1},
\]
this concludes the proof.
\end{proof}

\end{document}